\documentclass[12pt]{article}
\usepackage{amsmath}
\usepackage{graphicx}
\usepackage{enumerate, setspace}
\usepackage{subcaption}
\usepackage[round]{natbib}
\usepackage{url} 
\usepackage{float}
\usepackage{mathtools}
\usepackage{chngcntr}
\usepackage{fontenc}  
\usepackage{lmodern}

\usepackage{ragged2e} 
\usepackage{amsthm}
\allowdisplaybreaks

\usepackage{algorithm,algpseudocode,float}
\usepackage{lipsum}
\usepackage{amssymb}
\usepackage{xcolor}
\usepackage{color}

\usepackage{chngcntr}
\usepackage{booktabs}
\usepackage{array}
\usepackage{makecell}
\usepackage{adjustbox}

\usepackage[normalem]{ulem}  

\newcommand{\blind}{1}



\usepackage{hyperref}					
\hypersetup{colorlinks,
	linkcolor=blue,%
	citecolor=blue}

\newtheorem{theorem}{Theorem}

\newtheorem{lemma}{Lemma}
\newtheorem{proposition}{Proposition}
\newtheorem{assumption}{Assumption}

\newtheorem{claim}{Claim}
\theoremstyle{definition} 
\newtheorem{remark}{Remark}

\newcommand{\II}{\mathbb I}

\begin{document}

	\def\spacingset#1{\renewcommand{\baselinestretch}%
	{#1}\small\normalsize} \spacingset{1}


\if1\blind
{
	\title{\bf Structure-Adaptive Conformal Inference for Large-Scale Out-of-Distribution Testing} 
	\author{Rongyi Sun\thanks{
			Center for Data Science and School of Mathematical Sciences, Zhejiang University.}\hspace{.2cm}, Wenguang Sun\thanks{
			Center for Data Science and School of Management, Zhejiang University.}
		\hspace{0.01cm} and Zinan Zhao\thanks{
			Center for Data Science and School of Mathematical Sciences, Zhejiang University.}}
	\date{}
	\maketitle
} \fi

\if0\blind
{
	\bigskip
	\bigskip
	\bigskip
	\begin{center}
		{\LARGE\bf Structure-Adaptive Conformal Inference for Large-Scale Out-of-Distribution Testing}
	\end{center}
	\medskip
} \fi

\bigskip

\begin{abstract}
	This paper addresses structured out-of-distribution (OOD) testing in high-stakes machine learning applications. Traditional conformal methods rely on joint exchangeability, making it difficult to incorporate auxiliary information such as spatiotemporal or grouping structures. To overcome this limitation, we propose the structure-adaptive conformal q-value (SCQ), a significance index that integrates individual test evidence with structural patterns. We also develop pseudo-score-guided transductive automated model selection (P-TAMS), which adapts conformalized model selection to structured OOD testing across a toolbox of candidate models. Together, SCQ and P-TAMS form a unified framework under pairwise exchangeability, providing finite-sample error-rate control, improved power, and enhanced interpretability. Experiments on simulated and real data demonstrate that the proposed approach controls the false discovery rate and performs well across diverse settings.
\end{abstract}

\noindent%
{\it Keywords:} Automated Model Selection; Conformal q-values; False Discovery Rate; Pairwise Exchangeability; Side Information


\spacingset{1} 


\maketitle

\section{Introduction}\label{sec:intro}

Out-of-distribution (OOD) testing, also known as outlier or novelty detection, is a crucial task in a range of important domains such as medical diagnostics, image detection, security monitoring, and autonomous driving 
(\citealp{1995OOD, PIMENTEL2014215, lee2018simple, 2024OOD}). 
The goal is to identify instances that deviate from a reference distribution of labeled inliers. 
State-of-the-art OOD testing methods often rely on complex machine learning (ML) algorithms that do not provide rigorous uncertainty quantification for their outputs. 
These limitations present significant challenges in high-stakes settings, where precise risk assessment and strict error control are essential. 
Conformal inference 
(\citealp{vovk1999machine, vovk05, balasubramanian2014conformal, lei2014distribution}) 
has emerged as a powerful framework that grounds ML algorithms in rigorous theoretical foundations; in particular, methods based on conformal p-values 
(\citealp{bates2023testing, marandon2024adaptive, liang2024integrative}) 
provide finite-sample error-rate guarantees, enabling reliable OOD testing in risk-sensitive environments.

When thousands of test units are examined simultaneously, controlling the false discovery rate 
(FDR; \citealp{benjamini1995controlling}) provides a practical strategy to balance the benefits of detecting true outliers against the cost of reporting false discoveries. 
In many data-intensive applications, test units are accompanied by auxiliary information such as spatial locations, timestamps, group memberships, data-driven clusters, or domain-specific annotations. 
Such information often reveals meaningful structure in the distribution of outliers. 
For instance, in cybersecurity monitoring, attacks may concentrate during particular working hours or traffic regimes; in document analysis, non-text elements may cluster according to page layout or document source; and in spatial or temporal anomaly detection, abnormal events often appear in localized clusters rather than in isolation. 
Leveraging these structures can improve both power and interpretability by prioritizing signal-enriched regions, borrowing information across related units, and aggregating weak but coherent signals.

However, existing structured FDR methods 
(\citealp{li2019multiple, ignatiadis2021covariate, cai2022laws}) 
are not directly applicable to OOD testing, as they typically rely on p-values derived from distributional models that may be unavailable, unverifiable, or unreliable in modern ML applications with complex data structures. 
Our goal is to harness the model-free validity of conformal inference to develop principled and effective procedures for structured OOD testing. 
This entails addressing two key challenges.

The first challenge is a conflict between structural adaptivity and conformal validity. 
Incorporating auxiliary structure places test units on unequal footing, whereas standard conformal methods rely on joint exchangeability among calibration and test samples. 
Enforcing joint exchangeability prevents the procedure from fully exploiting structural heterogeneity, while naively using such heterogeneity can invalidate FDR control. 
The second challenge concerns model selection: although conformal inference is model-agnostic in principle, its practical implementation typically relies on a fixed pre-trained score function. 
This is restrictive in structured OOD testing, where the performance of different classifiers can vary substantially across data regimes, sparsity levels, and structural patterns. 
To address these challenges, we propose data-adaptive, provably valid procedures that integrate structural information into both conformal calibration and model selection.

\subsection{A preview of our proposal and contributions}

To bypass the strict requirement of joint exchangeability, Section~\ref{sec:SCQ} introduces the structure-adaptive conformal q-value (SCQ), which operates under the weaker assumption of pairwise exchangeability. The SCQ combines unit-specific evidence with relevant structural information, resulting in a structure-aware risk measure with guaranteed statistical validity.

To address the challenge of model selection, Section~\ref{sec:P-TAMS} extends conformalized automated model selection (CAMS) methods (\citealp{liang2023conformal, magnani2023collective, marandon2024adaptive,liang2024integrative}) to structured settings. The proposed algorithm, termed pseudo-score-guided transductive automated model selection (P-TAMS), employs innovative min-max pairing and coin-flipping strategies to maintain pairwise exchangeability throughout the model selection process. P-TAMS cohesively integrates  with the SCQ procedure, forming a unified framework for the efficient detection of structured anomalies.

The contributions of our work include: 

\begin{itemize}
	\setlength{\itemsep}{0pt}
	\setlength{\parsep}{0pt}
	\setlength{\topsep}{0pt}
	
    \item \emph{A structure-aware significance index:} 
    SCQ offers a principled, user-friendly, and interpretable significance index in structured settings, enabling a precise assessment of the relative importance of test units in light of side information. 
    Unlike standard conformal p-values, SCQ incorporates structural information from the test sample through weighted conformity-score pairs while preserving the pairwise exchangeability needed for valid FDR control.

    \item \emph{A validity-preserving principle for structural adaptation:} 
    In structured OOD testing, auxiliary information such as spatial location, timestamp, or group membership can improve power, but using it to reweight or rerank test units may break the joint exchangeability required by standard conformal methods. 
    Our formulation shows that joint score exchangeability is not necessary for FDR control; it suffices to preserve pairwise exchangeability for null score pairs. 
    This provides a practical principle for learning data-dependent structural weights from the test sample through swap-invariant constructions. In particular, the SCQ theory accommodates transductive, data-dependent weights that are not covered by classical weighted-BH or PRDS-based arguments.

    \item \emph{A novel CAMS strategy:} 
    P-TAMS extends existing CAMS methods to structured settings, thereby enhancing SCQ by offering provable validity and improved efficiency for OOD testing tasks. 
    Its key design is to use swap-invariant pseudo-scores for model selection, allowing the selected classifier to depend on the test data without violating the pairwise exchangeability.
        
    \item \emph{A unified framework under weaker assumptions:} 
    Both SCQ and P-TAMS rely on pairwise score exchangeability rather than joint exchangeability of the final scores. 
    This weaker requirement makes it possible to incorporate transductive, data-dependent weighting and automated model selection into a coherent framework tailored for structured OOD testing. 

    \item \emph{An in-depth theoretical analysis of the SCQ procedure:} 
    We develop an asymptotic framework for analyzing the theoretical properties of SCQ. 
    Our theory addresses two critical gaps in the literature. 
    First, we characterize the conditions under which SCQ attains the nominal FDR level. 
    Second, we establish when and why integrating structural knowledge yields increased power. 
\end{itemize}

\subsection{Related work}

Our work integrates several latest advances from related areas, such as conformal inference for OOD testing, structured multiple testing, model-free FDR control, and conformalized automatic model selection. We discuss these related works to clarify our contribution. 

\textbf{Conformal Inference for OOD Testing.} Recent studies (e.g., \citealp{mary2022semi, bates2023testing, marandon2024adaptive, liang2024integrative}) have successfully employed conformal p-values to perform OOD testing with finite-sample guarantees. However, the theoretical validity of existing methods critically depends on the joint exchangeability assumption, which fundamentally limits their applicability to structured testing scenarios. Our research directly addresses this limitation by relaxing this stringent assumption. 

\textbf{Structured Multiple Testing.} 
The design of SCQ is inspired by the structured multiple testing literature, where incorporating side information can improve both power and interpretability \citep{genovese2006false, benjamini2007false, sun2009large, sun2015false}. 
Our work is closely related to weighted p-value methods, including IHW \citep{ignatiadis2021covariate}, SABHA \citep{li2019multiple}, and LAWS \citep{cai2022laws}. 
The finite-sample FDR theory (e.g., \citealp{genovese2006false}) requires the weights to be fixed in advance or chosen independently of the null p-values; under such exogeneity conditions, normalizing the weights so that \(\sum_{j=1}^m w_j=m\) preserves FDR control. 
By contrast, 
SCQ learns weights from the test sample itself, using the paired conformal evidence \(\{(p_j,\widetilde p_j):j\in \mathcal{D}^{\mathrm{test}}\}\) together with side information \(\mathbf{S}\). 
These weights are therefore transductive and data-dependent, and are not covered by the classical theory (e.g. the sufficient conditions in \citealp{BlaRoq08}) even after normalization. 

\textbf{Automated Model Selection.} The effectiveness of OOD testing methods critically depends on the choice of underlying classifier, which can be a one-class classifier (OCC) (\citealp{moya1993one, khan2014one, sabokrou2018adversarially}), a binary classifier (BIC) (\citealp{kumari2017machine, haroush2021statistical}), or a classifier constructed via positive-unlabeled learning techniques (PUC; \citealp{du2014analysis}, \citealp{bekker2020learning}). 
The problem of automatically selecting among these different types of classifiers was recently explored by \cite{liang2023conformal}, \cite{magnani2023collective}, \cite{liang2024integrative}, \cite{marandon2024adaptive}, and \cite{bai2024optimized}. However, existing conformalized model selection methods cannot handle structured setups; we provide a detailed comparison with these methods in Appendix \ref{sec:P-TAMS_vs_CAMS}. Moreover, the BONuS \citep{yang2021bonus}, MetaOD \citep{Zhaetal21-MetaOD} and AutoMS \citep{zhang2022automs} methods explored model selection ideas but not within the conformal context, while \cite{yang2021finite}, \cite{stutz2021learning}, \cite{einbinder2022training}, and \cite{liang2024conformal} studied classifier selection for conformal prediction, but the ideas are inapplicable to OOD testing.  

\textbf{Model-Free FDR Control.} The construction of SCQ is conceptually inspired by FDR methodologies based on mirror processes, including knockoff filters \citep{barber2015controlling}, AdaPT \citep{lei2018adapt}, ZAP \citep{leung2022zap}, SDA \citep{du2023false}, adaptive knockoffs \citep{ren2023knockoffs}, and PLIS \citep{ZhaSun25-PLIS}, among others. SCQ shares with these methods the broad mirror-calibration principle, but targets a different inferential problem and relies on a different score-construction mechanism. Unlike regression knockoffs, which construct synthetic covariates and compare original and knockoff feature statistics, SCQ addresses structured OOD testing with unlabeled test units and mirror null samples. Its key challenge is to construct conformity-score pairs that exploit structural information from the test set while preserving the pairwise exchangeability needed for FDR control. 
Hence, the novelty of SCQ lies not in a BC-type thresholding rule, but in the construction, calibration, and model selection of structure-adaptive conformal scores for OOD testing.

\textbf{Post-Selection Inference.} The literature on post-selection inference with false coverage-statement control \citep{benjamini2005false, bao2024selective, gazin2025selecting, JinRen25, gui2025acs} focuses on drawing inferences after the \emph{selection of test units}. In contrast, our P-TAMS algorithm operates along a different dimension -- namely, \emph{selecting among different classifiers}.


\section{FDR Control for Structured OOD Testing}\label{sec:SCQ}

This section develops a generic methodology for structured OOD testing that integrates OCCs, BICs, and PUCs into a unified framework; this lays the groundwork for Section~\ref{sec:P-TAMS}, where we present the P-TAMS algorithm for model selection across these classifier families. In Section~\ref{subsec:setup}, we begin by formulating the problem and introducing a unified approach to constructing conformal p-values that accommodates various classifiers. Section~\ref{subsec:SCQ} provides detailed descriptions of the SCQ procedure and establishes its theoretical properties.

\subsection{Problem formulation}\label{subsec:setup}

Suppose we observe \(n\) labeled data points \(\{(X_i, Y_i): i\in[n]\}\), where \(X_i \in \mathbb{R}^p\) is a p-dimensional feature and \(Y_i \in \{0,1\}\) denotes the label, with \(Y_i = 0\) and \(Y_i = 1\) corresponding to an inlier and an outlier, respectively. Define the index sets \(\mathcal{D}_0 = \{i \in [n]: Y_i = 0\}\) and \(\mathcal{D}_1 = \{i \in [n]: Y_i = 1\}\). In addition, we have \(m\) unlabeled features \((X_{n+1}, \dots, X_{n+m})\) with corresponding unknown labels \((Y_{n+1}, \dots, Y_{n+m})\). Let \(\mathcal{D}^{\text{test}} = \{n+1, \dots, n+m\}\). 
Denote $\mathcal{H}_0$ and $\mathcal{H}_1$ the index sets of inliers and outliers in \(\mathcal{D}^{\text{test}}\), respectively. Structural information associated with each test sample \(X_{j}\) -- such as group membership or spatial location -- is captured by an external covariate \(S_{j}\). Denote  \(\mathbf{S}=\{S_j : j\in\mathcal{D}^{\text{test}}\}\).

The decision rule for determining which of the \( m \) instances are outliers can be represented by  
$
(\hat{Y}_{n+1}, \ldots, \hat{Y}_{n+m}) \in \{0, 1\}^m,
$
where \(\hat{Y}_j = 1\) indicates that we classify the \(j\)-th instance as an outlier, and \(\hat{Y}_j = 0\) otherwise.  Define the false discovery rate  

{\small$$
\text{FDR} \coloneqq \mathbb{E}\left[ \frac{\sum_{j \in \mathcal{D}^{\rm test}} (1-Y_j) \hat{Y}_j}{\left(\sum_{j \in \mathcal{D}^{\rm test}} \hat{Y}_j \right) \vee 1} \right],
$$}
where the expectation is taken over the labeled and test data.

We briefly outline the basic workflow of structured OOD testing. First, a collection of null samples is partitioned into three distinct subsets, each serving a specific purpose: (i) a pre-training set (\( \mathcal{D}_0^{\text{tr}} \)) for model training; (ii) a calibration set (\( \mathcal{D}_0^{\text{cal}} \)) for quantifying the uncertainty of model outputs; and (iii) a mirror set (\( \mathcal{D}_0^{\text{mir}} \)) for recalibrating with structural knowledge (see Section \ref{subsec:SCQ}). Our method requires that \(|\mathcal{D}_0^{\text{mir}}| = |\mathcal{D}^{\text{test}}| = m.\) Denote the test data as $\mathbf X\coloneqq(X_j: j\in\mathcal{D}^{\rm test})$ and rewrite the mirror data \((X_i: i \in \mathcal{D}_0^{\text{mir}})\) as \(\tilde{\mathbf{X}}\coloneqq(\tilde X_j: j \in \mathcal{D}^{\text{test}})\). 
This modified notation allows us to pair the observations from \(\mathcal{D}_0^{\text{mir}}\) and \(\mathcal{D}^{\text{test}}\) as $
\{(X_j, \tilde{X}_j) : j\in\mathcal{D}^{\text{test}}\},$
thereby simplifying the description of the method. Define \(\mathbf{X}_1=\{X_i : i\in \mathcal{D}_1\}\), \(\mathbf{X}_0^{\text{tr}}=\{X_i : i\in \mathcal{D}_0^{\text{tr}}\}\), and \(\mathbf{X}_0^{\text{cal}}=\{X_i : i\in \mathcal{D}_0^{\text{cal}}\}\).

Next, we train a score function $s(\cdot)$, applicable across various classifier families, that satisfies the following permutation-invariance condition:

{\small\begin{equation}\label{eq:s_function_permu}
	s\left(\cdot; (\mathbf{X},\tilde{\mathbf{X}}, \mathbf{X}_0^{\text{cal}})_{\Pi}, \mathbf{X}_0^{\text{tr}}, \mathbf{X}_1, \mathbf{S}\right) = s\left(\cdot; (\mathbf{X},\tilde{\mathbf{X}}, \mathbf{X}_0^{\text{cal}}), \mathbf{X}_0^{\text{tr}}, \mathbf{X}_1, \mathbf{S}\right),
\end{equation}}
where \(\Pi\) denotes an arbitrary permutation of the data points in $\mathbf{X}$, $\tilde{\mathbf{X}}$, and $\mathbf{X}_0^{\text{cal}}$.
Define the observed training data as \(\mathbf{O}^{\text{tr}}\), which may consist solely of null samples or may also include labeled outliers. 
The permutation-invariance principle in \eqref{eq:s_function_permu} allows the score function to be constructed using the mirror and test samples, together with the auxiliary training information \(\mathbf{O}^{\mathrm{tr}}\), while preserving the required symmetry. This formulation covers several important special cases: 
(a) OCC, where \(s(\cdot) \coloneqq s(\cdot;\mathbf{O}^{\mathrm{tr}})\) with \(\mathbf{O}^{\mathrm{tr}}=\{X_i: i\in\mathcal{D}_0^{\mathrm{tr}}\}\); 
(b) BIC, where \(s(\cdot) \coloneqq s(\cdot;\mathbf{O}^{\mathrm{tr}})\) with \(\mathbf{O}^{\mathrm{tr}}=\{(X_i,Y_i): i\in\mathcal{D}_0^{\mathrm{tr}}\cup\mathcal{D}_1\}\); 
and (c) PUC, where $s(\cdot) \coloneqq s\bigl(\cdot;\mathbf{O}^{\mathrm{tr}},(\mathbf{X},\tilde{\mathbf{X}},\mathbf{X}_0^{\mathrm{cal}})_{\Pi}\bigr)
=
s\bigl(\cdot;\mathbf{O}^{\mathrm{tr}},(\mathbf{X},\tilde{\mathbf{X}},\mathbf{X}_0^{\mathrm{cal}})\bigr),$
with \(\mathbf{O}^{\mathrm{tr}}=\{X_i: i\in\mathcal{D}_0^{\mathrm{tr}}\}\).

In OCC and BIC cases (e.g., \citealp{bates2023testing, liang2024integrative}), the score function \( s(\cdot) \) does not depend on the test or mirror data; hence, property \eqref{eq:s_function_permu} is automatically satisfied. In contrast, the PUC approach (e.g., \citealp{marandon2024adaptive, ZhaSun25-CLAW}) is carefully designed to fulfill property \eqref{eq:s_function_permu}. As elaborated in Appendix~\ref{sec:P-TAMS_vs_CAMS}, each classifier family has distinct advantages and limitations; a key advantage of the generic formulation outlined in \eqref{eq:s_function_permu} is its ability to provide a unified framework for SCQ construction (Section~\ref{subsec:SCQ}), model selection (Section~\ref{sec:P-TAMS}), and theoretical analyses (Section~\ref{sec:theories}) across a broad class of classifier families.

\subsection{The SCQ procedure}\label{subsec:SCQ}

Consider the partition introduced in Section \ref{subsec:setup}, where \(\mathcal{D}_0 = \mathcal{D}_0^{\text{tr}} \cup \mathcal{D}_0^{\text{cal}} \cup \mathcal{D}_0^{\text{mir}}\) with \(|\mathcal{D}_0^{\text{mir}}| = |\mathcal{D}^{\text{test}}| = m\).  The SCQ procedure operates in three steps.

\noindent\textbf{Step 1: Compute conformal p-values (first calibration).} Let \( s(\cdot) \in \mathcal{G} \) denote a score function that takes the generic form given in \eqref{eq:s_function_permu}, with smaller values indicating stronger evidence against the null hypothesis. The conformal p-value function is defined as:

\begin{equation}\label{eq:conformal_p}
	p(x) = \frac{1 + \left| \left\{ i \in \mathcal{D}_0^{\text{cal}}: s(X_i) \leq s(x) \right\} \right|}{1 + |\mathcal{D}_0^{\text{cal}}|}. 
\end{equation}
 Through \eqref{eq:s_function_permu} and \eqref{eq:conformal_p}, we first build 
a predictive model to compute preliminary conformity scores, and then construct $m$ pairs of conformal p-values: $\left\{\left(p(X_j), p(\tilde{X}_j)\right):j \in \mathcal{D}^{\text{test}}\right\}$.

The conformal p-values are inadequate for structured OOD testing problems, as the heterogeneity among test units has been ignored. The next step details a weighting strategy to incorporate structural knowledge.

\noindent\textbf{Step 2: Compute weighted p-values.} Let \(w(S_j)\) represent the weight associated with covariate \(S_j\), where a larger value of \(w(S_j)\) provides stronger indication that the \(j\)-th instance is an outlier. The key idea is to prioritize instances that are more likely to be outliers through the use of weighted p-values:

\begin{equation}\label{eq:weighted_p}
\left\{\left(V_j \coloneqq {p(X_j)}/{w(S_j)}, \tilde{V}_j \coloneqq {p(\tilde{X}_j)}/{w(S_j)}\right):j\in\mathcal D^{\text{test}}\right\},
\end{equation} 
where \((V_j, \tilde V_j)\) will be utilized as new conformity scores in the next step.

The weight function \(w(S_j)\) can be determined based on prior knowledge or structural information. For example, in the spatial multiple testing setting where \(S_j\) denotes the spatial location, let \(\pi(S_j)\) represent the local sparsity level, with higher values of \(\pi(S_j)\) indicating a greater density of outliers near \(S_j\). In this context, higher weights \(w(S_j)\) should be assigned to locations with higher \(\pi(S_j)\) \citep{li2019multiple, cai2022laws}. We require that if data-driven weights are used, the weight function \(w(S_j;(\mathbf{X},\tilde{\mathbf{X}}), \mathbf{X}_0^{\text{tr}}, \mathbf{X}_0^{\text{cal}}, \mathbf{X}_1, \mathbf{S})\) must belong to a generic class of swap-invariance functions $\mathcal G$, where for each $g\in\mathcal G$, we have

\begin{small}
		\begin{equation}\label{eq:g_func_swap_inva}
		g\left(\cdot; (\mathbf{X},\tilde{\mathbf{X}})_{\text{swap}(\mathcal{J})}, \mathbf{X}_0^{\text{tr}}, \mathbf{X}_0^{\text{cal}}, \mathbf{X}_1, \mathbf{S}\right) = g\left(\cdot; (\mathbf{X},\tilde{\mathbf{X}}), \mathbf{X}_0^{\text{tr}}, \mathbf{X}_0^{\text{cal}}, \mathbf{X}_1, \mathbf{S}\right), \quad \forall \mathcal{J} \subset \mathcal D^{\text{test}}.
	\end{equation}
\end{small}

\begin{remark}	
Here, \((\mathbf{X},\tilde{\mathbf{X}})_{\mathrm{swap}(\mathcal{J})}\) denotes the result of swapping \(X_j\) and \(\tilde{X}_j\) for each \(j \in \mathcal{J}\). 
Appendix~\ref{sec:weight_construction} provides a practical construction of such weights. 
Beyond estimating local sparsity levels, the key role of this construction is to learn structural heterogeneity from the test sample without breaking the pairwise exchangeability required for FDR control. 
Specifically, the weight function is built through a swap-invariant procedure, so that the weighted score pairs \((V_j,\widetilde V_j)\) remain valid inputs for the subsequent mirror calibration step.
\end{remark}

\noindent\textbf{Step 3: Calculate SCQs (second calibration).}
Unlike standard conformal p-values, the weighted p-values constructed from \eqref{eq:weighted_p} are not directly interpretable and do not preserve theoretical properties such as super-uniformity under the null. Hence, we apply a second calibration step employing the following mirror process \(H(t)\):
\begin{equation}\label{eq:H_function}
H(t) = \frac{1 + \sum_{j\in\mathcal{D}^{\rm test}} \mathbb{I}(\tilde{V}_j \leq t, \tilde{V}_j < V_j)}{\left[\sum_{j\in\mathcal{D}^{\rm test}} \mathbb{I}(V_j \leq t, V_j < \tilde{V}_j)\right] \vee 1}, \quad t > 0.
\end{equation}
Let \(\mathcal{V}= \{V_j: j \in \mathcal D^{\text{test}}\}\) and \(\widetilde{\mathcal{V}}= \{\tilde{V}_j: j \in \mathcal D^{\text{test}}\}\). Define the conformal q-values:

\begingroup
\renewcommand{\arraystretch}{0.85}
\begin{equation}
	q_j \coloneqq 
	\begin{cases}
		\min\nolimits_{t \in \mathcal{V} \cup \widetilde{\mathcal{V}}:\, t \geq V_j} H(t),
		& \text{if } V_j < \tilde{V}_j,\\
		1, 
		& \text{if } V_j \geq \tilde{V}_j,
	\end{cases}
	\quad j \in \mathcal D^{\text{test}}.
	\label{eq:scq}
\end{equation}
\endgroup
The rejection set of the SCQ procedure is then given by
\begin{equation}\label{dec:scq}
	\mathcal{R}_{scq} = \{j \in \mathcal{D}^{\rm test} : q_j \leq \alpha\}.
\end{equation}
Algorithm \ref{alg:SCQ} in Appendix \ref{sec:SCQ_P-TAMS algorithm} summarizes the key steps of the proposed method.  

\begin{remark}
As explained in Section \ref{subsec:BC=SCQ}, the mirror process \(H(t)\) is carefully constructed to estimate the false discovery proportion; hence, the q-value provides an intuitive and interpretable structure-aware risk measure. Moreover, the SCQ procedure is simple to use, enabling practitioners to make decisions by directly comparing q-values to a pre-specified level \(\alpha\). Finally, the validity of SCQ requires only the pairwise exchangeability of scores -- a considerably weaker condition than joint exchangeability; this flexibility allows SCQ to effectively incorporate local heterogeneities while ensuring its validity for FDR control.
\end{remark}

We now establish the theoretical properties of the SCQ procedure. Consider \(m\) pairs of conformity scores \(\{(V_j, \tilde{V}_j): j \in \mathcal{D}^{\text{test}}\}\). Let \(\mathcal{V}_{-i} = \{V_j: j \in \mathcal{D}^{\rm test}\setminus\{n+i\}\}\) and \(\widetilde{\mathcal{V}}_{-i} = \{\tilde{V}_j: j \in \mathcal{D}^{\rm test}\setminus\{n+i\}\}\). 
The following proposition shows that the conformity scores constructed via Steps 1-2 are pairwise exchangeable: 

\begin{equation}\label{eq:pwex-scores}
(V_i, \tilde{V}_i, \mathcal{V}_{-i}, \widetilde{\mathcal{V}}_{-i}) \overset{d}{=} (\tilde{V}_i, V_i, \mathcal{V}_{-i}, \widetilde{\mathcal{V}}_{-i}), \quad \forall i \in \mathcal{H}_0.
\end{equation}

We clarify two layers of exchangeability in our construction. The first is data-level exchangeability: under \eqref{eq:data_exch_joint} below, the calibration null samples, mirror null samples, and null test samples are exchangeable conditional on the non-null test samples, labeled outliers, and side information. This condition ensures that the conformal p-values in \eqref{eq:conformal_p} provide valid, model-agnostic evidence on a common scale. The second is score-level exchangeability for the final mirror procedure. After incorporating structural information through
$
\big(V_j = p(X_j)/w(S_j), \tilde{ V}_j = p(\widetilde X_j)/w(S_j)\big),
$
the weighted scores are generally no longer jointly exchangeable, since test units are intentionally treated differently according to their side information. 
For finite-sample FDR control, it suffices to retain pairwise exchangeability for each null pair. Thus, SCQ deliberately relaxes joint score exchangeability to exploit structural heterogeneity while preserving the pairwise exchangeability needed for validity.

The following proposition formalizes this transition from data-level exchangeability to score-level pairwise exchangeability.

\begin{proposition}\label{prop:pairwise_scores}
Suppose the data points satisfy the exchangeability condition

\begin{small}
	\begin{equation}\label{eq:data_exch_joint}
		\left(X_i:i\in\mathcal{D}_0;X_j:j\in\mathcal{H}_0\right)
		\text{ are exchangeable conditional on }
		\left(X_j:j\in\mathcal{H}_1,\mathbf{X}_1,\mathbf{S}\right).
	\end{equation}
\end{small}
Consider the weighted p-values 
\(\{(V_j, \widetilde V_j): j \in \mathcal{D}^{\mathrm{test}}\}\) 
constructed by \eqref{eq:conformal_p} and \eqref{eq:weighted_p}. 
Assume \(s(\cdot)\) and \(w(\cdot)\) satisfy the invariance conditions in 
\eqref{eq:s_function_permu} and \eqref{eq:g_func_swap_inva}, respectively. 
Then \(\{(V_j, \widetilde V_j): j \in \mathcal{D}^{\mathrm{test}}\}\) 
satisfy the pairwise exchangeability condition \eqref{eq:pwex-scores}.
\end{proposition}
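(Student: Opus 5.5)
Fix a null index \(i\in\mathcal{H}_0\), and write \(\sigma\) for the operation that swaps \(X_i\) with its mirror partner \(\tilde X_i\) and leaves every other data point unchanged. The argument is a deterministic-map argument in two steps. First, we show that applying \(\sigma\) to the raw data does not change the law of the full data vector, conditionally on \(\mathcal{C}\coloneqq(X_j:j\in\mathcal{H}_1,\mathbf{X}_1,\mathbf{S})\). Second, we show that the whole score vector \(\{(V_j,\tilde V_j)\}_j\) is a fixed measurable function \(F\) of the data and \(\mathcal{C}\), and that \(F\) turns the swap \(\sigma\) on data into the swap of \((V_i,\tilde V_i)\) on scores. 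Combining the two gives \eqref{eq:pwex-scores}, first conditionally on \(\mathcal{C}\) and then, after integrating, unconditionally.

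\textbf{Step 1 (data level).} Since \(i\in\mathcal{H}_0\), the point \(X_i\) is a null test point and \(\tilde X_i\) is a mirror null point in \(\mathcal{D}_0\). Transposing these two coordinates is a permutation of the exchangeable collection in \eqref{eq:data_exch_joint}. Hence, conditionally on \(\mathcal{C}\),
\[(X_i,\tilde X_i,\mathbf{X}_{-i},\tilde{\mathbf{X}}_{-i},\mathbf{X}_0^{\text{tr}},\mathbf{X}_0^{\text{cal}})\overset{d}{=}(\tilde X_i,X_i,\mathbf{X}_{-i},\tilde{\mathbf{X}}_{-i},\mathbf{X}_0^{\text{tr}},\mathbf{X}_0^{\text{cal}}).\]
The non-null test points \(X_j\), \(j\in\mathcal{H}_1\), are conditioned on and are not moved by \(\sigma\). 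They therefore sit as fixed arguments and cause no difficulty.

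\textbf{Step 2 (invariance of the score map).} We track each ingredient of \((V_j,\tilde V_j)\) under \(\sigma\).
\begin{itemize}
\item \emph{Score function.} The swap \(\sigma\) permutes entries within \((\mathbf{X},\tilde{\mathbf{X}},\mathbf{X}_0^{\text{cal}})\). By \eqref{eq:s_function_permu}, the fitted function \(s(\cdot)\) is therefore unchanged as a function.
\item \emph{Conformal p-value function.} The calibration set \(\mathbf{X}_0^{\text{cal}}\) is untouched by \(\sigma\). Together with the unchanged \(s(\cdot)\), this means the map \(p(\cdot)\) in \eqref{eq:conformal_p} is unchanged. Consequently, after the swap \(p(X_i)\) and \(p(\tilde X_i)\) exchange places, and \(p(X_j),p(\tilde X_j)\) for \(j\neq i\) stay fixed.
\item \emph{Weights.} The swap \(\sigma\) is exactly \(\text{swap}(\{i\})\). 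By \eqref{eq:g_func_swap_inva}, every weight \(w(S_j)\) is unchanged, and \(\mathbf{S}\) itself is fixed.
\item \emph{Weighted scores.} Since the pair \((V_i,\tilde V_i)\) shares the single weight \(w(S_i)\), it becomes \((\tilde V_i,V_i)\). All other pairs are unchanged, so \(\mathcal{V}_{-i}\) and \(\widetilde{\mathcal{V}}_{-i}\) are preserved.
\end{itemize}
Thus \(F\circ\sigma=\tau_i\circ F\), where \(\tau_i\) denotes swapping the \(i\)-th score pair. Applying \(F\) to both sides of the distributional identity in Step 1 yields \eqref{eq:pwex-scores}.

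\textbf{Main obstacle.} The delicate point is Step 2's claim that \(s\) and \(w\) are the \emph{same} function before and after the swap, so that \(F\) really is one fixed map. Conditions \eqref{eq:s_function_permu} and \eqref{eq:g_func_swap_inva} must be read as identities of functions that hold for every data configuration, with any internal randomization (for example, random seeds in training) independent of the data or absorbed into the conditioning. I would state this explicitly.

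Two further points need care.
\begin{itemize}
\item \emph{Calibration set not touched.} The swap must not alter the calibration set. This is why \(i\) is paired with a mirror point rather than a calibration point, and it ensures \(p(\cdot)\) is literally unchanged.
\item \emph{Data-dependent \(\mathcal{H}_0\).} The set \(\mathcal{H}_0\) may be random. Condition \eqref{eq:data_exch_joint} already conditions on which test points are non-null, so I would carry out the argument conditionally on the labels, treating \(\mathcal{H}_0\) as fixed.
\end{itemize}
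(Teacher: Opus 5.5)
Your proposal is correct and matches the paper's proof in substance. Both arguments first show that the score map $x\mapsto p(x)/w(S_j)$ is unchanged when $X_i$ and $\tilde X_i$ are swapped, using the invariance of $s$, the untouched calibration set, and the swap-invariance of $w$, and then combine this with data-level exchangeability of the null pair. The difference is only in packaging: the paper conditions on the unordered pair $\{X_i,\tilde X_i\}$ and the remaining data so that the score map becomes a fixed function, whereas you push the data-level swap identity forward through the equivariance relation $F\circ\sigma=\tau_i\circ F$, which reaches the same conclusion slightly more directly.
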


The next theorem establishes the finite-sample validity of the SCQ procedure.

\begin{theorem}\label{thm:scq_validity}
Under the conditions in Proposition \ref{prop:pairwise_scores}, the SCQ procedure  controls the FDR at level \(\alpha\).
\end{theorem}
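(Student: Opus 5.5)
The plan is to reduce the SCQ procedure to a Barber--Cand\`es (knockoff-type) step-up rule on the mirror statistics and then run the standard martingale/leave-one-out argument, using only the pairwise exchangeability \eqref{eq:pwex-scores} supplied by Proposition~\ref{prop:pairwise_scores}.

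\textbf{Step 1: rewrite the rejection set as a threshold rule.} Since $H(t)$ is piecewise constant and can only change at points of $\mathcal V\cup\widetilde{\mathcal V}$, the q-values in \eqref{eq:scq} are the minimum of $H$ over $t\ge V_j$. Set
\[
\tau=\max\{t\in\mathcal V\cup\widetilde{\mathcal V}: H(t)\le\alpha\}.
\]
Then $q_j\le\alpha$ holds if and only if $V_j<\tilde V_j$ and $V_j\le\tau$. Hence $\mathcal R_{scq}=\{j: V_j\le \tau,\ V_j<\tilde V_j\}$, and by the definition of $\tau$,
\[
1+\#\{j:\tilde V_j\le\tau,\ \tilde V_j<V_j\}\le\alpha\,\bigl(|\mathcal R_{scq}|\vee1\bigr).
\]

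\textbf{Step 2: bound the FDP by a ratio.} Writing $W_j=\min(V_j,\tilde V_j)$ and $\delta_j=\mathbb I(V_j<\tilde V_j)$, Step 1 gives
\[
\mathrm{FDP}\le \alpha\cdot\frac{\sum_{j\in\mathcal H_0}\mathbb I(W_j\le\tau,\delta_j=1)}{1+\sum_{j\in\mathcal H_0}\mathbb I(W_j\le\tau,\ \tilde V_j<V_j)}.
\]
Here I drop the non-null terms from the denominator, which only enlarges the bound. Ties $V_j=\tilde V_j$ never count in either sum.

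\textbf{Step 3: show the ratio has expectation at most $1$.} This step relies on the pairwise exchangeability from Proposition~\ref{prop:pairwise_scores}, applied coordinate by coordinate. Condition on the unordered pairs $\{V_j,\tilde V_j\}$ for all $j$. For each null $i$, swapping $V_i$ and $\tilde V_i$ leaves $W_i$ unchanged, flips the sign indicator, and leaves all other coordinates fixed. The threshold $\tau$ is a stopping time in the sense of a reverse filtration: scanning $t$ downward reveals signs as $t$ decreases.

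I would use the leave-one-out form of the argument rather than the full martingale, because \eqref{eq:pwex-scores} only gives marginal swap invariance for each $i$ with the rest held fixed, not joint independence of the signs. Concretely, write the ratio as $\sum_{i\in\mathcal H_0}\delta_i\mathbb I(W_i\le\tau)/(1+U(\tau))$, where $U$ counts the null negatives. For each $i$, let $\tau_i$ be the threshold computed with pair $i$ forced to the ``negative'' orientation. Then $\tau_i$ is invariant under swapping pair $i$. On $\{\delta_i=1\}$, one checks that $\tau=\tau_i$, or at least that $\mathbb I(W_i\le\tau)/(1+U(\tau))$ equals the corresponding quantity at $\tau_i$, say $\mathbb I(W_i\le\tau_i)/(1+U_{-i}(\tau_i))$. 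Swap invariance then lets me replace $\delta_i$ by $1-\delta_i$ in expectation. Summing over $i$, the swapped terms telescope to $\mathbb E[U(\tau^*)/(1+U(\tau^*))]\le1$, in the style of Barber--Cand\`es and the conditional calibration argument of \citealp{ren2023knockoffs}. Combining this with Step 2 gives $\mathrm{FDR}\le\alpha$.

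\textbf{Main obstacle.} The hardest part is the monotonicity and identity claim that makes $\tau$ insensitive to the orientation of pair $i$ in the required way. Flipping pair $i$ from negative to positive moves one unit from the numerator of $H$ to its denominator, which can only lower $H$ and raise $\tau$. I will need to verify that the resulting inequality points in the direction that yields $\le$ rather than $=$, while handling the $+1$ and $\vee1$ conventions and possible ties in $\mathcal V\cup\widetilde{\mathcal V}$. If the exact identity fails, I will settle for the inequality version, in which $\mathbb I(W_i\le\tau)/(1+U(\tau))$ evaluated at the positive orientation is at most the same expression at the flipped orientation. That inequality suffices.
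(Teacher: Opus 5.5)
\textbf{There is a gap in Step 3.} Your Steps 1--2 are correct, and they are the paper's reduction in another form. The paper identifies $\mathcal R_{scq}=\mathcal R_{bc}$ with an e-BH rejection set using $e_j=m\,\mathbb I(V_j\le\tau,V_j<\tilde V_j)/\{1+\sum_i\mathbb I(\tilde V_i\le\tau,\tilde V_i<V_i)\}$. Validity of e-BH then reduces to exactly your ratio bound $\mathbb E[\,\cdot\,]\le 1$, which the paper imports from \citet{ZhaSun25-PLIS} without proof.

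The gap is in Step 3, where you try to prove that bound yourself. Forcing pair $i$ to the \emph{negative} orientation puts the discrepancy on the wrong side. On $\{\delta_i=1\}$, flipping $i$ to negative adds one to the numerator of $H$ and removes one from its denominator for all $t\ge W_i$. So $H$ can only go up, which gives $\tau_i\le\tau$, and $\tau_i<W_i\le\tau$ is possible. Neither your identity nor your fallback inequality survives this:
\begin{itemize}
\item Take $\alpha=1/2$ and three positive pairs $(1,10),(2,11),(3,12)$, with $i$ the third. Then $\tau=12$ and $\mathbb I(W_i\le\tau)/(1+U(\tau))=1$. After flipping, $H(t)=1$ for $t\ge 3$ while $H(2)=1/2$, so $\tau_i=2<W_i$ and the flipped value is $0$.
\item If you keep $i$ in $U$, as your fallback does, then even when the threshold does not move you get $1/(2+U_{-i}(\tau))<1/(1+U_{-i}(\tau))$.
\end{itemize}
So the step that carries all the content fails as planned.

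\textbf{How to repair it.} Force pair $i$ to the \emph{positive} orientation instead: swap it if needed, which leaves $\mathcal V\cup\widetilde{\mathcal V}$ unchanged.
\begin{itemize}
\item On $\{\delta_i=1\}$ nothing changes, so the $i$th term equals $\mathbb I(W_i\le\tau_i^+)/(1+U_{-i}(\tau_i^+))$ exactly.
\item This factor is invariant under swapping pair $i$. So \eqref{eq:pwex-scores} lets you replace $\delta_i$ by $\mathbb I(\tilde V_i<V_i)$ in expectation. As you rightly note, that condition gives only single-coordinate swap invariance, and that is enough here.
\end{itemize}

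The swapped sum does not ``telescope'' for free, because $\tau_i^+$ varies with $i$. You need one more lemma. Let $S:=\{i\in\mathcal H_0:\tilde V_i<V_i,\ W_i\le\tau_i^+\}$ and take $i,k\in S$ with $W_i\le W_k$.
\begin{itemize}
\item The two modified mirror processes coincide on $[W_k,\infty)$, where both equal $N^-(t)/(N^+(t)+1)$.
\item Since $\tau_k^+\ge W_k$, the process forced at $i$ is at most $\alpha$ at $\tau_k^+$. Hence $\tau_i^+\ge\tau_k^+$.
\item Both thresholds then lie in the region where the processes agree, so $\tau_i^+=\tau_k^+=:T^*$.
\item The same argument shows $S=\{k\in\mathcal H_0:\tilde V_k<V_k,\ W_k\le T^*\}$.
\end{itemize}
Therefore $U_{-i}(T^*)=|S|-1$ for every $i\in S$. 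Each term equals $1/|S|$, and the sum is $\mathbb I(S\neq\emptyset)\le 1$. This is the leave-one-out proof of Barber, Cand\`es and Samworth for knockoff+, and with it your argument would be a self-contained proof of the lemma the paper cites.
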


\begin{remark}\label{rmk:weaker-condi}
Similar assumptions to \eqref{eq:data_exch_joint} are widely adopted in the structured multiple testing literature \citep{ignatiadis2021covariate,li2019multiple,cai2022laws}, where null p-values are typically assumed to be independent and super-uniform conditional on the auxiliary covariates and the non-null p-values.  For clarity and interpretability, Theorem~\ref{thm:scq_validity} is presented under a slightly stronger set of conditions. However, the result remains valid under strictly weaker assumptions: (a) the score function \(s(\cdot)\) need only satisfy swap‑invariance \eqref{eq:g_func_swap_inva} in place of the permutation‑invariance required in \eqref{eq:s_function_permu}, and  
(b) the joint exchangeability condition \eqref{eq:data_exch_joint} can be relaxed to the following pairwise exchangeability: 

\begin{equation}\label{eq:data_exch_pair}
	\left((\mathbf{X}, \tilde{\mathbf{X}})_{\text{swap}(\mathcal{J})} \mid \mathbf{X}_0^{\text{tr}}, \mathbf{X}_0^{\text{cal}}, \mathbf{X}_1, \mathbf{S}\right) \overset{d}{=} \left(\mathbf{X}, \tilde{\mathbf{X}} \mid \mathbf{X}_0^{\text{tr}}, \mathbf{X}_0^{\text{cal}}, \mathbf{X}_1, \mathbf{S}\right), \quad \forall \mathcal{J} \subset \mathcal{H}_0. 
\end{equation}
These relaxations do not compromise the FDR guarantee established in Theorem~\ref{thm:scq_validity}. 
\end{remark}

\section{Conformalized Model Selection with P-TAMS}\label{sec:P-TAMS}

\subsection{General considerations}\label{subsec:P-TAMS-general}

The selection of an effective classifier -- whether OCC, BIC, or PUC -- is critical to enhancing the power of OOD testing. The performance of classifier families can vary considerably across settings, as each family possesses distinct strengths and weaknesses that are influenced by factors such as data dimensionality and class imbalance (cf. \citealp{liang2024integrative} and Appendix~\ref{sec:P-TAMS_vs_CAMS}). Moreover, within each family, specific implementations (such as random forests, support vector machines, or neural networks) can yield substantially different results across various datasets. This variability motivates the development of a principled, data-driven automated model selection (AMS) approach to adaptively identify the most suitable model from a toolbox of candidates.

However, if applied without proper adjustment, a naive model selection strategy that simply cherry‑picks the best‑performing model may lead to an inflation of the error rate. Within the conformal framework, \citet{magnani2023collective} demonstrated that unadjusted model selection breaks the required exchangeability between test and calibration samples, leading to invalid inference -- a phenomenon we demonstrate numerically in Appendix \ref{sec:P-TAMS_eff}. Similar concerns about double-dipping bias in other contexts of conformal inference have been discussed by \citet{einbinder2022training} and \citet{liang2024conformal}.

Conformalized Automated Model Selection (CAMS; \citealp{magnani2023collective, liang2023conformal, liang2024integrative, marandon2024adaptive}, among others) employs carefully designed calibration techniques to preserve exchangeability between test and calibration samples throughout the model training and selection process, ensuring valid inference conditional on the selected model. The CAMS framework represents a paradigm shift from conventional post-selection inference, which relies on fixed models and strong parametric assumptions, such as linearity and sparsity. In contrast, CAMS is applicable to a randomly selected model, chosen in a data-driven manner from a flexible toolbox, fully leveraging the power of the conformal approach. By unifying model training and statistical validation in a single, coherent process, CAMS allows practitioners to harness the flexibility of ML toolboxes without compromising statistical guarantees.

Let \(\mathbf C=\{\mathcal{C}_k : k \in [K]\}\) denote a toolbox of  models. Existing CAMS strategies can be categorized into two types: (a) model-oriented selection and (b) task-oriented selection.
Strategy (a) evaluates a classifier’s ability to distinguish between inliers (null model) and outliers (non-null model). For instance, as proposed by \citet{liang2024integrative}, the model is selected by maximizing the median difference between two sets of conformity scores computed for labeled inliers and labeled outliers, respectively. By contrast, Strategy (b) directly assesses a classifier’s utility in performing a specific task. For example, the methods in \citet{magnani2023collective} and \citet{marandon2024adaptive} select a model that maximizes (a proxy of) the number of rejections in the test samples. However, existing CAMS methods rely on joint exchangeability and are consequently ill-suited for structured inference. The next subsection develops new methodology to address the above issues.

\subsection{The P-TAMS algorithm}\label{subsec:P-TAMS}

We introduce the Pseudo-score-guided Transductive Automated Model Selection (P-TAMS) algorithm, which employs a transductive, task‑oriented model selection strategy for two primary reasons. First, unlike model‑oriented approaches that require labeled outliers \citep{liang2024integrative} -- which are often unavailable in OOD settings -- our model selection process operates without such supervision. Moreover, even when labeled outliers are available, they may exhibit distributional shifts relative to test outliers, potentially misleading model selection (cf. Section \ref{sec:impact_dist_shift}). Second, our design capitalizes on the structural knowledge inherent in test samples by formulating the selection criterion directly on the test set (transductive) while aligning it with the utility for performing a specific task, such as maximizing the number of rejections in OOD testing (task‑oriented).

A central challenge is ensuring pairwise exchangeability throughout the inferential process, where the SCQ procedure is deployed \emph{conditional on the selected model}. One may naturally consider using the number of rejections computed from the true scores \(\mathcal{V}\) and \(\widetilde{\mathcal{V}}\) as the selection criterion. However, this approach is not permissible because the criterion is contingent on \(H(t)\) [Eq.~\eqref{eq:H_function}], which treats \(\mathcal{V}\) and \(\widetilde{\mathcal{V}}\) asymmetrically. This asymmetry violates the pairwise exchangeability assumption essential for valid inference.

To address this challenge, we introduce carefully constructed pseudo-scores that serve as substitutes for the true scores to ensure statistical validity. Our selection criterion is defined as the \emph{pseudo number of rejections}, computed from these pseudo-scores. The design of the pseudo-score functions is guided by two key considerations. First, swap invariance with respect to \(\mathbf{X}\) and \(\tilde{\mathbf{X}}\) must be fulfilled throughout the entire training and selection process. Second, the resulting pseudo number of rejections should effectively capture the model's utility, mirroring the behavior observed with true scores. The P-TAMS algorithm proceeds in three main steps, each accompanied by detailed explanatory remarks.

\noindent\textbf{Step 1: Initial partition of test samples.} For each classifier \(\mathcal{C}_k \in \mathbf{C}\), let \(s_k(\cdot)\) denote the corresponding score function computed according to the invariance principle \eqref{eq:g_func_swap_inva}. We construct \(m\) pairs of preliminary conformal p-values via \eqref{eq:conformal_p}: \(\{(p_i^k \equiv p^k(X_i), \tilde{p}_i^k \equiv p^k(\tilde{X}_i)) : i \in \mathcal{D}^{\rm test}\}\), then form \(m\) pseudo conformal p-values by taking the pairwise minimum:
$
\bar{\mathcal{P}}_k = \left\{\min\left(p_i^k, \tilde{p}_i^k\right) : i \in \mathcal{D}^{\rm test}\right\}.
$
Finally, we apply the BH procedure to \(\bar{\mathcal{P}}_k\) at level \(\alpha_0\) to obtain an initial rejection set \(\bar{\mathcal{R}}_k\).

\begin{remark}
The primary goal of this step is to partition test samples into likely outliers and likely inliers while preserving swap invariance, which underpins the subsequent construction of pseudo-scores. After processing with the minimum operator, the likely outliers (and likely inliers) tend to exhibit smaller (and larger) p-values compared to other test units, making them more (or less) likely to be rejected. Although the minimum operator distorts the original p-values, it retains sufficient discriminatory power to distinguish between the two groups while preserving swap invariance. In contrast, alternative swap invariance operators, such as the maximum or sum, would result in significant information loss and fail to achieve effective separation. Moreover, the P‑TAMS procedure is not sensitive to the choice of \(\alpha_0\), which may be set slightly above the nominal FDR level \(\alpha\) (with \(2\alpha\) as the default).
\end{remark}

\noindent\textbf{Step 2: Constructing pseudo conformity scores.} This step represents the core innovation of P-TAMS.  Denote \(\mathcal{V}_k = \{V_i^k: i \in \mathcal D^{\text{test}}\}\) and \(\widetilde{\mathcal{V}}_k = \{\tilde{V}_i^k: i \in \mathcal D^{\text{test}}\}\) the true conformity scores for classifier \(\mathcal{C}_k\). The pseudo scores \(\mathcal{U}^k = \{U_i^k: i \in \mathcal D^{\text{test}}\}\) and \(\widetilde{\mathcal{U}}^k = \{\tilde{U}_i^k: i \in\mathcal D^{\text{test}}\}\) are then computed as follows. 
\begin{itemize}
\item For likely outliers, we apply the min-max operator:
\begin{equation}\label{eq:pseudo_outlier}
    U_i^k=\min(V_i^k, \tilde{V}_i^k), \quad \tilde{U}_i^k = \max(V_i^k, \tilde{V}_i^k),\quad i \in \bar{\mathcal{R}}_k.
\end{equation}
\item For the remaining units (likely inliers), let \(\{b_i\}\) denote i.i.d. \(\mathrm{Bernoulli}(1/2)\) variables, we apply a random coin-flipping operator:
\begin{equation}\label{eq:pseudo_inlier}
    U_i^k = (1 - b_i)V_i^k + b_i\tilde{V}_i^k, \quad \tilde{U}_i^k = b_i V_i^k + (1 - b_i)\tilde{V}_i^k,\quad i \notin \bar{\mathcal{R}}_k.
\end{equation}
\end{itemize}

\begin{remark}
The utilization of pseudo-scores effectively avoids the pitfalls associated with using true scores for model selection. Specifically, the mirror process \(H(t)\) computed with true scores breaks the pairwise exchangeability (conditional on the selected model). This issue is mitigated by employing pseudo-scores, which faithfully preserve the swap invariance with respect to \(\mathbf{X}\) and \(\tilde{\mathbf{X}}\). Additionally, our choice of symmetrization operators reduces the distortion of the true scores: (a) for likely outliers, operator \eqref{eq:pseudo_outlier} preserves the tendency for \(V_i^k\) to be smaller than \(\tilde{V}_i^k\) under the alternative; (b) for likely inliers, operator \eqref{eq:pseudo_inlier} reflects the random ordering between \(V_i^k\) and \(\tilde{V}_i^k\) under the null.
\end{remark}

\noindent\textbf{Step 3: Model selection and final decisions.} For each \(k \in [K]\), we implement the SCQ procedure using pseudo scores \(\mathcal{U}^k\) and \(\widetilde{\mathcal{U}}^k\), and record the number of pseudo rejections \(r_k\). We then select the classifier  with the largest \(r_k\), denoted by \(\mathcal{C}_*\). Finally, we apply the SCQ procedure again with the selected model \(\mathcal{C}_*\) to output the final decision \(\mathcal{R}_*\).

\begin{remark}
By leveraging the inherent structure of the mirror-based algorithms, P-TAMS integrates seamlessly with SCQ through an innovative design. Together, SCQ and P-TAMS form a unified framework that relies solely on the mild assumption of pairwise exchangeability, thereby offering enhanced flexibility and broader applicability -- making it particularly well-suited for detecting structured patterns. Moreover, rather than resorting to additional sample splitting (cf. \citealp{magnani2023collective, marandon2024adaptive}), P-TAMS employs carefully constructed symmetrization operators that intrinsically satisfy the swap invariance requirement. This novel design not only preserves the validity of the SCQ  but also delivers efficiency gains at no extra cost of valuable samples. A schematic illustration of P-TAMS is provided in Figure~\ref{fig:P-TAMS}, with its key steps outlined in Algorithm~\ref{alg:P-TAMS} in Appendix~\ref{sec:SCQ_P-TAMS algorithm}. An extension of P-TAMS is provided in Algorithm~\ref{alg:P-TAMS+} in Appendix~\ref{sec:P-TAMS+}.

\end{remark} 

\begin{figure}[ht]
	\centering
	\includegraphics[width=0.85\linewidth]{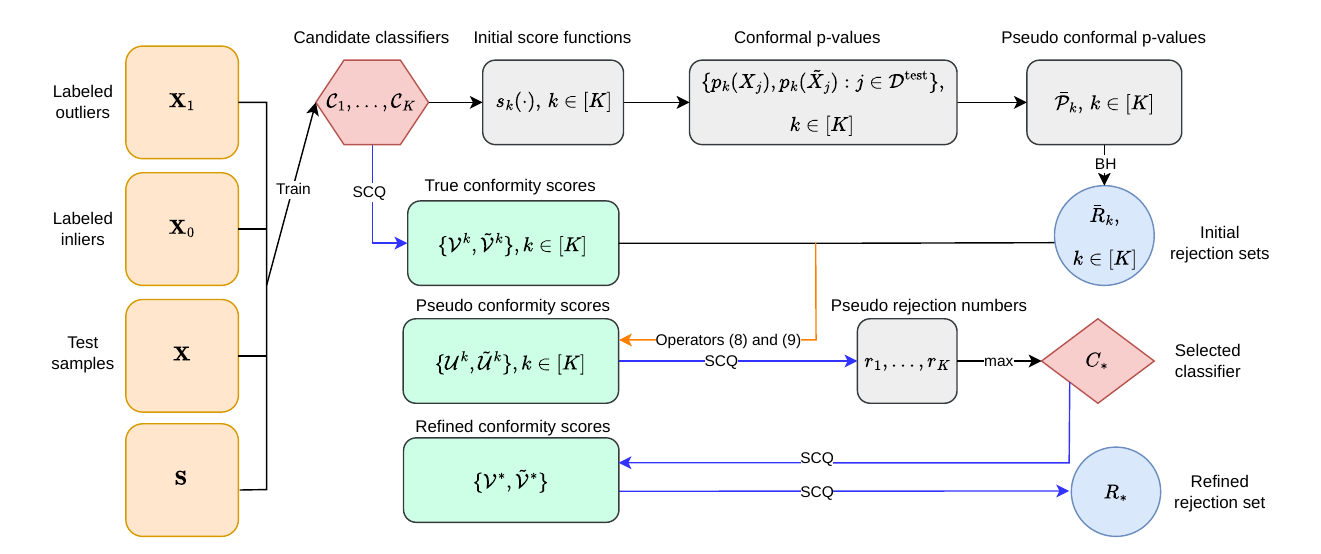}
	\caption{\small Schematic representation of the P-TAMS algorithm.}
	\label{fig:P-TAMS}
\end{figure}

The following theorem establishes the finite-sample validity of P-TAMS. 

\begin{theorem}\label{thm:P-TAMS-validity}
Under the conditions in Proposition \ref{prop:pairwise_scores}, the SCQ procedure refined with P-TAMS controls the FDR at level \(\alpha\).
\end{theorem}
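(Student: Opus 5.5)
The plan is to reduce Theorem~\ref{thm:P-TAMS-validity} to Theorem~\ref{thm:scq_validity}. We show that the index $k_*$ of the selected classifier is a swap-invariant function of the data: it is unchanged when $X_j$ and $\tilde X_j$ are exchanged for any $j\in\mathcal J\subset\mathcal H_0$, provided the auxiliary randomness is coupled appropriately. Given this, the composite scores $(V_j^{k_*},\tilde V_j^{k_*})$ still satisfy the pairwise exchangeability \eqref{eq:pwex-scores}. The conclusion then follows from the argument behind Theorem~\ref{thm:scq_validity}, which (by Remark~\ref{rmk:weaker-condi}) only uses pairwise exchangeability of null pairs.

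\textbf{Invariance of each step.} For fixed $k$, swapping $(X_j,\tilde X_j)$ leaves $s_k$ unchanged, by \eqref{eq:s_function_permu} or \eqref{eq:g_func_swap_inva}. The weights $w$ are unchanged by \eqref{eq:g_func_swap_inva}, and the calibration scores are not touched. Hence $(p_j^k,\tilde p_j^k)$ and $(V_j^k,\tilde V_j^k)$ are simply exchanged at the swapped coordinates and unchanged elsewhere.

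\emph{Step 1.} The pairwise minima $\bar{\mathcal P}_k$ are unchanged, so the BH set $\bar{\mathcal R}_k$ is invariant.

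\emph{Step 2, likely outliers.} For $i\in\bar{\mathcal R}_k$, the min/max pair $(U_i^k,\tilde U_i^k)$ is invariant.

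\emph{Step 2, likely inliers.} For $i\notin\bar{\mathcal R}_k$, the coin flip is handled by a coupling. Swapping the data at $i$ and replacing $b_i$ by $1-b_i$ yields the same $(U_i^k,\tilde U_i^k)$. Since $b_i\sim\mathrm{Bernoulli}(1/2)$ is independent of the data, $(b_i)\mapsto(1-b_i)$ preserves its law. The flips are shared across $k$, so one coupling serves every $k$ simultaneously; because $\bar{\mathcal R}_k$ is invariant, the flip on swapped coordinates is applied consistently.

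\emph{Step 3.} The pseudo-rejection counts $r_k$ are deterministic functions of $(\mathcal U^k,\widetilde{\mathcal U}^k)$, so they are invariant. Hence $k_*$ is invariant, with a deterministic tie-breaking rule.

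\textbf{Exchangeability of the composite scores.} Let $b=(b_i)$ and let $b^{\mathcal J}$ denote $b$ with coordinates in $\mathcal J$ flipped. Write $k_*=\kappa(\mathbf X,\tilde{\mathbf X},b,\cdot)$. By the data-level pairwise exchangeability \eqref{eq:data_exch_pair}, implied by \eqref{eq:data_exch_joint}, and by independence of $b$,
\[
\bigl((\mathbf X,\tilde{\mathbf X})_{\mathrm{swap}(\mathcal J)},\,b^{\mathcal J}\bigr)\overset{d}{=}(\mathbf X,\tilde{\mathbf X},b)
\]
conditional on $\mathbf X_0^{\rm tr},\mathbf X_0^{\rm cal},\mathbf X_1,\mathbf S$. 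The map sending the data to the full collection $\{(V_j^{k},\tilde V_j^{k})\}_{j,k}$ together with $k_*$ commutes with this transformation: the score pairs are swapped at $\mathcal J$, and $k_*$ is fixed.

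Take $\mathcal J=\{i\}$ for $i\in\mathcal H_0$. Then the vector $(V_i^{k_*},\tilde V_i^{k_*},\mathcal V^{k_*}_{-i},\widetilde{\mathcal V}^{k_*}_{-i})$ has the same law as the vector with the $i$-th pair swapped. This is \eqref{eq:pwex-scores} for the selected scores. More strongly, swapping any subset of null pairs leaves the joint law unchanged.

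\textbf{Conclusion and main obstacle.} Apply the FDR argument of Theorem~\ref{thm:scq_validity} to these scores. Conditionally on $\{\min,\max\}$ of every pair and on the non-null orientations, the null signs $\mathbb I(V_j<\tilde V_j)$ are i.i.d.\ fair coins. The usual optional-stopping / leave-one-out bound for the mirror estimator $H(t)$ then gives $\mathrm{FDR}\le\alpha$.

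The main obstacle is rigorously justifying that selection does not interfere with the conditioning. We must check that $k_*$ is measurable with respect to the quantities that are invariant under the joint swap-and-flip action. The delicate point is the external coin flips $b_i$, which are needed on $\bar{\mathcal R}_k^c$. The flipping coupling must be compatible with $\bar{\mathcal R}_k$ for every $k$ at once, and $\bar{\mathcal R}_k$ differs across $k$. I would handle this by defining the flips for all $i$ upfront, independently of $k$, and using them only where needed. Flipping $b_i$ for swapped $i$ is then correct regardless of which $\bar{\mathcal R}_k$ contains $i$, since on $\bar{\mathcal R}_k$ the min/max operator ignores $b_i$.
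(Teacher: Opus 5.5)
Your proposal is correct and follows the paper's route. You reduce to pairwise exchangeability of the selected scores by showing that $\bar{\mathcal R}_k$, the pseudo-scores, the counts $r_k$, and hence $k_*$ are swap-invariant, and then reuse the argument of Theorem~\ref{thm:scq_validity}. Your explicit coupling $b\mapsto b^{\mathcal J}$ in the coin-flip step is more careful than the paper, which asserts swap-invariance of the pseudo-scores on $\bar{\mathcal R}_k^c$ even though, for fixed $b_i$, a swap exchanges $(U_i^k,\tilde U_i^k)$, so invariance holds only jointly with flipping $b_i$ (i.e., in distribution).
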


\section{Asymptotic Theories}\label{sec:theories}

\subsection{Mirror Calibration and Equivalent Thresholding Rules}\label{subsec:BC=SCQ}

Existing conformalized OOD testing methods involve constructing conformal p-values followed by applying well-established multiple testing procedures. For example, \citet{bates2023testing} showed that the conformal p-values satisfy the PRDS (positive regression dependence on subsets) condition and utilize the theory of \citet{benjamini2001control} to prove the validity of FDR control.  Meanwhile, recognizing that recalibrated conformal p-values exhibit more complex dependence, \citet{liang2024integrative} employed the conditional calibration framework of \citet{fithian2022conditional} to establish the FDR theory. However, in structured OOD testing scenarios, the frameworks in \cite{benjamini2001control} and \cite{fithian2022conditional} are inapplicable due to the violation of joint exchangeability. Instead, we establish the finite-sample FDR theory in Section \ref{subsec:SCQ} using the martingale theory in \cite{ZhaSun25-PLIS} and the e-BH theory in \cite{wang2022false}. 

This section introduces novel tools to analyze the asymptotic properties of SCQ. The asymptotic theory provides insights into two key questions. First, while SCQ controls the FDR in finite samples, does it tend to be overly conservative? Second, SCQ employs a weighting strategy to leverage side information -- when and why is this approach effective? Sections~\ref{subsec:FDR-analysis} and \ref{subsec:power-analysis} develop a theoretical framework to characterize the conditions under which SCQ is asymptotically anti-conservative and demonstrates gains in power.

We begin with some preliminaries. Suppose we have constructed \(m\) pairs of conformity scores, \(\{(V_j, \tilde{V}_j): j \in \mathcal{D}^{\text{test}}\}\), that satisfy condition \eqref{eq:pwex-scores}. Two thresholding rules can be considered. The first thresholds conformal q-values defined in \eqref{eq:scq}; its rejection set is denoted by 
$
\mathcal{R}_{scq} = \{j \in \mathcal{D}^{\text{test}}: q_j \leq \alpha\}.
$
The second thresholds conformity scores directly. Define 

\begin{equation}\label{eq:tau}
	\tau = \max\left\{t \in \mathcal{V} \cup \tilde{\mathcal{V}}: H(t) \coloneqq \frac{1 + \sum_{j\in\mathcal{D}^{\rm test}} \mathbb{I}(\tilde{V}_j \leq t, \tilde{V}_j < V_j)}{\left[\sum_{j\in\mathcal{D}^{\rm test}} \mathbb{I}(V_j \leq t, V_j < \tilde{V}_j)\right]\vee1} \leq \alpha\right\}.
\end{equation}
The subsequent rejection set is given by

\begin{equation}\label{dec:BC}
	\mathcal{R}_{bc} = \{j \in \mathcal{D}^{\text{test}}: V_j \leq \tau, \, V_j < \tilde{V}_j\}.	
\end{equation}

\begin{remark}
The subscript ``bc'' in \eqref{dec:BC} denotes a Barber–Candès-type (BC) procedure \citep{barber2015controlling}. The generic BC algorithm -- also known as the Selective SeqStep+ algorithm -- was originally developed for knockoff filters in regression. SCQ and knockoff only connect at the thresholding step. In knockoff inference, the main task is to construct knockoff variables that mimic the dependence structure of the original covariates. In contrast, SCQ starts from an OOD testing problem with unlabeled test samples and mirror null samples. The key challenge is therefore the construction of conformity-score pairs that can incorporate side information from the test sample while preserving pairwise exchangeability under the null.
	\end{remark}

The q-value thresholding rule is intuitive and interpretable, but its direct theoretical analysis can be challenging. In contrast, we can leverage exchangeability properties and asymptotic arguments to give a rigorous theoretical treatment of the score-thresholding rule \eqref{dec:BC}. The following proposition establishes the equivalence between the two thresholding rules; as a result, our subsequent analysis will focus on the BC-type algorithm \eqref{dec:BC}, which applies equally to the SCQ procedure.

\begin{proposition}\label{prop:equiv_SCQ_BC}
	Given $m$ pairs of conformity scores $\{(V_j,\tilde{V}_j):j\in\mathcal{D}^{\rm test}\}$. Consider the rejection sets \(\mathcal{R}_{\text{bc}}\) and \(\mathcal{R}_{\text{scq}}\) defined above. Then \(\mathcal{R}_{\text{bc}} = \mathcal{R}_{\text{scq}}\).
\end{proposition}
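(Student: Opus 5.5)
We prove the two inclusions separately, using only the definitions of \(H\), \(q_j\) and \(\tau\); no probabilistic input is needed. Both rules reject only units with \(V_j<\tilde V_j\): the q-value rule because \(q_j=1\) otherwise (we take \(\alpha<1\), so \(1\le\alpha\) fails), and \(\mathcal{R}_{bc}\) by definition. So throughout we fix \(j\) with \(V_j<\tilde V_j\).

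For \(\mathcal{R}_{bc}\subseteq\mathcal{R}_{scq}\), suppose \(j\in\mathcal{R}_{bc}\). Then \(\tau\) exists, i.e. the set in \eqref{eq:tau} is nonempty, and \(H(\tau)\le\alpha\). Since \(V_j\le\tau\) and \(\tau\in\mathcal{V}\cup\widetilde{\mathcal{V}}\), the point \(t=\tau\) is feasible in the minimization defining \(q_j\). Hence \(q_j\le H(\tau)\le\alpha\), and \(j\in\mathcal{R}_{scq}\).

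For \(\mathcal{R}_{scq}\subseteq\mathcal{R}_{bc}\), suppose \(q_j\le\alpha\). Since \(\mathcal{V}\cup\widetilde{\mathcal{V}}\) is finite and contains \(V_j\), the minimum in \eqref{eq:scq} is attained at some \(t^*\in\mathcal{V}\cup\widetilde{\mathcal{V}}\) with \(t^*\ge V_j\) and \(H(t^*)\le\alpha\). Thus \(t^*\) belongs to the set in \eqref{eq:tau}, so \(\tau\) is well defined and \(\tau\ge t^*\ge V_j\). Together with \(V_j<\tilde V_j\), this gives \(j\in\mathcal{R}_{bc}\).

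The only delicate points are bookkeeping.
\begin{itemize}
\item Ties and the ``\(\vee 1\)'' in the denominator of \(H\) do not affect either argument, because both directions use nothing beyond the inequality \(H(t)\le\alpha\) at specific points of \(\mathcal{V}\cup\widetilde{\mathcal{V}}\).
\item \(H\) is not monotone in \(t\), but the proof never needs monotonicity. It needs only that \(\tau\) is the largest feasible grid point and that the minimum in \(q_j\) ranges over grid points at or above \(V_j\). This matching of index sets is the crux.
\item When no \(t\) satisfies \(H(t)\le\alpha\), both rejection sets are empty: \(\mathcal{R}_{bc}=\emptyset\) by convention, and \(q_j>\alpha\) for every \(j\).
\end{itemize}
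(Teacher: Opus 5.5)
Your proof is correct and follows the paper's argument. Both proofs rest on two observations: when \(V_j\le\tau\), the point \(\tau\) is feasible in the minimization defining \(q_j\); and a minimizer \(t\ge V_j\) with \(H(t)\le\alpha\) lies in the set defining \(\tau\), which forces \(\tau\ge V_j\). The paper argues \(\mathcal{R}_{scq}\subseteq\mathcal{R}_{bc}\) by contradiction with the maximality of \(\tau\) where you argue it directly, and you also treat the empty-feasible-set case explicitly, which the paper leaves implicit.
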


\subsection{Anti-conservativeness of FDR control}\label{subsec:FDR-analysis}

The BC-type procedure $\mathcal{R}_{bc}$ given in \eqref{dec:BC} can be conservative. However, as signal strength increases, BC can adaptively attain the nominal FDR level -- a phenomenon initially noted by \cite{barber2015controlling}; see also Appendix \ref{sec:FDR_achieve_alpha} for an illustration. This section formally establishes this asymptotic attainment theory, tailored for conformal setups.

Consider the paired conformal \(p\)-values \(\{(p_j\equiv p(X_j), \tilde{p}_j\equiv p(\tilde{X}_j)): j \in \mathcal{D}^{\text{test}}\}\) constructed via \eqref{eq:conformal_p}, where the score function \(s(\cdot)\) is assumed continuous.  Let \(\pmb{w} = (w_j: j \in\mathcal{D}^{\rm test})\) denote a generic sequence of p-value weights, and define the weighted \(p\)-values \((V_j \equiv p_j/w_j,\; \tilde{V}_j \equiv \tilde{p}_j/w_j)\) for each \(j \in \mathcal{D}^{\mathrm{test}}\). Consider a class of thresholding rules \(\pmb{\delta^w}(t) = \{\delta^{w_j}(t): j \in \mathcal{D}^{\rm test}\} \in\{0,1\}^m\), where \(t > 0\) is a threshold and $\delta^{w_j}(t) = \mathbb{I}(p_j/w_j \leq t, p_j/w_j < \tilde{p}_j/w_j)$. In addition, define the marginal FDR of $\pmb{\delta^w}(t)$ and the corresponding oracle threshold: 
{\footnotesize\begin{equation}\label{eq:oracle_threshold} 
	Q(t,\pmb{w}) = \frac{ \mathbb{E}\left[\sum_{j\in\mathcal{D}^{\rm test}} \II(Y_j=0,\delta^{w_j}(t)=1) \right] }{\mathbb{E}\left[\sum_{j\in\mathcal{D}^{\rm test}} \II(\delta^{w_j}(t)=1) \right]},\quad
	t^\alpha_{\text{OR}}(\pmb{w}) = \sup\left\{t \in (0,1): Q(t, \pmb{w}) \leq \alpha\right\}.
\end{equation}}

Asymptotic FDR theory generally requires some form of weak dependence. Let
$R_j(t) = \mathbb{I}(V_j \leq t,\; V_j < \tilde{V}_j)$ and $ \tilde{R}_j(t) = \mathbb{I}(\tilde{V}_j \leq t,\; \tilde{V}_j < V_j).$ To lay the groundwork for our analysis, we adopt the following condition to characterize weak dependence:
\begin{equation}\label{eq:cov}
	\dfrac{1}{m^2}\sum\limits_{\substack{i<j,\\i,j\in\mathcal{D}^{\rm test}}} \text{Cov}\left[R_i(t),R_j(t)\right]=o(1),\quad
	\dfrac{1}{m^2}\sum\limits_{\substack{i<j,\\i,j\in\mathcal{D}^{\rm test}}} \text{Cov}\left[\tilde{R}_i(t),\tilde{R}_j(t)\right]=o(1).
\end{equation}
Denote the numbers of test inliers, test outliers, and calibration data as $m_0$, $m_1$, and $N$, respectively. The lemma below shows that \((V_j, \tilde{V}_j)\) are weakly dependent. 

\begin{lemma}\label{lem:cov}
Suppose condition \eqref{eq:data_exch_joint} holds and there exist $\epsilon_1,\epsilon_N\in(0,1)$, such that $m_1\asymp m^{1-\epsilon_1}$ and $N\asymp m^{1+\epsilon_N}$\footnote{For positive sequences $a(m)$ and $b(m)$, we write $a(m)\asymp b(m)$ if there exist $C_1,C_2>0$ such that $a(m)\leq C_1\cdot b(m)$ and $b(m)\leq C_2\cdot a(m)$ for all $m\geq1$.}. Then $\{(V_j,\tilde{V}_j):j\in\mathcal{D}^{\rm test}\}$ satisfy \eqref{eq:cov}.
\end{lemma}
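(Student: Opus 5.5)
The plan is to decompose the covariance sum according to whether $i,j$ are inliers or outliers, and to reduce everything to the dependence induced by the shared calibration set $\mathcal{D}_0^{\rm cal}$. In the setting of this section the weights $w_j$ are treated as fixed (a generic sequence) and $s(\cdot)$ is continuous, so ties occur with probability zero. We condition on the score function, i.e. on $\mathbf{O}^{\rm tr}$ and the other training inputs. Then $R_j(t)$ is a deterministic function of $(s(X_j), s(\tilde X_j))$ and of the empirical distribution of the calibration scores $\{s(X_i):i\in\mathcal{D}_0^{\rm cal}\}$. By \eqref{eq:data_exch_joint}, conditionally on the calibration scores, the pairs $(X_j,\tilde X_j)$ are conditionally i.i.d. across inliers; the representation below takes them as independent (mirrors i.i.d. null, test points independent). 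The same holds for outliers given $\mathbf{S}$. Conditional on the calibration set, therefore, $R_i(t)$ and $R_j(t)$ are independent for $i\neq j$. The law of total covariance gives
\[
\text{Cov}[R_i(t),R_j(t)] = \text{Cov}\bigl[\mathbb{E}(R_i(t)\mid \mathcal{C}), \mathbb{E}(R_j(t)\mid \mathcal{C})\bigr],
\]
where $\mathcal{C}$ denotes the calibration scores, so the task reduces to controlling the fluctuation of these conditional means.

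\textbf{Step 1 (inlier--inlier pairs, the bulk).} For $i\in\mathcal{H}_0$, write $\mathbb{E}(R_i(t)\mid\mathcal{C})=\phi_i(\hat F_N)$. Here $\hat F_N$ is the empirical CDF of the $N$ calibration scores, and $\phi_i$ is a bounded functional: it is the probability, under the null score law $F_0$, that $\hat F_N$-based p-values satisfy $p/w_i\le t$ and $p<\tilde p$. Write $U_k=F_0(s(X_k))$, so that the calibration uniforms determine $p$ via their order statistics. Then $\phi_i$ depends on $\hat F_N$ only through the quantities $\hat F_N^{-1}(\cdot)$ evaluated at grid points. Using the DKW inequality, $\sup|\hat F_N-F_0|=O_p(N^{-1/2})$, so $\phi_i(\hat F_N)-\phi_i(F_0)$ is small. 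More concretely, with $F_0$ uniform, $p(x)\approx$ the uniform rank, and $\mathbb{E}(R_i\mid\mathcal C)$ is a Lipschitz function of the calibration quantiles near level $w_it$, up to a discreteness error of $O(1/N)$. Hence $\text{Var}(\mathbb{E}(R_i\mid\mathcal C))=O(N^{-1})+O(N^{-2})\to0$. By Cauchy--Schwarz each inlier pair covariance is $O(N^{-1})$ uniformly in $i,j$, so
\[
\frac{1}{m^2}\sum_{i<j\in\mathcal H_0}\text{Cov}=O(N^{-1})=O(m^{-1-\epsilon_N})=o(1).
\]
The same bound holds for $\tilde R$.

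\textbf{Step 2 (pairs involving outliers).} All covariances are bounded by $1/4$. There are at most $m_1 m$ pairs with at least one outlier, so their total contribution is $O(m_1 m/m^2)=O(m^{-\epsilon_1})=o(1)$. This uses only $m_1\asymp m^{1-\epsilon_1}$.

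\textbf{Main obstacle.} The delicate point is Step 1. We must justify rigorously that $\mathbb{E}(R_i(t)\mid\mathcal C)$ concentrates, uniformly over arbitrary weights $w_i$, at rate $N^{-1/2}$. The difficulty is that $R_i$ involves both $p_i\le w_it$ and the comparison $p_i<\tilde p_i$, and both depend on the shared calibration ranks. I would handle this in two stages. First, reduce to uniforms via the probability integral transform, using continuity of $s$. Second, write $\mathbb{E}(R_i\mid\mathcal C)=\int \mathbb{I}(\hat G(u)\le w_it)\,(1-\hat G\text{-mass above})\,du$, where $\hat G(u)=(1+N\hat F_N(u))/(N+1)$. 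Bounding $|\hat G-\mathrm{id}|_\infty$ by DKW then gives the variance bound with constants independent of $w_i$. If the uniform control turns out awkward, a fallback suffices: even a crude bound $\text{Var}=o(1)$ uniform in $i$ already yields \eqref{eq:cov}, since we only need $o(1)$ after dividing by $m^2$. Note also that the rate condition $N\asymp m^{1+\epsilon_N}$ is stronger than needed for this step.
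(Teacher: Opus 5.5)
Your proposal is correct in outline, but it handles the only nontrivial part, the null--null block, by a different route from the paper. The pairs involving an outlier are treated identically in both: the trivial bound gives $O(m_1m/m^2)=O(m^{-\epsilon_1})$.

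\textbf{The two routes.} The paper never conditions on the calibration set. It uses that, with tie-free scores, the rank permutation of the $N+4$ exchangeable scores (the calibration scores plus $s(X_i),s(\tilde X_i),s(X_j),s(\tilde X_j)$) is uniform. From this it writes down the exact joint law of the four null conformal p-values and expands $\text{Cov}[R_i(t),R_j(t)]$ in closed form to get $O(N^{-1})$. It does this separately for $tw_i,tw_j$ lying in the same cell $[l/(N+1),(l+1)/(N+1))$ and in different cells. You instead apply the law of total covariance given the calibration scores $\mathcal C$ and bound $\text{Var}\,\mathbb E(R_i(t)\mid\mathcal C)$.

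\textbf{What each buys.} Your route is shorter and avoids heavy, error-prone algebra. It also makes clear that only $N\to\infty$ and $m_1=o(m)$ are actually used. The paper's route is exact at finite $N$ and needs nothing beyond exchangeability.

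\textbf{A step that needs repair.} The conditional independence of $R_i(t)$ and $R_j(t)$ given $\mathcal C$ does not follow from \eqref{eq:data_exch_joint}. Exchangeability does not make the remaining points conditionally i.i.d.\ given a subset; sampling without replacement is a counterexample. Moreover, for a PUC score built from $(\mathbf X,\tilde{\mathbf X},\mathbf X_0^{\rm cal})$, conditioning on the score function conditions on a symmetric function of all these points, after which the pairs are no longer independent given $\mathcal C$.

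\textbf{The fix.} Use the rank observation underlying the paper's computation. For fixed weights and null $i\neq j$, the pair $(R_i(t),R_j(t))$ depends only on where the four scores fall among the calibration scores. Under \eqref{eq:s_function_permu} and \eqref{eq:data_exch_joint} with no ties, this configuration has the same law as when all $N+4$ scores are i.i.d.\ $\mathrm{Uniform}(0,1)$. You may therefore compute the covariance in that model, where your conditioning is legitimate.

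\textbf{Your ``main obstacle'' then disappears, without DKW.} Set $U_{(0)}=0$ and let $K\le N-1$ be the deterministic index determined by $w_it$. Then
\[
\mathbb E\bigl(R_i(t)\mid\mathcal C\bigr)=\sum_{k=0}^{K}\bigl(U_{(k+1)}-U_{(k)}\bigr)\bigl(1-U_{(k+1)}\bigr)=H\bigl(U_{(K+1)}\bigr)-\frac12\sum_{k=0}^{K}\bigl(U_{(k+1)}-U_{(k)}\bigr)^2,
\]
where $H(x)=x-x^2/2$ is $1$-Lipschitz on $[0,1]$. Hence
\[
\text{Var}\,\mathbb E\bigl(R_i(t)\mid\mathcal C\bigr)\le 2\,\text{Var}\bigl(U_{(K+1)}\bigr)+O(N^{-2})\le \frac{1}{2(N+2)}+O(N^{-2}),
\]
uniformly in $w_i$. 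Cauchy--Schwarz then gives the $O(N^{-1})$ covariance bound.
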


Next, our asymptotic theory requires the following two regularity conditions.

\begin{assumption}\label{ass:sep}
	\(\dfrac{1}{m}\sum_{j\in\mathcal{D}^{\rm test}} \mathbb{P}\left(\tilde{V}_j < V_j \mid  Y_j = 1\right)=o(1)\).
\end{assumption}

\begin{remark}
This assumption is mild, essentially requiring only that a subset of signals can be reliably separated from noise -- a prerequisite for any meaningful FDR analysis with vanishingly small $\alpha$ (cf. \citealp{TonyCaioptimal2017}). Under the sparse, high‑dimensional settings commonly studied (e.g., \citealp{donoho2004higher, meinshausen2006estimating, arias2017distribution}), Assumption~\ref{ass:sep} is satisfied by the weighted conformal p‑values $(V_j,\tilde{V}_j)$; detailed justifications are provided in Appendix~\ref{sec:proof_sep_ass}.
\end{remark}

The second assumption, implied by the condition in Equation (7) of \cite{storey2004strong}, is also needed in our theoretical analysis.

\begin{assumption}\label{ass:ER=Om}
	$\mathbb{E}\left[\sum_{j\in\mathcal{D}^{\rm test}}\mathbb{I}(V_j\leq t^\alpha_{\text{OR}}(\pmb{w}) ,V_j<\tilde{V}_j)\right]\asymp m$.
\end{assumption}

Finally, combining the results in Proposition \ref{prop:equiv_SCQ_BC} and Lemma \ref{lem:cov}, we show in the next theorem that the FDR of SCQ asymptotically attains the nominal level $\alpha$.

\begin{theorem}\label{thm:asymptotic-FDR}
	Suppose we have weighted p-values $\{(V_j,\tilde{V}_j):j\in\mathcal{D}^{\rm test}\}$ obtained from the SCQ procedure in Section \ref{subsec:SCQ}. Consider the rejection set $\mathcal{R}_{scq}$ defined in \eqref{dec:scq}. Then, under Assumptions \ref{ass:sep}--\ref{ass:ER=Om} and the conditions in Lemma \ref{lem:cov}, we have 
$
\lim_{m \to \infty}  \mathbb{E}\left[ \dfrac{|\mathcal{R}_{scq} \cap \mathcal{H}_0|}{|\mathcal{R}_{scq}|\vee 1} \right] = \alpha.
$
\end{theorem}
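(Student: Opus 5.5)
By Proposition~\ref{prop:equiv_SCQ_BC}, $\mathcal{R}_{scq}=\mathcal{R}_{bc}$, so it suffices to show that the FDR of the BC-type rule \eqref{dec:BC}, with data-driven threshold $\tau$ from \eqref{eq:tau}, converges to $\alpha$. The plan is to compare $\tau$ with the oracle threshold $t^\alpha_{\text{OR}}(\pmb{w})$ in \eqref{eq:oracle_threshold}, which is defined through the marginal FDR $Q(t,\pmb{w})$.

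\textbf{Step 1: uniform laws of large numbers.} Write $R(t)=\sum_j R_j(t)$, $\tilde R(t)=\sum_j \tilde R_j(t)$ and $R_0(t)=\sum_{j\in\mathcal{H}_0}R_j(t)$. Lemma~\ref{lem:cov} gives the covariance bound \eqref{eq:cov}. A Chebyshev argument then yields pointwise $m^{-1}|R(t)-\mathbb{E}R(t)|\to0$ in probability, and likewise for $\tilde R$ and $R_0$. These counts are monotone in $t$, so I would upgrade to uniform convergence over $t$ in a neighbourhood of $t^\alpha_{\text{OR}}$ by a bracketing or grid argument, as in the Glivenko--Cantelli proof. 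The grid step needs continuity of $t\mapsto m^{-1}\mathbb{E}R(t)$, which I would take from continuity of $s(\cdot)$ and the assumption $N\asymp m^{1+\epsilon_N}$: calibration p-values are then asymptotically continuous on a fine lattice.

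\textbf{Step 2: the mirror estimates the null count.} Pairwise exchangeability \eqref{eq:pwex-scores} (Proposition~\ref{prop:pairwise_scores}) gives $\mathbb{E}R_j(t)=\mathbb{E}\tilde R_j(t)$ for each $j\in\mathcal{H}_0$. Assumption~\ref{ass:sep} shows that the non-null contribution to $\tilde R(t)$ is $o(m)$. Hence $\mathbb{E}\tilde R(t)=\mathbb{E}R_0(t)+o(m)$. Combining this with Step~1 gives
\[
\sup_t \bigl|H(t)-\mathbb{E}R_0(t)/\mathbb{E}R(t)\bigr| = \sup_t |H(t)-Q(t,\pmb{w})| \to 0
\]
in probability, over the region where $\mathbb{E}R(t)\gtrsim m$. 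Assumption~\ref{ass:ER=Om} guarantees that this region includes a neighbourhood of $t^\alpha_{\text{OR}}$. The ``$+1$'' in the numerator of $H(t)$ is $O(1/m)$ and so is negligible.

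\textbf{Step 3: threshold consistency and FDP convergence.} From Step~2, $\tau$ lies asymptotically within any neighbourhood of $t^\alpha_{\text{OR}}$ with probability tending to one. This requires $Q(\cdot,\pmb{w})$ to cross $\alpha$ at $t^\alpha_{\text{OR}}$, that is, to be strictly above $\alpha$ just to the right; I would argue this from the definition of the supremum together with continuity. Since $\tau$ is tight around $t^\alpha_{\text{OR}}$, the uniform laws from Step~1 give
\[
\mathrm{FDP}=R_0(\tau)/(R(\tau)\vee1)=Q(t^\alpha_{\text{OR}},\pmb{w})+o_p(1)=\alpha+o_p(1).
\]
The FDP is bounded by $1$, so dominated convergence turns this into convergence of its expectation, which is the FDR. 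Continuity of $Q$ gives $Q(t^\alpha_{\text{OR}},\pmb{w})=\alpha$.

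\textbf{Main obstacle.} The hard part is Step~3: a random, data-dependent $\tau$ is plugged into the counting processes, and the limit of $Q$ at the oracle point must be exactly $\alpha$ rather than strictly less. This needs uniform convergence plus a local strict-crossing or continuity property of $Q$. I would try to derive that property from the continuity of the scores and the dense calibration grid, treating any plateau of $Q$ by a monotonicity sandwich $Q(t^\alpha_{\text{OR}}\pm\eta)$.
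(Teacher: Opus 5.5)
Your skeleton is the paper's. You reduce to $\mathcal{R}_{bc}$ by Proposition~\ref{prop:equiv_SCQ_BC}, get laws of large numbers from \eqref{eq:cov}, use \eqref{eq:pwex-scores} and Assumption~\ref{ass:sep} so that the mirror count $\tilde R(t)$ tracks $R_0(t)$ up to $o(m)$, and compare $\tau$ with the oracle threshold. The execution differs. The paper works with the linearized criteria $\bar Q(t)=m^{-1}\mathbb{E}R_0(t)-\alpha m^{-1}\mathbb{E}R(t)$ and $\hat Q(t)=m^{-1}\tilde R(t)-\alpha m^{-1}R(t)$, the latter interpolated between the order statistics of $\min(V_j,\tilde V_j)$. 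It proves only pointwise convergence. It then argues that the data-driven and oracle thresholds fall in the same gap between consecutive order statistics, so the two rejection sets coincide with probability tending to one. Finally it imports $\mathrm{FDR}=\mathrm{mFDR}+o(1)$ for the oracle rule from Proposition~7 of \cite{tony2019covariate}. You instead prove uniform convergence of the monotone counting processes and show $\mathrm{FDP}(\tau)\to\alpha$ in probability directly, finishing by bounded convergence. This avoids the external FDR/mFDR result and the exact-coincidence claim, which is much stronger than closeness of thresholds. Your Step~1 is sound: with $m$-dependent bracketing grids you only need the jumps of $t\mapsto m^{-1}\mathbb{E}R(t)$ to vanish, and that follows from null atoms of mass $O(1/N)$ and $m_1=o(m)$.

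The step you flag is, however, a genuine gap, and continuity of $s(\cdot)$ with a dense calibration lattice will not close it. Those facts make the jumps of $Q(\cdot,\pmb{w})$ vanish. With right-continuity and the definition of the supremum, that yields $Q(t^\alpha_{\rm OR},\pmb{w})=\alpha+o(1)$. They give neither a margin $Q(t^\alpha_{\rm OR}+\eta,\pmb{w})\ge\alpha+c(\eta)$ nor equicontinuity of $m^{-1}\mathbb{E}R(\cdot)$ near $t^\alpha_{\rm OR}$, uniformly in $m$. Without these, neither $\tau-t^\alpha_{\rm OR}\to0$ nor $\mathrm{FDP}(\tau)\to Q(t^\alpha_{\rm OR},\pmb{w})$ follows from Assumptions~\ref{ass:sep}--\ref{ass:ER=Om}. (The paper's inversion of $\hat Q^c$, with a $\gamma$ that must not depend on $m$, tacitly assumes the same.)

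You can avoid localizing $\tau$. Every candidate above $\tau$ has $H>\alpha$, and $H$ can rise at the next candidate only through its numerator. Hence $\alpha-a/R(\tau)<H(\tau)\le\alpha$, where $a$ is the tie multiplicity there. Combine this with $|H(t)-\mathrm{FDP}(t)|\le\{1+|\tilde R(t)-R_0(t)|\}/(R(t)\vee1)$ and with $\sup_t|\tilde R(t)-R_0(t)|=o_p(m)$ from your Steps~1--2. This gives $\mathrm{FDP}(\tau)=\alpha+o_p(1)$ on $\{R(\tau)\gtrsim m\}$. Guaranteeing $R(\tau)\gtrsim m$ still needs an extra margin condition, e.g.\ some $t$ with $\mathbb{E}R(t)\gtrsim m$ and $Q(t,\pmb{w})\le\alpha-\delta$.

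Be aware, finally, that the hypotheses as stated clash, and both your Steps~2--3 and the paper inherit this. Lemma~\ref{lem:cov} imposes $m_1\asymp m^{1-\epsilon_1}=o(m)$. Since true discoveries never exceed $m_1$, $Q(t,\pmb{w})\ge1-m_1/\mathbb{E}R(t)\to1$ wherever $\mathbb{E}R(t)\gtrsim m$. So Assumption~\ref{ass:ER=Om} cannot coexist with $Q(t^\alpha_{\rm OR},\pmb{w})\approx\alpha$ for fixed $\alpha<1$. A nonvacuous version of either argument needs $m_1$ of order $m$, with \eqref{eq:cov} established directly rather than through Lemma~\ref{lem:cov}.
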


\subsection{Asymptotic power analysis}
\label{subsec:power-analysis}

We investigate the effectiveness of p-value weighting through a novel power analysis. Existing theory on p-value weighting (e.g., \citealp{genovese2006false, cai2022laws, liang2024integrative}), developed under the BH framework, is not applicable to BC-type algorithms. Furthermore, prior work based on conformal p-values \citep{liang2024integrative} has considered only oracle thresholds (defined in \eqref{eq:oracle_threshold}), while theories for data-driven thresholds \eqref{dec:BC} remain undeveloped. We address these gaps by examining weight informativeness within the BC framework and developing theory for both oracle and data-driven setups. 

Suppose we have constructed \(m\) pairs of weighted p-values $\{(V_j \equiv p_j/w_j,\; \tilde{V}_j \equiv \tilde{p}_j/w_j):j\in\mathcal{D}^{\rm test}\}$ via \eqref{eq:conformal_p} and \eqref{eq:weighted_p}.
Let \(\eta_j = \mathbb{I}(p_j < \tilde{p}_j)\) for each $j\in\mathcal{D}^{\rm test}$. The following assumption indicates that an effective weighting strategy should, in general, prioritize the rejection of units that are more likely to be outliers in light of side information. 

\begin{assumption}\label{ass:weight_info}
	The weights \((w_j:j\in\mathcal{D}^{\rm test})\) satisfy:
	\begin{equation}\label{eq:weight_inequality}
		\frac{\sum_{j\in\mathcal{D}^{\rm test}} \mathbb{P}(\eta_j = 1,  Y_j = 0 \mid S_j)}{\sum_{j\in\mathcal{D}^{\rm test}} \mathbb{P}( Y_j = 0 \mid S_j) w_j}
		\cdot
		\dfrac{\sum_{j\in\mathcal{D}^{\rm test}} \mathbb{P}(\eta_j = 1,  Y_j = 1 \mid S_j)}{\sum_{j\in\mathcal{D}^{\rm test}} \mathbb{P}(\eta_j = 1,  Y_j = 1 \mid S_j) w_j^{-1}}
		\geq 1.
	\end{equation}
\end{assumption}

\begin{remark}
Assumption~\ref{ass:weight_info}, inspired by \citet{genovese2006false}, requires the weights to be positively aligned with signal enrichment revealed by the side information. 
Intuitively, larger weights should be assigned to regions where, conditional on \(S_j\), signals are more frequent or more likely to yield rejections. This condition arises naturally in many structured settings (cf. \citealp{li2019multiple, cai2022laws}): in grouped testing, it assigns larger weights to groups with higher signal proportions; in spatial testing, it upweights neighborhoods where signals are locally concentrated and downweights regions dominated by nulls.
\end{remark}	

The next assumption is a convexity-type regularity condition on the conditional alternative distribution of the conformal p-values. Its role is to ensure that re-ranking test units by informative weights does not reduce oracle power. This condition formalizes that principle and is analogous to shape restrictions commonly used in the theory of weighted multiple testing (cf. \citealp{hu2010false, cai2022laws,liang2024integrative}).

\begin{assumption}\label{ass:F_alt}
	\(\big\{F_{1,1,j}(t) \coloneqq \mathbb{P}(p_j \leq t \mid \eta_j = 1,  Y_j = 1, S_j) :j\in\mathcal{D}^{\rm test}\big\}\) satisfy:
	\begin{equation}
		\sum_{j\in\mathcal{D}^{\rm test}} a_j F_{1,1,j}(t/x_j) \geq \sum_{j\in\mathcal{D}^{\rm test}} a_j F_{1,1,j}\left(\frac{t \sum_{i=1}^m a_i}{\sum_{i=1}^m a_i x_i}\right),
	\end{equation}
	for any \(0 \leq a_j \leq 1\), \(\min_{1 \leq j \leq m} w^{*-1}_j \leq x_j \leq \max_{1 \leq j \leq m} w^{*-1}_j\), and \(t > 0\), where 
	\begin{equation}\label{eq:w_star}
		w^*_j = w_j \cdot  \dfrac{\mathbb{P}(\eta_j = 1 \mid  Y_j = 0, S_j)\sum_{i\in\mathcal{D}^{\rm test}} \mathbb{P}(\eta_i = 1,  Y_i = 0 \mid S_i)}{\sum_{i\in\mathcal{D}^{\rm test}} w_i \mathbb{P}(\eta_i = 1,  Y_i = 0 \mid S_i)}.
	\end{equation}
\end{assumption}

Our power analysis proceeds in two steps. First, under the oracle setup using the threshold in \eqref{eq:oracle_threshold}, we compare the power of weighted and unweighted schemes at the same FDR level, revealing the benefits of weighting for producing superior rankings. Second, we examine the more intricate data-driven setup, analyzing power gains of weighted BC-type methods using thresholds in \eqref{eq:tau}, a setting not previously addressed in the literature.

\noindent\textbf{1. Power Analysis for the Oracle Setup.} Let \(\pmb{\delta}_{\text{OR}}^{\pmb{w}} = (\delta_{\text{OR}}^{w_j}: j \in\mathcal{D}^{\rm test}) \in \{0, 1\}^m\) denote the oracle rule with weights \(\pmb{w}\) and threshold \(t_{\text{OR}} \equiv t^\alpha_{\text{OR}}(\pmb{w})\) from \eqref{eq:oracle_threshold}, where \(\delta_{\text{OR}}^{w_j} = \mathbb{I}(p_j/w_j \leq t_{\text{OR}}, p_j/w_j < \tilde{p}_j/w_j)\). Define \(\Psi_{\text{OR}}(\pmb{w}) = \mathbb{E}\left[\sum_{j\in\mathcal{D}^{\rm test}} \mathbb{I}( Y_j = 1, \delta_{\text{OR}}^{w_j} = 1)\right]\) as the expected number of true positives. Denote the structure-adaptive weights utilized in the SCQ procedure as $\pmb{w_s}$. The next theorem establishes the efficiency gain by weighting.

\begin{theorem}\label{thm:oracle_power}
	Consider the oracle rules \(\pmb{\delta}_{\text{OR}}^{\pmb{w_s}}\) and \(\pmb{\delta}_{\text{OR}}^{\pmb{1}}\). Suppose that the structure-adaptive weights \(\pmb{w_s}\) are swap invariant (cf. \eqref{eq:g_func_swap_inva}). Under Assumptions \ref{ass:weight_info}--\ref{ass:F_alt} and the exchangeability condition \eqref{eq:data_exch_joint}, if the score function \(s(\cdot)\) utilized in \eqref{eq:conformal_p} is continuous and satisfies \eqref{eq:s_function_permu}, then we have $\Psi_{\text{OR}}(\pmb{w_s}) \geq \Psi_{\text{OR}}(\pmb{1}).$
\end{theorem}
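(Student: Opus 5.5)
The approach follows Genovese et al.'s weighted-BH power argument, adapted to the mirror/BC setting. Write both the oracle FDR ratio $Q(t,\pmb w)$ and the power $\Psi$ in terms of conditional quantities given $S_j$, then compare the weighted and unweighted schemes at a common null-rejection budget. Since $s$ is continuous and \eqref{eq:data_exch_joint} holds, a null test point and its mirror are exchangeable given the calibration set and side information. Hence $\mathbb P(p_j\le u,\eta_j=1\mid Y_j=0,S_j)$ depends only on $u$ through a common function, which to first order is $u\,\mathbb P(\eta_j=1\mid Y_j=0,S_j)$; conformal p-values are uniform on the grid, and with $N\to\infty$ we treat them as continuous. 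Swap invariance of $\pmb{w_s}$ lets us condition on $\pmb{w_s}$ without disturbing this null behavior, as in Proposition~\ref{prop:pairwise_scores}. Note that $\delta^{w_j}(t)=\mathbb I(p_j\le tw_j,\eta_j=1)$ because the weight cancels in the comparison $V_j<\tilde V_j$. The expected number of false rejections is therefore $\sum_j \mathbb P(\eta_j=1,Y_j=0\mid S_j)\,t w_j$ (up to normalization), and the expected number of true rejections is $\sum_j \mathbb P(\eta_j=1,Y_j=1\mid S_j)F_{1,1,j}(t w_j)$.

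\textbf{Step 1: reduce to a fixed-null-budget comparison.} The map $t\mapsto Q(t,\pmb w)$ gives the oracle threshold $t_{\rm OR}(\pmb w)$ at which false and total rejections balance at $\alpha$. I would show that it suffices to compare true positives when the weighted and unweighted rules spend the same expected number of false rejections. To do this, choose $t'$ so that $t'\sum_j \mathbb P(\eta_j=1,Y_j=0\mid S_j)w_{s,j}= t_{\rm OR}(\pmb 1)\sum_j\mathbb P(\eta_j=1,Y_j=0\mid S_j)$. This normalization is exactly the one built into $w^*$ in \eqref{eq:w_star}. If the weighted rule at $t'$ has at least as many true positives as the unweighted oracle, then $Q(t',\pmb{w_s})\le\alpha$. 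This is because $Q=\text{FP}/(\text{FP}+\text{TP})$ is decreasing in TP at fixed FP. Hence $t_{\rm OR}(\pmb{w_s})\ge t'$, and monotonicity of the true positive count in $t$ gives $\Psi_{\rm OR}(\pmb{w_s})\ge\Psi_{\rm OR}(\pmb 1)$.

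\textbf{Step 2: true-positive comparison at equal budget.} Set $a_j\propto\mathbb P(\eta_j=1,Y_j=1\mid S_j)$, rescaled to lie in $[0,1]$, and $x_j=1/w^*_{j}$. The weighted true positive count is $\sum_j a_jF_{1,1,j}(t/x_j)$ with an appropriate $t$. Assumption~\ref{ass:F_alt} bounds this below by $\sum_j a_jF_{1,1,j}(t\sum a_i/\sum a_ix_i)$. Assumption~\ref{ass:weight_info} says that $\sum_i a_i/\sum_i a_i x_i$, combined with the null normalization factor, is at least $1$. So the common argument is at least the unweighted threshold $t_{\rm OR}(\pmb 1)$, and monotonicity of each $F_{1,1,j}$ completes the comparison.

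The main obstacle is bookkeeping in the reduction. The definition of $w^*$ mixes $\mathbb P(\eta_j=1\mid Y_j=0,S_j)$ with the normalization, and the two factors in \eqref{eq:weight_inequality} must be matched exactly to the argument produced by Assumption~\ref{ass:F_alt}. I would first verify the identity $t\,w^*_j\cdot(\text{const})=t w_{s,j}\,\mathbb P(\eta_j=1\mid Y_j=0,S_j)\cdots$ carefully, and handle the discreteness of conformal p-values through an $O(1/N)$ correction.
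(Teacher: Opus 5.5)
Your Step~2 is essentially the paper's inequality chain: Assumption~\ref{ass:F_alt} with $a_j=\mathbb P(\eta_j=1,Y_j=1\mid S_j)$ and $x_j=1/w_j^*$, followed by Assumption~\ref{ass:weight_info} and monotonicity of $F_{1,1,j}$. The closing monotonicity argument in Step~1 is also fine. The gap is on the null side of Step~1.

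You claim $\mathbb P(p_j\le u,\eta_j=1\mid Y_j=0,S_j)\approx u\,\mathbb P(\eta_j=1\mid Y_j=0,S_j)$, so that expected false rejections are linear in the effective thresholds $tw_j$ and your $t'$ equalizes them. This is not right. On $\{\eta_j=1\}$, $p_j$ is the smaller member of an exchangeable pair. In the continuous idealization this function is therefore $u-u^2/2$: its slope at the origin is $1$, not $\tfrac12$, and it is not linear. On the conformal grid it equals $\frac{l}{N+2}-\frac{l(l+1)}{2(N+1)(N+2)}$ for $u\in[\frac{l}{N+1},\frac{l+1}{N+1})$. In general you only have the sandwich $u\,\mathbb P(\eta_j=1\mid Y_j=0,S_j)\le\mathbb P(p_j\le u,\eta_j=1\mid Y_j=0,S_j)\le u$ for $u\in[\frac{1}{N+1},1]$; the lower bound is Lemma~\ref{lemma:F10i} and the upper bound is super-uniformity.

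Consequently $t'$ does not equalize expected false rejections. The implication ``weighted TP at $t'$ $\ge$ unweighted TP at $t_{\rm OR}(\pmb 1)$ $\Rightarrow$ $Q(t',\pmb{w_s})\le\alpha$'' needs an inequality between the two false-rejection counts, and your proposal never establishes one. Passing to $N\to\infty$ or adding an ``$O(1/N)$ correction'' cannot supply it. The theorem is an exact finite-$N$ inequality, and at thresholds of order $1/N$ the per-unit false-rejection probabilities are themselves $O(1/N)$.

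The paper avoids equal budgets altogether. Because $\mathbb P(\eta_j=1\mid Y_j=0,S_j)$ is common in $j$, $\pmb{w^*}\propto\pmb{w_s}$ and so $\Psi_{\rm OR}(\pmb{w_s})=\Psi_{\rm OR}(\pmb{w^*})$. It then compares $p_j/w_j^*$ with $p_j$ at the \emph{same} threshold $t\ge 1/(N+1)$. The weighted null mass is bounded above by super-uniformity: $\sum_j\mathbb P(p_j\le w_j^*t,\eta_j=1,Y_j=0\mid S_j)\le t\sum_j w_j^*\mathbb P(Y_j=0\mid S_j)$, which equals $t\sum_j\mathbb P(\eta_j=1,Y_j=0\mid S_j)$ by the normalization in \eqref{eq:w_star}. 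The unweighted null mass is bounded below by this same quantity via Lemma~\ref{lemma:F10i}. Together with $\Psi(t,\pmb{w^*})\ge\Psi(t,\pmb 1)$, this gives $Q(t,\pmb{w^*})\le Q(t,\pmb 1)$ and hence $t_{\rm OR}(\pmb 1)\le t_{\rm OR}(\pmb{w^*})$.

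The factor of about $\mathbb P(\eta_j=1\mid Y_j=0,S_j)\approx\tfrac12$ lost between these two crude bounds is exactly what the first factor in \eqref{eq:weight_inequality} pays for. That is why, in your accounting, Assumption~\ref{ass:weight_info} seemed to leave a spare factor of about $2$.

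If you want to keep your equal-budget route, replace linearity by concavity. On the grid, the increments of $u\mapsto\mathbb P(p_j\le u,\eta_j=1\mid Y_j=0)$ are $\frac{N+1-l}{(N+1)(N+2)}$, which decrease in $l$. First lower $t_{\rm OR}(\pmb 1)$ to the grid point $\lfloor(N+1)t_{\rm OR}(\pmb 1)\rfloor/(N+1)$; this changes neither unweighted count. Then Jensen's inequality for the concave piecewise-linear interpolant shows that the weighted false-rejection count at $t'$ is at most the unweighted one. This route would even work with $\mathbb P(\eta_j=1,Y_j=0\mid S_j)$ replaced by $\mathbb P(Y_j=0\mid S_j)$ in the first numerator of \eqref{eq:weight_inequality}, which is a weaker condition. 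As written, however, neither argument is in your proposal.
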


\noindent\textbf{2. Power Analysis for the Data-Driven Setup.}
Let \(\pmb{\delta}_{\text{DD}}^{\pmb{w}} = (\delta_{\text{DD}}^{w_j}: j\in\mathcal{D}^{\rm test}) \in \{0, 1\}^m\) denote the data-driven rule applying the BC algorithm with weights \(\pmb{w}\) and threshold \(\tau \equiv \tau^\alpha_{\text{DD}}(\pmb{w})\) from \eqref{eq:tau}, where \(\delta_{\text{DD}}^{w_j} = \mathbb{I}(p_j/w_j \leq \tau, p_j/w_j < \tilde{p}_j/w_j)\). Define \(\Psi_{\text{DD}}(\pmb{w}) = \mathbb{E}\left[\sum_{j\in\mathcal{D}^{\rm test}} \mathbb{I}( Y_j = 1, \delta_{\text{DD}}^{w_j} = 1)\right]\) as the expected number of true positives of the data-driven BC method with weights $\pmb w$. Consider $t^\alpha_{\text{OR}}(\pmb{w})$ in \eqref{eq:oracle_threshold}. 
The following theorem shows that for generic weights \(\pmb{w}\), the data-driven BC algorithm $\pmb{\delta}_{\text{DD}}$ asymptotically achieves the power performance of $\pmb{\delta}_{\text{OR}}$.

\begin{theorem}\label{thm:BC_power}
	Suppose the generic weighted p-values $\{(p_j/w_j,\tilde{p}_j/w_j):j\in\mathcal{D}^{\rm test}\}$ satisfy pairwise exchangeability \eqref{eq:pwex-scores}. Under Assumptions \ref{ass:sep}--\ref{ass:ER=Om}, and the conditions in Lemma \ref{lem:cov}, we have 
	$
	\lim\limits_{m\to\infty}\dfrac{\Psi_{\text{DD}}(\pmb{w})}{\Psi_{\text{OR}}(\pmb{w})}= 1.
	$
\end{theorem}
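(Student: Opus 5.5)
The plan is to show that the data-driven threshold $\tau$ from \eqref{eq:tau} concentrates around the oracle threshold $t_{\text{OR}}\equiv t^\alpha_{\text{OR}}(\pmb{w})$, and then transfer this to the expected number of true positives. Throughout I write $R_j(t)$ and $\tilde R_j(t)$ as in \eqref{eq:cov}, together with the normalized processes
\[
\hat A(t)=\frac1m\sum_{j} R_j(t),\qquad \hat B(t)=\frac1m\sum_j \tilde R_j(t),\qquad A(t)=\mathbb E\hat A(t),\qquad B(t)=\mathbb E \hat B(t).
\]

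\textbf{Step 1: pointwise concentration.} By Lemma~\ref{lem:cov}, condition \eqref{eq:cov} holds. Chebyshev's inequality therefore gives $\hat A(t)-A(t)\to0$ and $\hat B(t)-B(t)\to 0$ in probability for each fixed $t$. Both processes are monotone in $t$, and their limits are continuous because $s(\cdot)$ is continuous, so the conformal p-values have no ties asymptotically. A Glivenko--Cantelli style bracketing argument over a finite grid then upgrades this to uniform convergence on compact subsets of $(0,\infty)$.

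\textbf{Step 2: identify the ratio.} Pairwise exchangeability \eqref{eq:pwex-scores} gives, for null $j$, $\mathbb E\tilde R_j(t)=\mathbb E[\mathbb I(Y_j=0)R_j(t)]$. Assumption~\ref{ass:sep} bounds the non-null contribution to $B(t)$ by $m^{-1}\sum_j\mathbb P(\tilde V_j<V_j\mid Y_j=1)=o(1)$. Hence $B(t)=Q(t,\pmb w)A(t)+o(1)$, and
\[
H(t)=\frac{1/m+\hat B(t)}{\hat A(t)}\;\to\; Q(t,\pmb w)
\]
uniformly on the region where $A(t)$ is bounded away from zero. Assumption~\ref{ass:ER=Om} guarantees that $A(t_{\text{OR}})\ge c>0$, so this region contains a neighborhood of $t_{\text{OR}}$.

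\textbf{Step 3: threshold consistency.} This is the main obstacle. The aim is to show $\mathbb P(\tau\ge t_{\text{OR}}-\epsilon)\to1$ and $\mathbb P(\tau\le t_{\text{OR}}+\epsilon)\to1$ for every $\epsilon>0$. The difficulty is that $Q(t,\pmb w)$ need not be strictly monotone, so $Q(t_{\text{OR}}\pm\epsilon)$ may equal $\alpha$ exactly. I would handle this in two ways.
\begin{itemize}
\item For the lower bound, I would use the definition of $t_{\text{OR}}$ as a supremum. If needed, I would replace $t_{\text{OR}}-\epsilon$ by a nearby point $t'$ with $Q(t')<\alpha$. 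If instead $Q\equiv\alpha$ on an interval, the gap in true positives over that interval is controlled by continuity of $A$.
\item For the upper bound, I would not require exact threshold consistency. It suffices to show that $\hat A(\tau)-A(t_{\text{OR}})\to0$ in probability, using sandwiching from the monotonicity of $\hat A$ and of the true-positive count.
\end{itemize}
One also has to rule out $\tau$ landing at spuriously large $t$ where $H(t)$ dips below $\alpha$ by chance. Uniform convergence on $[t_{\text{OR}}+\epsilon,\,\max_j 1/w_j]$ handles this, provided $Q$ exceeds $\alpha$ there; if it does not, that again falls under the flat-region argument.

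\textbf{Step 4: power transfer.} Define $\hat T(t)=m^{-1}\sum_j\mathbb I(Y_j=1)R_j(t)$. This process also concentrates uniformly, since its covariances are dominated by those in \eqref{eq:cov}. By Step~3 and monotonicity, $\hat T(\tau)-\hat T(t_{\text{OR}})\to0$ in probability, and $\hat T$ is bounded by $1$, so dominated convergence gives $\Psi_{\text{DD}}(\pmb w)/m-\Psi_{\text{OR}}(\pmb w)/m\to0$. Finally, $\Psi_{\text{OR}}(\pmb w)/m$ is bounded away from zero: Assumption~\ref{ass:ER=Om} gives $A(t_{\text{OR}})\asymp 1$, and $Q(t_{\text{OR}},\pmb w)\le\alpha<1$ implies the true-positive fraction is at least $(1-\alpha)A(t_{\text{OR}})$. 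Dividing the two quantities then gives the stated ratio limit $1$.
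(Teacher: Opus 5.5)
Your route is the paper's: a weak law from \eqref{eq:cov}, identification of the limit of $H$ with $Q(\cdot,\pmb w)$ via pairwise exchangeability and Assumption~\ref{ass:sep}, threshold consistency, and transfer to true positives normalized by $m$ via Assumption~\ref{ass:ER=Om}. The paper imports $\hat t-t^*\to0$ in probability from the proof of Theorem~\ref{thm:asymptotic-FDR}. It then bounds the power gap by $\mathbb{P}(\hat t\neq t^*)$, which does not follow from convergence in probability; your monotone sandwich is the more careful version of that step.

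The genuine gap is in your Step~3, and your remedy does not close it. Take a stretch $[a,t_{\rm OR}]$ where $Q(\cdot,\pmb w)\equiv\alpha$, or stays within $o(m^{-1/2})$ of $\alpha$. There the mean of $\hat B-\alpha\hat A$ is essentially the nonnegative $o(1)$ remainder from Assumption~\ref{ass:sep}. Depending on how that remainder compares with the $m^{-1/2}$ fluctuations, $\tau$ either falls to the left end $a$ or wanders over the stretch with non-vanishing probability. Yet the expected true positives on the stretch are $(1-\alpha)\{A(t_{\rm OR})-A(a)\}m$, which is of order $m$ whenever $A$ rises by a fixed amount there. Continuity of $A$ says nothing about this.

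The upper bound has the same problem: pointwise $Q(t,\pmb w)>\alpha$ beyond $t_{\rm OR}+\epsilon$ does not exclude a spuriously large $\tau$. You need a margin $\delta(\epsilon)>0$ uniform in $m$. Step~4 also needs $A$ and the true-positive curve $m^{-1}\mathbb{E}\sum_j\II(Y_j=1)R_j(t)$ to be equicontinuous at $t_{\rm OR}$. Your Step~1 ``continuous limits'' do not supply this: nothing converges to a fixed function, and conformal p-values live on a grid where non-null ones may pile up on one point.

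None of this follows from the stated assumptions. You must impose a nondegenerate-crossing condition, for example $Q(t,\pmb w)\ge\alpha+\delta(\epsilon)$ for $t\ge t_{\rm OR}+\epsilon$ and $Q(t',\pmb w)\le\alpha-\delta(\epsilon)$ for some $t'\in[t_{\rm OR}-\epsilon,t_{\rm OR}]$, together with that equicontinuity. Under these conditions, and in a regime where the remaining hypotheses are compatible (see below), your Steps~1--4 go through. Step~(b) of the proof of Theorem~\ref{thm:asymptotic-FDR} silently uses the same thing: inverting $\hat Q^c$ near $t^*$ with a $\gamma$ independent of $m$ presumes a slope bounded away from zero.

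Your Step~4 also exposes a more basic defect in the hypotheses. Lemma~\ref{lem:cov} assumes $m_1\asymp m^{1-\epsilon_1}$, so $\Psi_{\rm OR}(\pmb w)\le m_1=o(m)$. But from Assumption~\ref{ass:ER=Om} you deduce $\Psi_{\rm OR}(\pmb w)\ge(1-\alpha)A(t_{\rm OR})\,m\gtrsim m$, exactly as the paper does via $\Psi_{\pmb{\delta^*}}=(1-\alpha)\mathbb{E}[\sum_j\delta_j^*]\asymp m$. If instead $Q(t_{\rm OR},\pmb w)>\alpha$, let $t\uparrow t_{\rm OR}$: the jumps of $A$ are $O(1/N+m_1/m)=o(1)$, so the same bound holds. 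Hence for fixed $\alpha<1$ the hypotheses cannot hold simultaneously, and the theorem as stated is vacuous.

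A substantive version has to choose a regime.
\begin{itemize}
\item Dense regime: drop $m_1=o(m)$, verify \eqref{eq:cov} directly, and add the crossing condition. The direct check is needed because the proof of Lemma~\ref{lem:cov} uses $m_1=o(m)$ to discard the non-null covariance terms.
\item Sparse regime: normalize by $m_1$ instead of $m$. This requires $\sum_jR_j(t)$ to concentrate with error $o_p(m_1)$, which \eqref{eq:cov} does not give, since it only controls the error at scale $o_p(m)$.
\end{itemize}
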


Theorem \ref{thm:DD_power} shows the benefits of informative weighting in data-driven setups.

\begin{theorem}\label{thm:DD_power}	
	Consider two sets of conformity scores employing structure-adaptive weights \(\pmb{w_s}\) and constant weights \(\pmb{1}\), respectively. If both sets of scores satisfy condition \eqref{eq:pwex-scores} and \(s(\cdot)\) utilized in \eqref{eq:conformal_p} is continuous and satisfies \eqref{eq:s_function_permu}, then under Condition \eqref{eq:data_exch_joint}, Assumptions \ref{ass:sep}--\ref{ass:F_alt}, and conditions in Lemma \ref{lem:cov}, we have
	\(
	\lim\limits_{m \to \infty}\dfrac{\Psi_{\text{DD}}(\pmb{w_s})}{\Psi_{\text{DD}}(\pmb{1})} \geq 1.
	\)
\end{theorem}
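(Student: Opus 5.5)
The natural route is to chain the two earlier results. Theorem~\ref{thm:BC_power} converts data-driven power into oracle power for each weighting scheme, and Theorem~\ref{thm:oracle_power} compares the two oracle powers. Concretely, I would write
\[
\frac{\Psi_{\text{DD}}(\pmb{w_s})}{\Psi_{\text{DD}}(\pmb{1})}
=\frac{\Psi_{\text{DD}}(\pmb{w_s})}{\Psi_{\text{OR}}(\pmb{w_s})}\cdot
\frac{\Psi_{\text{OR}}(\pmb{w_s})}{\Psi_{\text{OR}}(\pmb{1})}\cdot
\frac{\Psi_{\text{OR}}(\pmb{1})}{\Psi_{\text{DD}}(\pmb{1})}.
\]
I would then show that the first and third factors tend to $1$ and that the middle factor is at least $1$ for every $m$. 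Taking the limit (or the $\liminf$, if the limit is not known a priori to exist) gives the claim.

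\textbf{Outer factors.} Both score sets satisfy \eqref{eq:pwex-scores} by hypothesis, and Assumptions \ref{ass:sep}--\ref{ass:ER=Om} together with the conditions of Lemma~\ref{lem:cov} are assumed. Theorem~\ref{thm:BC_power} therefore applies separately with $\pmb{w}=\pmb{w_s}$ and with $\pmb{w}=\pmb{1}$. It gives $\Psi_{\text{DD}}(\pmb{w_s})/\Psi_{\text{OR}}(\pmb{w_s})\to1$ and $\Psi_{\text{DD}}(\pmb{1})/\Psi_{\text{OR}}(\pmb{1})\to1$.

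One subtlety needs checking. Assumptions \ref{ass:sep} and \ref{ass:ER=Om} are stated for generic $\pmb{w}$, and I would interpret them as holding for both weight vectors. For $\pmb{1}$ this should be harmless: Assumption~\ref{ass:sep} with unit weights is the unweighted separation condition, and Lemma~\ref{lem:cov} covers constant weights as a special case.

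\textbf{Middle factor.} This is exactly Theorem~\ref{thm:oracle_power}, whose hypotheses are in place:
\begin{itemize}
\item $\pmb{w_s}$ is swap invariant, since it is the SCQ weight built in the class $\mathcal G$;
\item $s(\cdot)$ is continuous and satisfies \eqref{eq:s_function_permu};
\item condition \eqref{eq:data_exch_joint} holds;
\item Assumptions \ref{ass:weight_info}--\ref{ass:F_alt} hold.
\end{itemize}
Hence $\Psi_{\text{OR}}(\pmb{w_s})\ge\Psi_{\text{OR}}(\pmb{1})$.

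\textbf{Positivity of denominators.} The ratios must be well defined. Assumption~\ref{ass:ER=Om} gives an expected number of oracle rejections of order $m$. Combined with the definition of $t^\alpha_{\text{OR}}$, under which the marginal FDR is at most $\alpha<1$, the expected number of true oracle rejections is at least $(1-\alpha)$ times the expected number of rejections, so $\Psi_{\text{OR}}(\pmb{1})\gtrsim m>0$. Then $\Psi_{\text{DD}}(\pmb{1})>0$ for large $m$ by the ratio limit.

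\textbf{Conclusion and main obstacle.} Multiplying the three factors gives $\liminf_m \Psi_{\text{DD}}(\pmb{w_s})/\Psi_{\text{DD}}(\pmb{1})\ge 1\cdot1\cdot1$, which is the stated inequality read as a $\liminf$. The main obstacle is confirming that every hypothesis of Theorems~\ref{thm:BC_power} and \ref{thm:oracle_power} is genuinely available simultaneously for both weightings. In particular, the swap-invariant data-dependent weights $\pmb{w_s}$ must still fit the generic-weight framework of Theorem~\ref{thm:BC_power}. If they do not, I would first condition on the swap-invariant $\sigma$-field that determines $\pmb{w_s}$, apply the theorem conditionally with the weights treated as fixed, and then integrate.
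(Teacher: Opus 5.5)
Your proposal is correct and matches the paper's proof: the paper uses the same three-factor decomposition, bounds the middle factor by Theorem~\ref{thm:oracle_power}, and sends both outer factors to $1$ via Theorem~\ref{thm:BC_power}. Reading the limit as a $\liminf$ and checking that the denominators are eventually positive are sound tightenings that the paper leaves implicit.
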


\section{Simulation Studies}\label{sec:simulation}

This section investigates the numerical performance of our proposed methods for structured OOD testing. Section~\ref{sec:simu setup} outlines the basic setup, while Sections~\ref{sec:impact_density} through \ref{sec:impact_imbalance} assess the effects of feature dimensionality, distribution shifts, and data imbalance across different methods. Additional numerical experiments are provided in Appendices \ref{sec:P-TAMS_eff}--\ref{sec:FDR_achieve_alpha}.

\subsection{Simulation setup}\label{sec:simu setup} 

We generate test samples \(X_j \in \mathbb{R}^{p}\) from the following hierarchical model:
\begin{equation}\label{eq:hierarchical_model}
\Big(Y_j = 1 \mid S_j\Big) \overset{\text{ind.}}{\sim} \text{Ber}\big(\pi(S_j)\big), \quad
\Big(X_j \mid Y_j, S_j\Big) \overset{\text{ind.}}{\sim} (1-Y_j)\cdot \mathcal{N}(\mathbf{0}, I_p) + Y_j \cdot F_{1S_j}.
\end{equation}
Here, \(\pi(S_j)\) denotes the local sparsity level, \(\mathbf{c} \in \mathbb{R}^p\) is a constant vector with all components equal to \(c\), and \(I_p\) denotes a \(p \times p\) diagonal matrix. We fix the test sample size at \(m = 3000\) and generate \(5000\) null samples from \(\mathcal{N}(\mathbf{0}, I_p)\). These null samples are randomly split into training, calibration, and mirror subsets, with \(|\mathcal{D}_0^{\rm tr}|=|\mathcal{D}_0^{\rm cal}|=1000\) and \(|\mathcal{D}_0^{\rm mir}|=3000\). The auxiliary variable is set as $S_j = j$, which may represent, for example, the time points or spatial locations at which the samples are collected. Accordingly, samples with close or adjacent values in the indices (e.g., samples collected at nearby time points or locations) tend to exhibit similar patterns.
The structural heterogeneity is characterized by varying sparsity levels, with elevated signal frequencies in the following intervals:

$$
	\pi(S_j) = 0.6, \quad S_j \in [201,300] \cup [601,700]; \quad \pi(S_j) = 0.9, \quad S_j \in [1000,1100] \cup [1400,1500],
$$
and \(\pi(S_j)=0.01\) for the rest. The alternative distribution \(F_{1S_j}\) is defined as

$$
    F_{1S_j} = \mathcal{N}(\mathbf{\mu}, I_p), \quad S_j \in [1,1500]; \quad F_{1S_j} = \mathcal{N}(\mathbf{-2}, 0.5^2 \cdot I_p), \quad S_j \in [1501,3000].
$$
The weights \(w(S_j)\) employed in the SCQ procedure are constructed based on the estimated \(\pi(S_j)\) and are specifically designed to satisfy Condition \eqref{eq:g_func_swap_inva}; we provide detailed description on weight construction in Appendix \ref{sec:weight_construction}. In all experiments, the FDR level is set to 0.05. The efficiency is evaluated using the average power $\mathrm{AP} := \mathbb{E}\left[\frac{\mid\mathcal{R}\cap\mathcal{H}_1\mid}{\max\{1, \mid \mathcal{H}_1 \mid\}}\right].$ Both the FDR and AP are reported by empirically averaging the results over 500 replications.

We compare the following methods in our experiments; implementation details of each method are provided in Appendix \ref{sec:detailed_methods}:

\begin{enumerate}

	\item[(a)] \textbf{cfBH-OCC} and \textbf{ICP-OCC} apply the Storey-BH procedure to the split conformal p-values \citep{bates2023testing} and integrative conformal p-values \citep{liang2024integrative}, respectively, using one-class SVM scores.
	
	\item[(b)] \textbf{AdaDetect-KDE} and \textbf{CLAW-KDE} apply Storey-AdaDetect \citep{marandon2024adaptive} and CLAW \citep{ZhaSun25-CLAW}, respectively, using KDE scores.
	
	\item[(c)] \textbf{AdaDetect-PU} and \textbf{CLAW-PU} apply Storey-AdaDetect and semi-supervised CLAW, respectively, using RF-based PU-learning scores.
	
	\item[(d)] \textbf{SCQ-OCC} and \textbf{SCQ-BIC} apply SCQ with SVM and RF scores, respectively.
	
	\item[(e)] \textbf{SCQ-KDE} and \textbf{SCQ-PU} apply SCQ with KDE and PU-learning scores, respectively.
	
	\item[(f)] \textbf{SCQ+P-TAMS} applies SCQ with a model selected by P-TAMS from a toolbox of candidate ML models specified below.
\end{enumerate}

\subsection{Impacts of feature dimensionality}\label{sec:impact_density}

\begin{figure}[t]
	\centering
	
	\begin{subfigure}{0.66\textwidth} 
		\centering
		\includegraphics[width=0.85\linewidth]{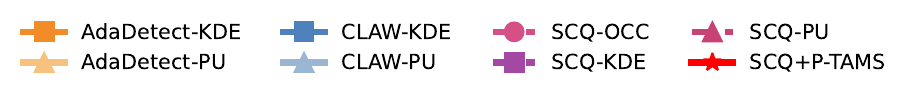} 
	\end{subfigure}
	
	\begin{subfigure}{\textwidth} 
		\centering
		\includegraphics[width=0.9\linewidth]{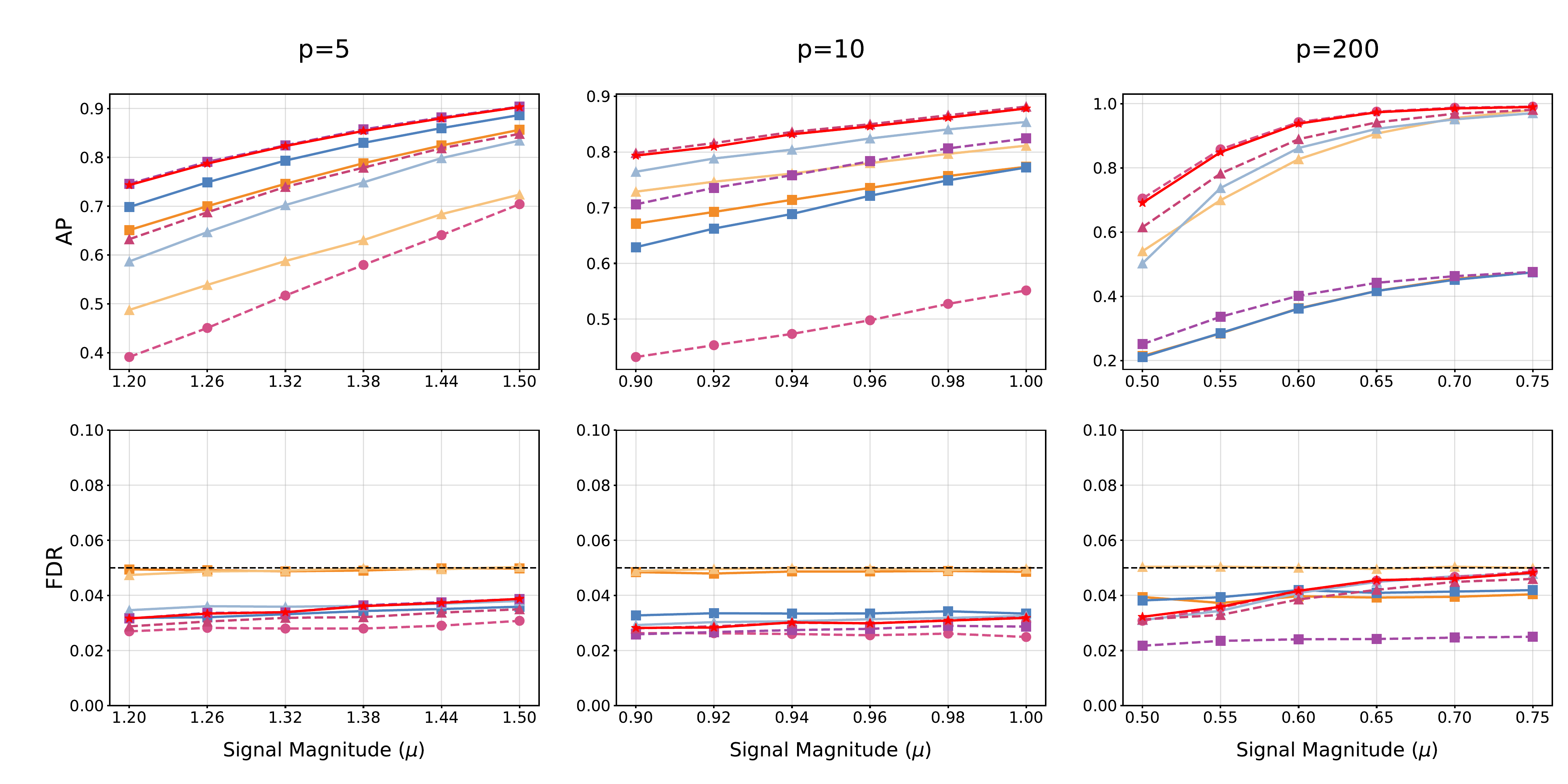} 
	\end{subfigure}
	
	\caption{\small AP and FDR comparison of the SCQ variants, together with P-TAMS that selects among them, against density-ratio–based methods at $\alpha=0.05$.}
	\label{fig:SCQ_vs_density_methods}
\end{figure}
This section investigates how feature dimensionality affects the performance of the following methods: AdaDetect-KDE, AdaDetect-PU, CLAW-KDE, CLAW-PU, and four SCQ variants: SCQ-OCC, SCQ-KDE, SCQ-PU, and SCQ+P-TAMS. We report the AP and FDR as functions of $\mu$ for three settings ($p=5$, $p=10$, and $p=200$), which are summarized in the left, middle, and right columns of Figure \ref{fig:SCQ_vs_density_methods}. The following observations can be made. First, all methods successfully control the FDR below 0.05. Second, both SCQ-KDE and SCQ-PU demonstrate noticeable power improvements over their AdaDetect counterparts by effectively leveraging structural information. Third, in the low-dimensional setting ($p = 5$), KDE-based methods achieve higher AP than PU-learning-based approaches, with SCQ-KDE attaining the best overall performance. As the dimensionality grows, however, AdaDetect-PU and CLAW-PU begin to outperform their KDE-based counterparts. This is because kernel density estimation methods become increasingly unstable in high-dimensional settings.
In this regime, SCQ-PU achieves the highest power. Fourth, in the high-dimensional case ($p = 200$), KDE-based methods are at a clear disadvantage, whereas SCQ-OCC consistently delivers the strongest performance among all SCQ variants. Finally, across all settings, P-TAMS successfully identifies the best-performing SCQ variant, resulting in the uniformly superior performance of SCQ+P-TAMS.

\subsection{Performance under distribution shifts}\label{sec:impact_dist_shift} 

To examine the effect of distribution shifts between labeled and test outliers, we evaluate the performance of SCQ-OCC, SCQ-BIC, cfBH-OCC, and ICP-OCC. Note that ICP \citep{liang2024integrative}, which also uses a p-value weighting strategy, differs fundamentally from SCQ and may be adversely affected by distributional shifts. To ensure a fair comparison, both SCQ and ICP are implemented using a fixed classifier (OneClassSVM). Additional comparisons regarding AMS strategies are provided in Appendix \ref{sec:P-TAMS_vs_ICPAMS}.

\begin{figure}[t] 
	\centering 
	\begin{subfigure}{0.5\textwidth} 
		\centering
		\includegraphics[width=0.85\linewidth]{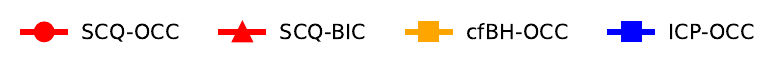} 
	\end{subfigure}
	\begin{subfigure}{\textwidth}
		\centering
		\includegraphics[width=0.9\linewidth]{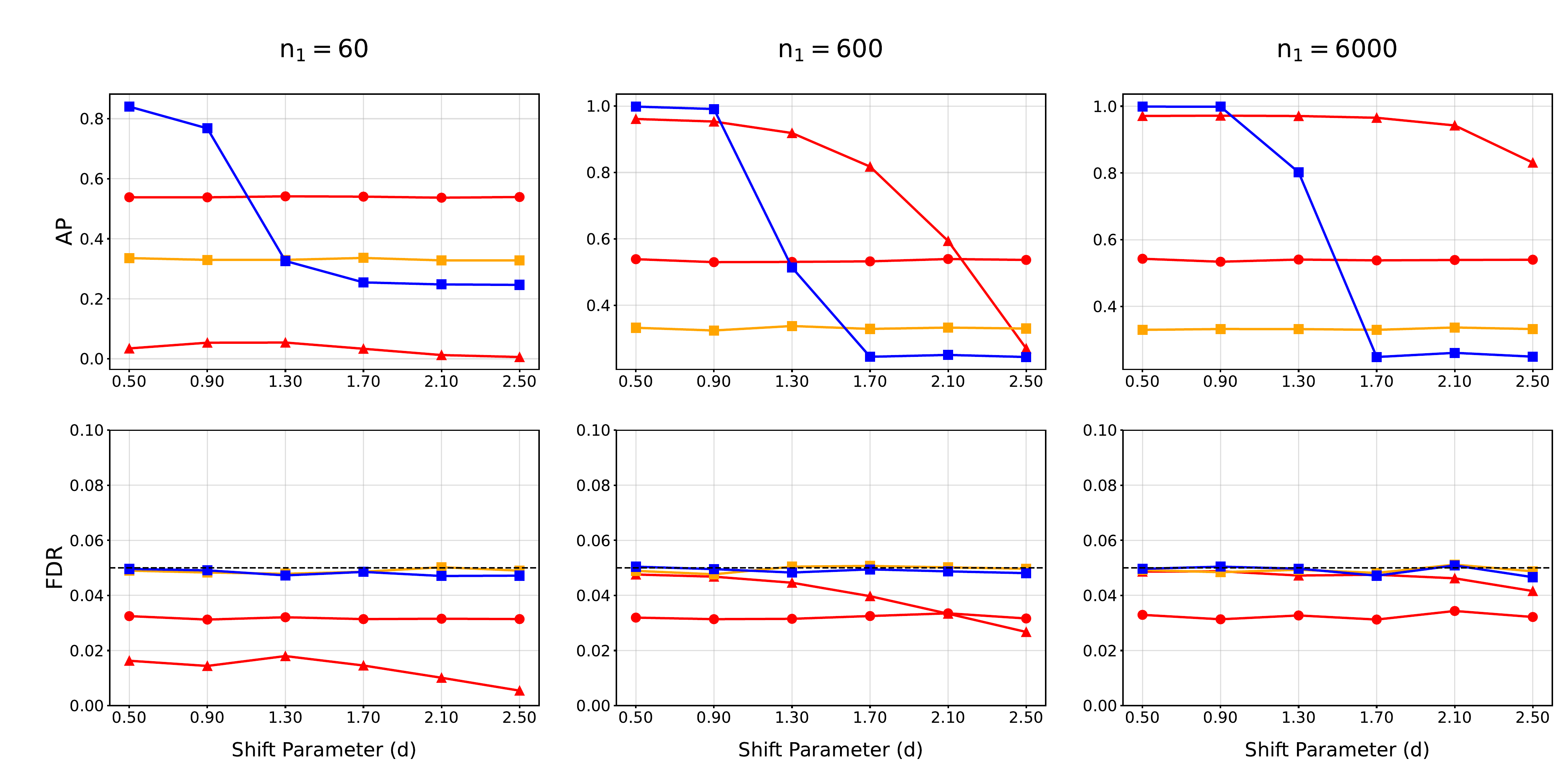} 
	\end{subfigure}
	\caption{\small Comparison of AP and FDR between SCQ, cfBH, and ICP at $\alpha=0.05$. }
	\label{fig:SCQ_vs_ICP}
\end{figure}

Our experiments consider three scenarios, with the number of labeled outliers $n_1$ set to 60, 600, and 6000. To induce distribution shifts, half of the labeled outliers follow $\mathcal{N}(\mathbf{-2},0.5^2\cdot I_p)$, matching the test outlier distribution, while the other half follow $\mathcal{N}(\mathbf{d},I_p)$, differing from the test distribution. With $\mu$ fixed at 0.5 and $p$ set to 200, we vary $d$ from 0.5 (indicating no shift) to 2.5 to represent increasing degrees of distribution shift. 

The results, shown in Figure~\ref{fig:SCQ_vs_ICP}, reveal several noteworthy patterns. First, when the distribution shift is absent or minimal, ICP-OCC performs well, effectively leveraging the labeled outliers to construct informative weights. However, its performance deteriorates progressively as the degree of shift increases, eventually falling below that of the unweighted cfBH-OCC. Second, unlike ICP -- which may suffer from negative learning -- SCQ-OCC consistently outperforms cfBH-OCC across all settings. Third, the flexibility of the SCQ procedure allows it to incorporate labeled outliers by employing a pre-trained BIC. In contrast to ICP, SCQ-BIC demonstrates robust performance and, in most cases, achieves the highest AP, even when the degree of distribution shift is substantial. Finally, the performance of SCQ-BIC relative to SCQ-OCC varies, with the former sometimes outperforming or underperforming the latter. This observation motivates the use of P-TAMS to uniformly enhance both variants of SCQ, as illustrated in the next subsection.

\subsection{Performance under data imbalance}\label{sec:impact_imbalance}

We generate data from a multivariate Gaussian mixture model (see Appendix \ref{sec:detailed_methods} for details) following the approach of \cite{bates2023testing} and \cite{liang2024integrative} to illustrate the selection dilemma under data imbalance and to demonstrate the effectiveness of P-TAMS. In our experiments, we set $\mu=0.5$ and $p=150$, generate 9000 labeled inliers, and vary the number of labeled outliers from 0 to 3000 to examine the impact of data imbalance. We employ two OCC-based score functions (one-class SVM with sigmoid and polynomial kernels, denoted by OCC-SVM.S and OCC-SVM.P, respectively) and two BIC-based score functions (k-nearest neighbors and multi-layer perceptron, denoted by BIC-KNN and BIC-MLP, respectively) to implement the SCQ procedure.

\begin{figure}[h]
	\centering
	\begin{minipage}{0.8\textwidth}
		\centering
		\includegraphics[width=0.85\textwidth]{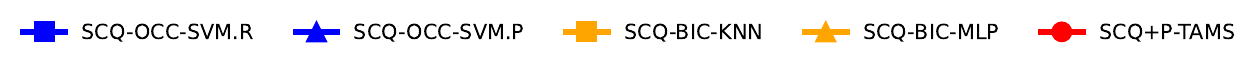} 
	\end{minipage}
	
	\begin{minipage}{\textwidth}
		\centering
		\includegraphics[width=0.85\textwidth]{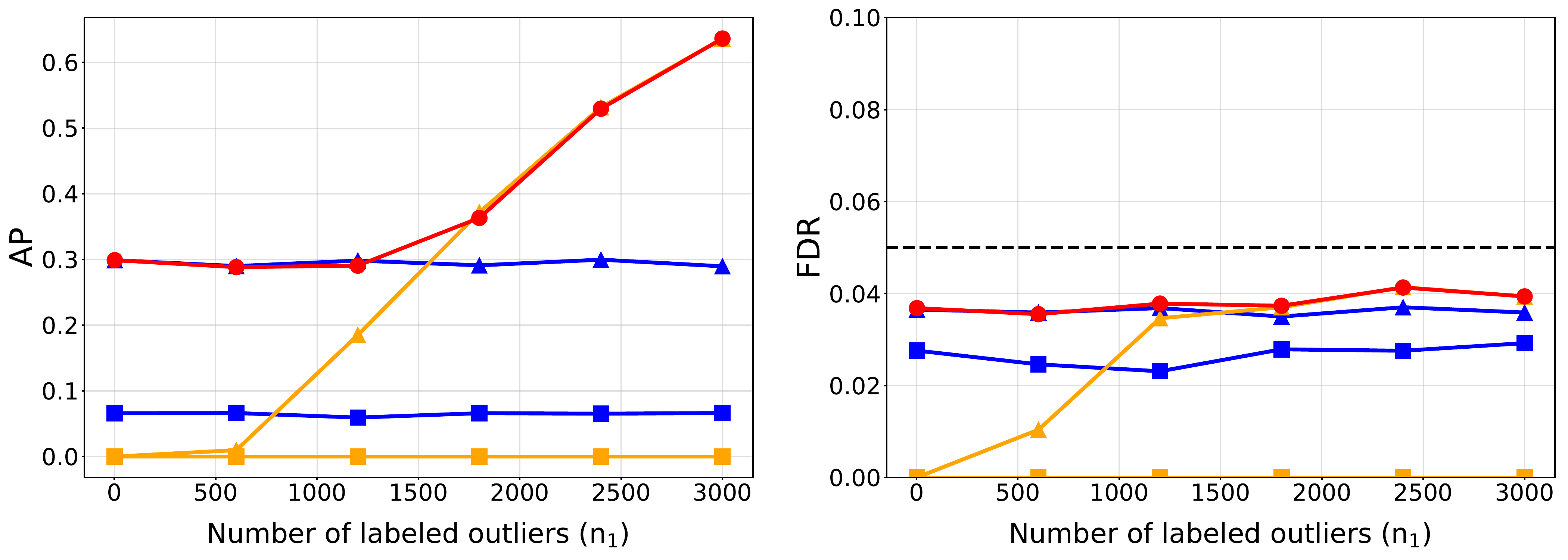} 
	\end{minipage}
	\caption{\small Comparison of AP and FDR for SCQ variants (implemented with two OCC-based and two BIC-based score functions) and P-TAMS, which uniformly enhances SCQ. }
	\label{fig:ams_ob}
\end{figure}

Several trends can be observed in the results presented in Figure \ref{fig:ams_ob}. First, the SCQ-OCC variants, which rely solely on inlier samples, offer enhanced stability but fail to leverage the information provided by labeled outliers. Second, the SCQ-BIC variants incorporate labeled outliers, but may become unstable and underperform the SCQ-OCC variants when the data are highly imbalanced. Finally, SCQ+P-TAMS adaptively switches between the OCC and BIC approaches to maximize detection power, thereby providing an effective solution to the model selection dilemma. Across all settings, P-TAMS consistently identifies the best-performing SCQ variant, leading to uniformly improved performance compared to either the OCC or BIC variants of SCQ.

\section{Experiments with real data}

\subsection{Analysis of the cybersecurity data}\label{sec:cicids}

This section evaluates the performance of the proposed SCQ procedure on the CICIDS2017 dataset \citep{sharafaldin2018toward}. The dataset, provided by the Canadian Institute for Cybersecurity (CIC), consists of normalized and deduplicated network traffic data with 80 features, encompassing both benign traffic and multiple attack categories. From this data, we construct a test set $\mathcal{D}^{\rm test}$ of size $m$, consisting of $m_0$ benign (inlier) samples and $m_1 = 1000$ attack (outlier) samples.
The latter includes three distinct attack types: distributed denial-of-service, port scan, and botnet. We consider settings where only labeled inliers are available. A null set \(\mathcal{D}_0\) of 30,000 samples is constructed and partitioned into three parts: a mirror dataset of size \(m\), while the remaining samples are divided equally into a training set and a calibration set.

Our objective is to efficiently identify network intrusions (outliers) by leveraging available side information to improve detection power and resource allocation. Specifically, we use the timestamp of each sample as the side information variable \(S_j\). This variable naturally defines 9 groups (corresponding to 9 working hours from 9 AM to 5 PM) and informs the construction of structure-aware weights (cf. Appendix \ref{sec:weight_construction} of the Supplement). As illustrated in the right column of Figure \ref{fig:real_data}, attack events are highly non-uniform over time, with pronounced peaks during specific hours. This temporal pattern provides valuable contextual information about the likelihood of anomalous activity. We demonstrate that this informative pattern can be exploited by SCQ to prioritize high-risk periods for anomaly detection, thereby enhancing the overall effectiveness of OOD testing. 
\begin{figure}[t]
	\centering

	\begin{minipage}{0.65\textwidth}   
		\centering
		\includegraphics[width=0.8\linewidth]{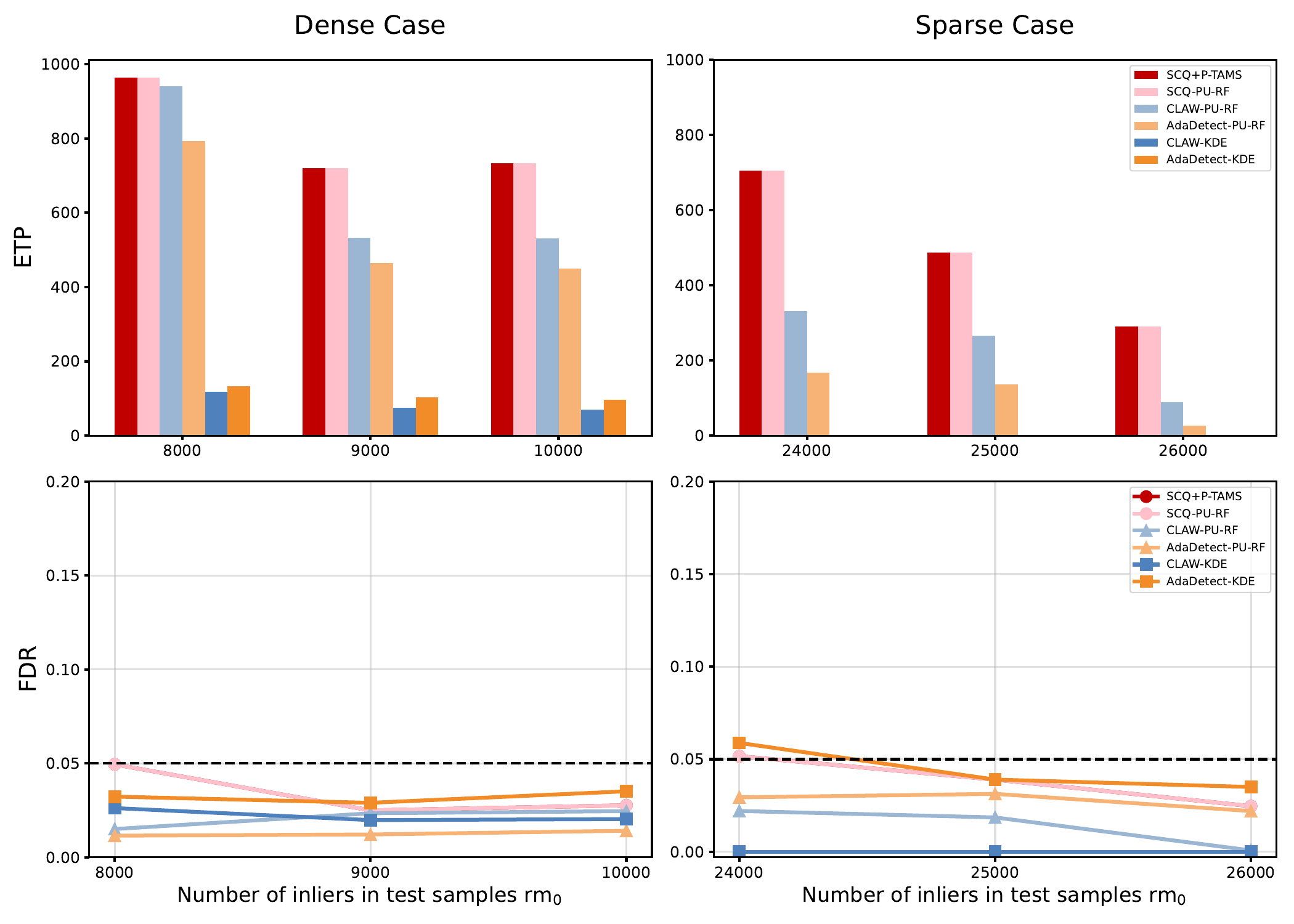}
	\end{minipage}
	\hspace{0.5em}                     
	\begin{minipage}{0.3\textwidth}   
		\centering
		\includegraphics[width=0.8\linewidth]{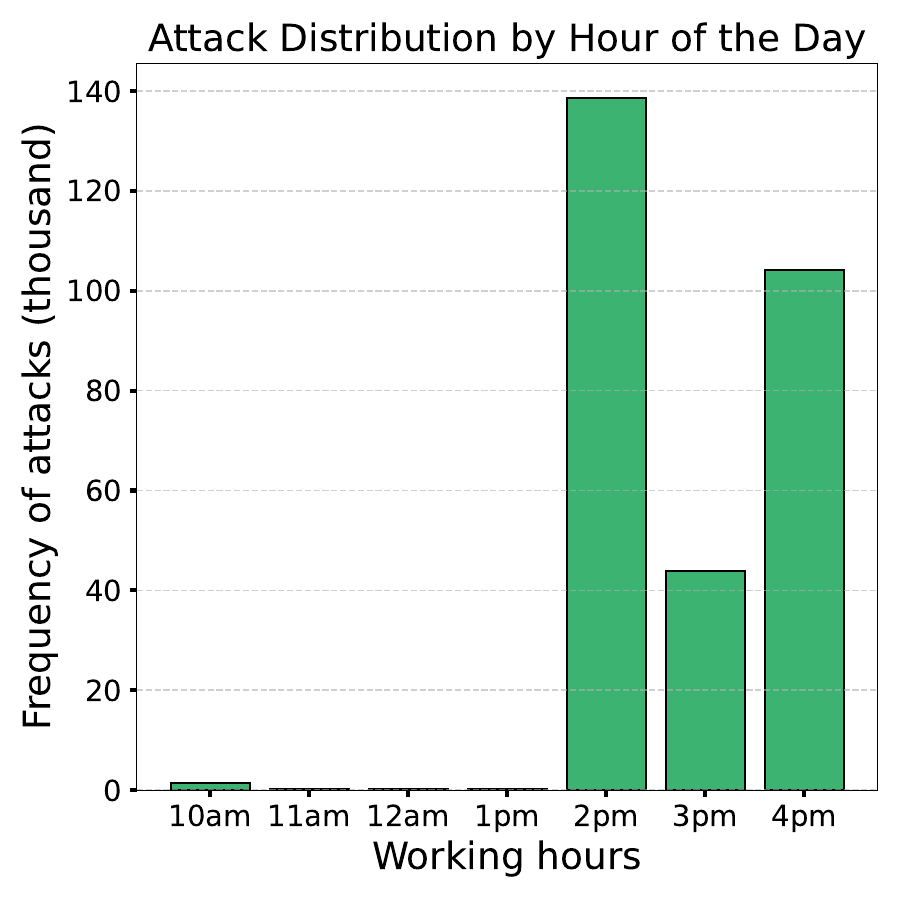}

		\vspace{2mm}
	
		\begingroup
		\scriptsize             
		\setstretch{0.9}       
		\RaggedRight            
		
		We partition data into 9 groups using side information (working hours),
		which captures informative temporal patterns in the data (no attack records at 9AM or 5PM). 
		SCQ exploits this structure to prioritize high-risk periods
		for effective anomaly detection.
		
		\endgroup
	\end{minipage}
	\caption{\small Left two columns: FDR and ETP of different methods; right column: hourly distribution of attack samples in the test set. }
	\label{fig:real_data}
\end{figure}

We compare the performance of different methods, including AdaDetect-KDE, AdaDetect-PU-RF, CLAW-KDE, CLAW-PU-RF, SCQ-PU-RF, and SCQ+P-TAMS, for OOD testing at a target FDR level of 0.05. SCQ+P-TAMS selects the optimal classifier from a toolbox comprising SCQ-PU-RF, SCQ-KDE, and SCQ-OCC-SVM. The number of test outliers $m_1$ is fixed at 1000, while the number of test inliers \(m_0\) is varied to create different sparsity settings. Each method is applied to 200 independent datasets with $\mathcal{D}^{\rm test}$ and $\mathcal{D}_0$ being randomly sampled from the original dataset. The FDR and the expected number of true positives (ETP) are computed by averaging results over these 200 replications. Results are presented for both dense setting (with \(m_0\) between 8000 and 10000) and sparse setting (with \(m_0\) between 24000 and 26000), shown in the left and right panels of Figure~\ref{fig:real_data}, respectively.

The comparisons reveal several important patterns: (1) leveraging side information (temporal patterns) significantly improves performance of OOD testing (SCQ‑PU-RF $>$ AdaDetect‑PU-RF), with the efficiency gain being especially pronounced in sparse settings; (2) PU methods outperform KDE methods in high‑dimensional settings; and (3) P‑TAMS consistently selects the SCQ‑PU‑RF variant, and the resulting SCQ+P‑TAMS method achieves the best overall performance.

\subsection{Analysis of the PageBlocks data}\label{sec:P-TAMS_eff_realdata}

We analyze the PageBlocks dataset \citep{campos2016evaluation}, treating text blocks
as inliers and non-text blocks as outliers. This experiment illustrates a setting in
which the best base classifier may depend on the number of labeled outliers and on
the realized sample. We apply SCQ with four candidate classifiers: LOF and GMM as
OCCs, and KNN and MLP as BICs. P-TAMS is then used to select among the resulting
SCQ variants, yielding SCQ+P-TAMS. The test data are divided into three groups with
different outlier proportions, and the group label is used as side information.
Details of the dataset, splitting scheme, and evaluation protocol are deferred to
Appendix~\ref{sec:app_pageblocks}.

\begin{figure}[h]
	\centering
	\includegraphics[width=0.8\textwidth]{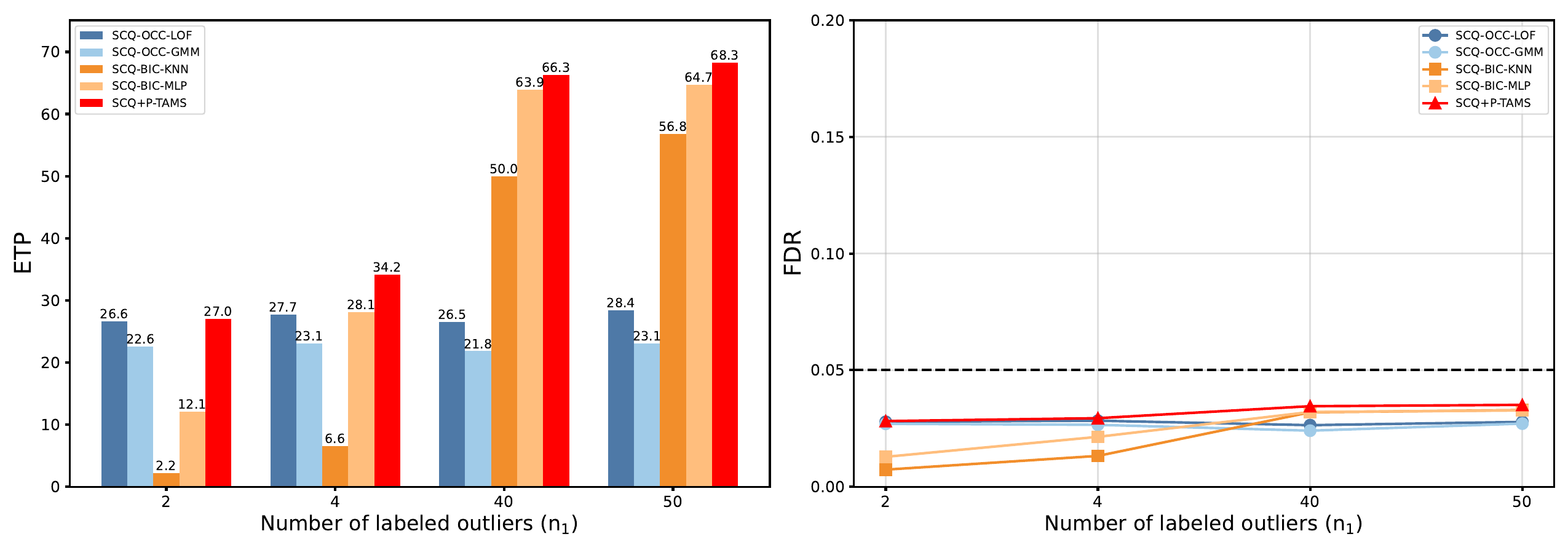}
	\caption{\small FDR and ETP of four SCQ variants and SCQ+P-TAMS on the PageBlocks data.}
	\label{fig:ams_ob_real}
\end{figure}

Figure~\ref{fig:ams_ob_real} shows three main patterns. 
First, OCC methods
perform better when a small number of labeled outliers is available, whereas
BIC methods become more effective as $n_1$ increases. 
Second, no single pre-specified classifier is uniformly optimal across all settings.
Third, SCQ+P-TAMS achieves the best overall performance by adaptively selecting among
the candidate classifiers. In particular, the selected classifier may vary across
repetitions even for the same \(n_1\), indicating that P-TAMS can exploit
dataset-specific information rather than relying on a fixed classifier.

\section{Data availability}

Code for reproducing simulation, real-data, and supplementary results is available at
\url{https://github.com/srysxr/SCQ_paper}; access details for the publicly available
CICIDS2017 and PageBlocks datasets are provided in the repository.

\begingroup
\small
\singlespacing
\setlength{\bibsep}{0pt plus 0.3ex}

\bibliographystyle{abbrvnat}
\bibliography{reference}

\endgroup

\newpage
\appendix

\begin{center}
	{\Large\bf Online Supplementary Material for \\ \vskip 0.5cm \LARGE ``Structure-Adaptive Conformal Inference for Large-Scale Out-of-Distribution Testing''}
\end{center}

This supplement contains the proofs for primary theory (Section \ref{sec:supple_proofs}), the supplementary methodological details (Section \ref{sec:supple_methdos}), the conceptual and theoretical comparisons (Section \ref{sec:supple_comparisons}), and additional numerical results (Section \ref{sec:supple_numerical_results}).

\setcounter{equation}{0}
\renewcommand{\theequation}{A.\arabic{equation}}
\setcounter{figure}{0}
\setcounter{page}{1}
\renewcommand{\thefigure}{A.\arabic{figure}}
\renewcommand{\thealgorithm}{A.\arabic{algorithm}}

\section{Technical Proofs}\label{sec:supple_proofs}
\subsection{Proof of Proposition \ref{prop:pairwise_scores}}

\begin{proof}
	Denote the conformity score function utilized in the SCQ procedure as $h$, in the sense that we set
	$h(x,S_j)\coloneqq\dfrac{p(x)}{w(S_j)}$ for any $j\in\mathcal{D}^{\rm test}$. In light of Remark \ref{rmk:weaker-condi}, we will verify the pairwise exchangeability of the conformity scores $\{(V_j\equiv h(X_j,S_j),\tilde{V}_j\equiv h(\tilde{X}_j,S_j)):j\in\mathcal{D}^{\rm test}\}$ under the weaker data exchangeability \eqref{eq:data_exch_pair} and the weaker condition \eqref{eq:g_func_swap_inva} for $s(\cdot)$ to construct conformal p-values. We begin by showing that, for any $j\in\mathcal{D}^{\rm test}$, the function $h(x,S_j)$ satisfies the following swap-invariance property:
	\begin{equation}\label{eq:swap-inva-h}
		h\left(x,S_j;(\mathbf{X},\tilde{\mathbf{X}})_{\mathrm{swap}(\mathcal{J})},\mathbf{X}_0^{\text{tr}}, \mathbf{X}_0^{\text{cal}}, \mathbf{X}_1,\mathbf{S}\right)=h\left(x,S_j;(\mathbf{X},\tilde{\mathbf{X}}),\mathbf{X}_0^{\text{tr}}, \mathbf{X}_0^{\text{cal}}, \mathbf{X}_1, \mathbf{S}\right),\quad\forall\mathcal{J}\subset\mathcal{D}^{\rm test}\:.
	\end{equation}
	Firstly, the conformal p-value function $p(x)$ defined in \eqref{eq:conformal_p} is invariant under any swap of $(\mathbf{X},\tilde{\mathbf{X}})$, as $p(x)$ is constructed based on $\left\{s(X_i):i\in\mathcal{D}_0^{\rm cal}\right\}$ with $s(\cdot)$ satisfying the swap-invariance property \eqref{eq:g_func_swap_inva}. Secondly, the weight function $w(\cdot)$ is also required to satisfy \eqref{eq:g_func_swap_inva}. Therefore, condition \eqref{eq:swap-inva-h} naturally holds for the score function $h(x,S_j)\coloneqq\dfrac{p(x)}{w(S_j)}$.

	Next, we will show that, under the data exchangeability condition \eqref{eq:data_exch_pair}, if the swap-invariance \eqref{eq:swap-inva-h} holds for $h(x,S_j)$, then conformity scores $\{(V_j\equiv h(X_j,S_j),\tilde{V}_j\equiv h(\tilde{X}_j,S_j)):j\in\mathcal{D}^{\rm test}\}$  satisfy the desired pairwise exchangeability \eqref{eq:pwex-scores}. 
	For $j\in\mathcal{H}_0$,
	denote $\mathbf{X}_{-j}=\{X_i:i\in\mathcal{D}^{\rm test}\setminus\{n+j\}\}$ and $\tilde{\mathbf{X}}_{-j}=\{\tilde{X}_i:i\in\mathcal{D}^{\rm test}\setminus\{n+j\}\}$, let 
	\begin{equation*}
		\mathcal{G}_j
		:=
		\Big(
		\mathcal{V}_{-j},
		\, \tilde{\mathcal{V}}_{-j},
		\, \{X_j,\tilde{X}_j\}
		\Big),
	\end{equation*}
	where $\{X_j,\tilde{X}_j\}$ is the unordered set of the elements in $(X_j,\tilde{X}_j)$. By construction, $\mathcal{G}_j$ is symmetric with respect to $(X_j,\tilde X_j)$ given $(\mathbf{X}_{-j},\tilde{\mathbf{X}}_{-j},\mathbf{X}_0^{\text{tr}}, \mathbf{X}_0^{\text{cal}}, \mathbf{X}_1,\mathbf{S})$.
	Let $\psi(x,y)$ be a vector-valued symmetric function satisfying
	$\psi(x,y)=\psi(y,x)$. Consider two random elements $X$ and $Y$ that are
	pairwise exchangeable, i.e.,
	$
	(X,Y)\stackrel{d}{=}(Y,X).
	$
	Then we have
	\begin{equation}\label{eq:proof-claim}
		(X,Y,\psi(X,Y))
		\stackrel{d}{=}
		(Y,X,\psi(Y,X))
		=
		(Y,X,\psi(X,Y)).
	\end{equation}
	Applying \eqref{eq:proof-claim}, under the data exchangeability condition \eqref{eq:data_exch_pair}, we obtain
	\begin{equation}\label{eq:proof-data-exch}
		(X_j,\tilde X_j\mid\mathcal{G}_j,\mathbf{X}_{-j},\tilde{\mathbf{X}}_{-j},\mathbf{X}_0^{\text{tr}}, \mathbf{X}_0^{\text{cal}}, \mathbf{X}_1,\mathbf{S})
		\;\stackrel{d}{=}\;
		(\tilde X_j,X_j\mid\mathcal{G}_j,\mathbf{X}_{-j},\tilde{\mathbf{X}}_{-j},\mathbf{X}_0^{\text{tr}}, \mathbf{X}_0^{\text{cal}}, \mathbf{X}_1,\mathbf{S}).
	\end{equation}
	Since $h(x,S_j\mid \{X_j,\tilde{X}_j\},\mathbf{X}_{-j},\tilde{\mathbf{X}}_{-j},\mathbf{X}_0^{\text{tr}}, \mathbf{X}_0^{\text{cal}}, \mathbf{X}_1,\mathbf{S})$ is non-random with respect to\\
	\noindent $\sigma(\{X_j,\tilde{X}_j\},\mathbf{X}_{-j},\tilde{\mathbf{X}}_{-j}, \mathbf{X}_0^{\text{tr}}, \mathbf{X}_0^{\text{cal}}, \mathbf{X}_1,\mathbf{S})\subset\sigma(\mathcal{G}_j,\mathbf{X}_{-j},\tilde{\mathbf{X}}_{-j},\mathbf{X}_0^{\text{tr}}, \mathbf{X}_0^{\text{cal}}, \mathbf{X}_1,\mathbf{S})$, it follows from \eqref{eq:proof-data-exch} that
	\begin{equation*}
		(V_j,\tilde V_j\mid\mathcal{G}_j,\mathbf{X}_{-j},\tilde{\mathbf{X}}_{-j},\mathbf{X}_0^{\text{tr}}, \mathbf{X}_0^{\text{cal}}, \mathbf{X}_1,\mathbf{S})
		\;\stackrel{d}{=}\;
		(\tilde V_j,V_j\mid\mathcal{G}_j,\mathbf{X}_{-j},\tilde{\mathbf{X}}_{-j},\mathbf{X}_0^{\text{tr}}, \mathbf{X}_0^{\text{cal}}, \mathbf{X}_1,\mathbf{S}).
	\end{equation*}
	By integrating $(\{X_j,\tilde{X}_j\},\mathbf{X}_{-j},\tilde{\mathbf{X}}_{-j},\mathbf{X}_0^{\text{tr}}, \mathbf{X}_0^{\text{cal}}, \mathbf{X}_1,\mathbf{S})$ out, we have that,  for each $j\in\mathcal{H}_0$,
	$$(V_j,\tilde{V}_j|\mathcal{V}_{-j},\tilde{\mathcal{V}}_{-j}) \overset{d}{=} (\tilde{V}_j,V_j|\mathcal{V}_{-j},\tilde{\mathcal{V}}_{-j}),$$
	which also implies the pairwise exchangeability \eqref{eq:pwex-scores}.
\end{proof}

\subsection{Proof of Theorem \ref{thm:scq_validity}}\label{proof:SCQ_validity}
\begin{proof}
	For conformity scores $\{(V_j, \tilde{V}_j): j \in \mathcal{D}^{\text{test}}\}$, we begin by establishing the equivalence between the SCQ procedure and the e-BH procedure applied with e-values defined as
	\begin{align}\label{eq:e-values}
		e_j=\frac{m\mathbb{I}(V_j\leq\tau,V_j<\tilde{V}_j)}{1+\sum_{i\in\mathcal{D}^{\rm test}}\mathbb{I}(\tilde{V}_i\leq\tau,\tilde{V}_i<V_i)},\quad j\in\mathcal{D}^{\rm test},
	\end{align}
	where $\tau$ is given in \eqref{eq:tau}. Let $\mathcal{R}_{ebh}$ denote the rejection set of the e-BH procedure based on these e-values, i.e., we set ${\mathcal{R} }_{ebh} = \{ j\in\mathcal{D}^{\rm test} : e_{j}\geq e_{( \hat{k} ) }\},$ where $e_{(1)}\geq e_{(2)}\geq\cdots\geq e_{(m)}$ are the order statistics and the threshold $\hat{k}=\operatorname*{max}\{i:\frac{ie_{(i)}}{m}\geq\frac{1}{\alpha}\}$. Let $\mathcal{R}_{scq}\coloneqq\{j\in\mathcal{D}^{\rm test}:q_j\leq\alpha\}$, where $(q_j:j\in\mathcal{D}^{\rm test})$ are conformal q-values derived by \eqref{eq:scq}. By Proposition \ref{prop:equiv_SCQ_BC}, which will be proved later, to prove $\mathcal{R}_{scq}=\mathcal{R}_{ebh}$, it is enough to show that $\mathcal{R}_{ebh}$ coincides with the rejection set of BC algorithm, $\mathcal{R}_{bc}$, which is defined in \eqref{dec:BC}.
	For simplicity, let $\delta_j^{bc}=\mathbb{I}(V_j\leq\tau,V_j<\tilde{V}_j)$, $\delta_j^{ebh}=\mathbb{I}(e_{j}\geq e_{( \hat{k} ) })$, then we have
	\begin{equation*}
		e_j = 
		\left\{
		\begin{aligned}
			&0, &\delta_j^{bc} = 0; \\
			&\dfrac{m}{1 + \sum_{i\in\mathcal{D}^{\rm test}} \mathbb{I}(\tilde{V}_i\leq\tau,\tilde{V}_i<V_i)}, &\delta_j^{bc} = 1.
		\end{aligned}
		\right.
	\end{equation*}
	If $\delta_j^{bc}=1$, then by the definition of $\hat{k}$, we have $e_{(\hat{k})} = \dfrac{m}{1 + \sum_{i\in\mathcal{D}^{\rm test}} \mathbb{I}(\tilde{V}_i\leq\tau,\tilde{V}_i<V_i)}$ and hence $\delta_j^{ebh}=1$. If $\delta_j^{ebh}=1$, then $e_j\geq e_{(\hat{k})}$ and we have $\delta_j^{bc}=1$. Otherwise, we will have $e_{(\hat{k})}=0$, and this contradicts with $\frac{\hat{k}e_{(\hat{k})}}{m}\geq\frac{1}{\alpha}$ by definition. Thus, the three rejection sets coincide: $\mathcal{R}_{ebh}=\mathcal{R}_{bc}=\mathcal{R}_{scq}$.
	
	We establish the FDR control of the SCQ procedure by proving the validity of $\mathcal{R}_{ebh}$. By the e-BH theory of \cite{wang2022false}, this validity holds provided that the e-values defined in \eqref{eq:e-values} are generalized e-values, that is, $\mathbb{E}\left[\sum_{j\in\mathcal{H}_0}e_j\right]\leq m.$
	To verify this, we introduce the following lemma, which follows from the argument first established in \cite{ZhaSun25-PLIS} and is stated here in our notation without proof.
	\begin{lemma}$[\text{\cite{ZhaSun25-PLIS}}]$\label{lemma:CLAW2}
		Suppose that we have conformity scores $(V_j:j\in\mathcal{D}^{\rm test})$ and $(\tilde{V}_j:j\in\mathcal{D}^{\rm test})$ satisfying the pairwise exchangeability \eqref{eq:pwex-scores}. Let $\tau$ be the threshold output by the BC algorithm, which is given in \eqref{eq:tau}. Then,
		\begin{equation*}
			\mathbb{E}\left[\frac{\sum_{j\in\mathcal{H}_{0}}\mathbb{I}(V_j\leq\tau,V_j<\tilde{V}_j)}{1+\sum_{j\in\mathcal{H}_{0}}\mathbb{I}(\tilde{V}_j\leq\tau,\tilde{V}_j<V_j)}\right]=\mathbb{E}\left[\frac{\sum_{j\in\mathcal{H}_{0}}\mathbb{I}(V_j<\tilde{V}_j)\mathbb{I}(V_j\leq\tau)}{1+\sum_{j\in\mathcal{H}_{0}}\mathbb{I}(\tilde{V}_j<V_j)\mathbb{I}(\tilde{V}_j\leq\tau)}\right]\leq1.
		\end{equation*}
	\end{lemma}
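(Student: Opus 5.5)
The plan is the optional-stopping argument of \cite{barber2015controlling}, transferred to mirror pairs. Throughout, I assume no ties among the scores (continuity of $s(\cdot)$, or random tie-breaking). One further input is needed. The marginal statement \eqref{eq:pwex-scores} is not enough by itself; I will use its joint form, namely that the scores are invariant in distribution under swapping $(V_j,\tilde V_j)$ simultaneously for all $j$ in any $\mathcal J\subset\mathcal H_0$. This is exactly what the proof of Proposition~\ref{prop:pairwise_scores} delivers from \eqref{eq:data_exch_pair}, since the same argument goes through with $\mathcal J$ in place of a single index.

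\textbf{Step 1 (reduction to signs).} For each $j$, let $M_j=\min(V_j,\tilde V_j)$ and $\xi_j=\mathbb{I}(V_j<\tilde V_j)$. Then
\[
\mathbb{I}(V_j\le t,\,V_j<\tilde V_j)=\xi_j\,\mathbb{I}(M_j\le t),
\qquad
\mathbb{I}(\tilde V_j\le t,\,\tilde V_j<V_j)=(1-\xi_j)\,\mathbb{I}(M_j\le t).
\]
Condition on $\mathcal{F}_0$, defined as all unordered pairs $\{V_j,\tilde V_j\}$ together with the ordered pairs for $j\notin\mathcal H_0$. The joint swap invariance then implies that $(\xi_j)_{j\in\mathcal H_0}$ are i.i.d.\ $\mathrm{Bernoulli}(1/2)$ given $\mathcal{F}_0$. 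The mirror process $H(t)$ can only change at the values $M_j$, and $\tau$ depends only on the counts of positive and negative signs among $\{j: M_j\le t\}$.

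\textbf{Step 2 (backward filtration and stopping time).} Sort the candidate thresholds $t\in\mathcal V\cup\tilde{\mathcal V}$ in decreasing order; it suffices to use the distinct values of the $M_j$. Define
\[
A(t)=\sum_{j\in\mathcal H_0}\xi_j\,\mathbb{I}(M_j\le t),
\qquad
B(t)=\sum_{j\in\mathcal H_0}(1-\xi_j)\,\mathbb{I}(M_j\le t).
\]
Let $\mathcal{F}_t$ be generated by $\mathcal{F}_0$, the signs of the null units with $M_j>t$, and the counts $A(t)$ and $B(t)$. Three facts are needed:
\begin{itemize}
\item The non-null signs are known from $\mathcal{F}_0$, so $H(t)$ is $\mathcal{F}_t$-measurable.
\item Hence $\tau$ is a stopping time with respect to the filtration indexed by decreasing $t$.
\item The quantity in the lemma equals $A(\tau)/(1+B(\tau))$.
\end{itemize}

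\textbf{Step 3 (supermartingale).} I will show that $Z(t)=A(t)/(1+B(t))$ is a supermartingale as $t$ decreases. Moving to the next smaller threshold removes one null unit with $M_j=t$. Given $\mathcal{F}_t$, that unit is a uniformly random draw among the $A+B$ remaining nulls, so its sign is $+$ with probability $A/(A+B)$, by exchangeability of the i.i.d.\ fair signs. A direct computation gives
\[
\mathbb{E}[Z_{\rm next}\mid\mathcal F_t]=\frac{A}{A+B}\cdot\frac{A-1}{1+B}+\frac{B}{A+B}\cdot\frac{A}{B}=\frac{A}{1+B}\quad\text{when } B>0,
\]
and $\mathbb{E}[Z_{\rm next}\mid\mathcal F_t]=A-1\le A$ when $B=0$. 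Removing a non-null unit leaves $Z$ unchanged.

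\textbf{Step 4 (optional stopping and conclusion).} Optional stopping (the index set is finite) gives
\[
\mathbb{E}[Z(\tau)\mid\mathcal F_0]\le \mathbb{E}[Z(t_{\max})\mid\mathcal F_0]=\mathbb{E}\!\left[\frac{X}{1+m_0-X}\right],
\qquad X\sim\mathrm{Bin}(m_0,1/2).
\]
The standard identity shows that the right-hand side equals $1-2^{-m_0}\le1$. Integrating over $\mathcal F_0$ proves the lemma.

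The main obstacle is Step 2: verifying that $\tau$ really is a stopping time for this backward filtration, and that the filtration reveals only the counts and not which null below $t$ carries which sign. It matters that $H$ includes the non-null units, whose signs are conditioned on in $\mathcal{F}_0$ and are not random, so $\tau$ remains $\mathcal{F}_t$-adapted.
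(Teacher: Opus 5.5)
The paper gives no proof of this lemma. It defers to the martingale argument of \cite{ZhaSun25-PLIS}, and your backward-filtration optional-stopping proof is exactly that Barber--Cand\`es-type argument. Your one-step supermartingale computation is correct, and so is the identity $\mathbb{E}[X/(1+m_0-X)]=1-2^{-m_0}$.

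The gap is the no-ties assumption, which is not available here. The scores are $V_j=p(X_j)/w(S_j)$ and $\tilde V_j=p(\tilde X_j)/w(S_j)$, built from a \emph{common} weight and \emph{discrete} conformal p-values. So for a null pair $\mathbb{P}(V_j=\tilde V_j)=\mathbb{P}(p(X_j)=p(\tilde X_j))=2/(N+2)>0$, even when $s(\cdot)$ is continuous; this is the quantity computed in the proof of Lemma~\ref{lemma:F10i}. Ties among the $M_j$ across units also occur with positive probability, e.g.\ under group-constant weights. Random tie-breaking does not rescue the argument, because it changes $\tau$ and the ratio the lemma is about.

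With a tied null pair, Step~1 fails in two ways:
\begin{itemize}
\item $\xi_j=0$ deterministically, so the null $\xi_j$ are not i.i.d.\ fair given $\mathcal F_0$.
\item $(1-\xi_j)\mathbb{I}(M_j\le t)$ counts the pair as a negative, whereas it enters neither count in $H(t)$ nor in the lemma's ratio.
\end{itemize}
Your $B$ therefore overcounts, so $A(\tau)/(1+B(\tau))$ is smaller than the target ratio. Bounding its expectation by $1$ then proves nothing about the lemma.

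The repair is short:
\begin{itemize}
\item \textbf{Ties within a pair.} Tied pairs are $\mathcal F_0$-measurable, since the unordered pair $\{v,v\}$ reveals the tie, and they drop out of both $H$ and the ratio. Define $A$ and $B$ over the untied null pairs only. Swap invariance still makes their orientations i.i.d.\ fair given $\mathcal F_0$. The final bound becomes $1-2^{-m_0'}\le 1$, with $m_0'$ the number of untied null pairs.
\item \textbf{Ties among the $M_j$.} Remove units one at a time within each tie block, in a fixed $\mathcal F_0$-measurable order. Your one-step computation then applies verbatim. The threshold $\tau$ can only stop at the end of a block, so it remains a stopping time for this refined filtration.
\end{itemize}

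Separately, your remark that \eqref{eq:pwex-scores} is insufficient on its own is mistaken. That condition says the law of the full indexed score vector is invariant under each single-pair swap $\sigma_i$, $i\in\mathcal H_0$. Since $\sigma_{\mathcal J}=\prod_{i\in\mathcal J}\sigma_i$, invariance under every $\sigma_{\mathcal J}$ with $\mathcal J\subset\mathcal H_0$ follows by composition. You therefore need not appeal to the proof of Proposition~\ref{prop:pairwise_scores}. That matters, because the lemma is stated for generic pairwise-exchangeable scores, not only for the SCQ scores.
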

	By Proposition \ref{prop:pairwise_scores}, the pairwise exchangeability of SCQ scores $\{(V_j, \tilde{V}_j): j \in \mathcal{D}^{\text{test}}\}$ holds under data exchangeability \eqref{eq:data_exch_joint}. Consequently, Lemma \ref{lemma:CLAW2} implies that
{\small	\begin{equation*}
		\begin{aligned}
			& \mathbb{E}\left[\sum_{j\in\mathcal{H}_0}e_j\right]=m\mathbb{E}\left[\sum_{j\in\mathcal{H}_0}\frac{\mathbb{I}(V_j\leq\tau,V_j<\tilde{V}_j)}{1+\sum_{i\in\mathcal{D}^{\rm test}}\mathbb{I}(\tilde{V}_i\leq\tau,\tilde{V}_i<V_i)}\right]\leq \mathbb{E}\left[\frac{\sum_{j\in\mathcal{H}_{0}}\mathbb{I}(V_j<\tilde{V}_j)\mathbb{I}(V_j\leq\tau)}{1+\sum_{j\in\mathcal{H}_{0}}\mathbb{I}(\tilde{V}_j<V_j)\mathbb{I}(\tilde{V}_j\leq\tau)}\right]\leq m,
		\end{aligned}
	\end{equation*}}
	which confirms the validity of the e-values defined in \eqref{eq:e-values} and consequently establishes the finite-sample FDR control of the SCQ procedure.
	\begin{remark}\label{cp:validity}
		The conformal p-values, constructed using a continuous score function $s(\cdot)$ that satisfies \eqref{eq:s_function_permu}, are super-uniform under the null when conditions \eqref{eq:s_function_permu} and \eqref{eq:data_exch_joint} hold. To see this, observe that \eqref{eq:data_exch_joint}  satisfies the exchangeability requirement of Assumption 1 in \cite{marandon2024adaptive}, then the result follows from Theorem 3.3 in \cite{marandon2024adaptive}.
	\end{remark}
\end{proof}

\subsection{Proof of Theorem \ref{thm:P-TAMS-validity}}
\begin{proof}
	Denote the conformity scores computed by the SCQ procedure with the classifier $\mathcal{C}_*$ selected by P-TAMS algorithm as $\mathcal{V}^*=\left\{V_j^*:j\in\mathcal{D}^{\rm test}\right\}$ and $\mathcal{\widetilde{V}}^*=\left\{\tilde{V}_j^*:j\in\mathcal{D}^{\rm test}\right\}$.
	According to the proof of Theorem \ref{thm:scq_validity}, to prove the validity of the P-TAMS algorithm, it suffices to show that, under the data exchangeability (\ref{eq:data_exch_joint}), $\mathcal{V}^*$ and $\mathcal{\widetilde{V}}^*$ are pairwise exchangeable:
	\begin{equation*}
		(V_i^*,\tilde{V}_i^*,\mathcal{V}_{-i}^*,\widetilde{\mathcal{V}}_{-i}^*)\overset{d}{=}(\tilde{V}_i^*,V_i^*,\mathcal{V}_{-i}^*,\widetilde{\mathcal{V}}_{-i}^*),\quad \forall i\in\mathcal{H}_0,
	\end{equation*}
	where $\mathcal{V}_{-i}^*=\left\{V_j^*:j \in \mathcal{D}^{\rm test}\setminus\{n+i\}\right\}$ and $\widetilde{\mathcal{V}}_{-i}^*=\left\{\tilde{V}_j^*:j \in \mathcal{D}^{\rm test}\setminus\{n+i\}\right\}$.
	Denote the score function of the P-TAMS algorithm as $h^*$, in the sense that we set $V_j^*= h^*(X_j,S_j)$ and $\tilde{V}_j^*= h^*(\tilde{X}_j,S_j)$ for any $j\in\mathcal{D}^{\rm test}$. 
	By the arguments in the proof of Proposition \ref{prop:pairwise_scores}, it then remains to verify the following swap-invariance property of the score function $h^*$:
	\begin{equation*}\label{g* function invariant}
		h^*\left(\cdot,S_i;(\mathbf{X},\tilde{\mathbf{X}})_{\mathrm{swap}(\mathcal{J})},\mathbf{X}_0^{\text{tr}}, \mathbf{X}_0^{\text{cal}}, \mathbf{X}_1, \mathbf{S}\right)=h^*\left(\cdot,S_i;(\mathbf{X},\tilde{\mathbf{X}}),\mathbf{X}_0^{\text{tr}}, \mathbf{X}_0^{\text{cal}}, \mathbf{X}_1, \mathbf{S}\right),\quad\forall\mathcal{J}\subset\mathcal{D}^{\rm test}\:.
	\end{equation*} 
	
	According to Algorithm \ref{alg:P-TAMS}, the score function $h^*$ can be seen as a composite function: we have $h^*(\cdot)=l_1[l_2(\cdot)]$, where $l_2:[K]\mapsto[K]$ denotes the selection function that chooses the best performing one from  $K$ candidate models $(\mathcal{C}_1,\dots,\mathcal{C}_K)$, and $l_1$ represents the score function that calculates the conformity scores of the SCQ procedure with the selected model $\mathcal{C}_*$. Since the proof of Proposition \ref{prop:pairwise_scores} shows that the score function $l_2$ is invariant under any swap of $(\mathbf{X},\tilde{\mathbf{X}})$ if we fix the selected classifier $\mathcal{C}_*$, it remains to show that the selection function $l_1$, which derives $\mathcal{C}_*$, is swap-invariant w.r.t. $(\mathbf{X},\tilde{\mathbf{X}})$.
	
	To prove this, recall the three summarized steps of P-TAMS in Section \ref{subsec:P-TAMS}. First, for each $k\in[K]$, the initial partition of test samples in Step 1 is based on a preliminary rejection set $\bar{\mathcal{R}}_k$, which is obtained by applying BH on pseudo conformal p-values $\bar{\mathcal{P}}_k = \left\{\min\left(p_i^k, \tilde{p}_i^k\right) : i \in \mathcal{D}^{\rm test}\right\}$ at a pre-specified level. As a result, for any $k\in[K]$, $\bar{\mathcal{R}}_k$ is swap-invariant w.r.t. $(\mathbf{X},\tilde{\mathbf{X}})$, since the value of $\bar{\mathcal{P}}_k$ is invariant under any swap of $(\mathbf{X},\tilde{\mathbf{X}})$. Then, in Step 2, we calculate the pseudo scores \(\mathcal{U}^k = \{U_i^k: i \in \mathcal D^{\text{test}}\}\) and \(\widetilde{\mathcal{U}}^k = \{\tilde{U}_i^k: i \in\mathcal D^{\text{test}}\}\) based on the true scores \(\mathcal{V}_k = \{V_i^k: i \in \mathcal D^{\text{test}}\}\) and \(\widetilde{\mathcal{V}}_k = \{\tilde{V}_i^k: i \in \mathcal D^{\text{test}}\}\), through the min-max operator \eqref{eq:pseudo_outlier} and the coin-flipping operator \eqref{eq:pseudo_inlier}. Since the true scores $V_i^k$ and $\tilde{V}_i^k$ are swap-invariant as proved earlier, the carefully designed operators \eqref{eq:pseudo_outlier} and \eqref{eq:pseudo_inlier} together ensure the swap-invariance of pseudo scores $U_i^k$ and $\tilde{U}_i^k$. Finally, in Step 3, we select the model $C_*$, which is also consequently swap-invariant, by maximizing the rejection number calculated by applying SCQ with the pseudo scores \(\mathcal{U}^k\) and \(\widetilde{\mathcal{U}}^k\).
	Thus, the swap-invariance of the score function $h^*$ utilized by the P-TAMS algorithm is verified, and so is the validity of the P-TAMS algorithm itself.
\end{proof}

\subsection{Proof of Proposition \ref{prop:equiv_SCQ_BC}}
\begin{proof}
	Given $m$ pairs of conformity scores $\{(V_j,\tilde{V}_j):j\in\mathcal{D}^{\rm test}\}$, recall the definitions of $\mathcal{R}_{bc}$ and $\mathcal{R}_{scq}$: $\mathcal{R}_{bc} = \{j \in \mathcal{D}^{\text{test}}: V_j \leq \tau, \, V_j < \tilde{V}_j\}$ and $\mathcal{R}_{scq} = \{j \in \mathcal{D}^{\text{test}}: q_j \leq \alpha\}$, where $(q_j:j\in\mathcal{D}^{\rm test})$ are conformal q-values defined in \eqref{eq:scq}.
	
	Since the target FDR level $\alpha$ should be less than one, for any $j\in\mathcal{R}_{scq}$, we know that $q_j\leq\alpha$ and $V_j<\tilde{V}_j$. If $V_j>\tau$, then, by the definition of $q_j$, we have $t_0\geq V_j>\tau$, such that $q_j=H(t_0)\leq\alpha$, which is in contradiction with the definition of $\tau$. Therefore, we have $V_j\leq\tau$ and consequently $j\in\mathcal{R}_{bc}$. Next, for any $j\in\mathcal{R}_{bc}$, we have $V_j<\tilde{V}_j$ and $V_j\leq\tau$. Then, $q_j=\min\limits_{t\in\mathcal{V}\cup\widetilde{\mathcal{V}},t\geq V_i}H(t)\leq H(\tau)\leq\alpha$ and hence $j\in\mathcal{R}_{scq}$. As a result, we have $\mathcal{R}_{scq}=\mathcal{R}_{bc}$. 
\end{proof}

\subsection{Proof of Theorem \ref{thm:asymptotic-FDR}}\label{proof:BC FDR analysis}
\begin{proof}
	By Proposition \ref{prop:equiv_SCQ_BC}, we have $\mathcal{R}_{bc}=\mathcal{R}_{scq}$, it then suffices to show $\lim_{m \to \infty}  \mathbb{E}\left[ \dfrac{|\mathcal{R}_{bc} \cap \mathcal{H}_0|}{|\mathcal{R}_{bc}|\vee 1} \right] = \alpha$, where $\mathcal{R}_{bc}$ is defined in \eqref{dec:BC}. 
	To this end, we begin by introducing some notations. First, for simplicity, define a function $\bar{Q}(t)$ that is related to the marginal FDR of the rejection set $\mathcal{R}(t)=\{j\in\mathcal{D}^{\rm test}:V_j\leq t,V_j<\tilde{V}_j\}$ for the given threshold $t$, as  $\bar{Q}(t)\coloneqq\sum_{j\in\mathcal{D}^{\rm test}} \dfrac{1}{m} \mathbb{P}(V_j \leq t, V_j < \tilde{V}_j,  Y_j = 0) - \dfrac{\alpha}{m}\sum_{j\in\mathcal{D}^{\rm test}} \mathbb{P}(V_j \leq t, V_j < \tilde{V}_j)$. 
	Denote the corresponding oracle threshold $t^*$ calculated from $\bar{Q}(t)$ by
	\begin{equation}\label{eq:tstar}
		t^*\coloneqq\sup\left\{ t \in (0,1): \bar{Q}(t) \leq 0 \right\}
	\end{equation}
	Next, in the same way, we define a function $\hat{Q}(t)$ which is related to the mirror process $H(t)$ in \eqref{eq:H_function} as $\hat{Q}(t)\coloneqq\dfrac{1}{m}\sum_{j\in\mathcal{D}^{\rm test}}\mathbb{I}(\tilde{V}_j\leq t,\tilde{V}_j< V_j)-\dfrac{\alpha}{m}\sum_{j\in\mathcal{D}^{\rm test}}\mathbb{I}(V_j\leq t,V_j<\tilde{V}_j)$. 
	Moreover, let $\nu_i=\min\left\{V_i,\tilde{V}_i\right\}$ for each $i\in\mathcal{D}^{\rm test}$, and order them as $\nu_{(1)}\leq\nu_{(2)}\leq\dots\leq\nu_{(m)}$. 
	Focusing on the function $\hat{Q}(t)$, we further define its expectation function $G(t)$ as: $G(t)=\dfrac{1}{m}\sum_{j\in\mathcal{D}^{\rm test}}\mathbb{P}(\tilde{V}_j\leq t,\tilde{V}_j< V_j)-\dfrac{\alpha}{m}\sum_{j\in\mathcal{D}^{\rm test}}\mathbb{P}(V_j\leq t,V_j<\tilde{V}_j)$, and its continuous version $\hat{Q}^c(t)$ via interpolations: $\hat{Q}^c(t)=\dfrac{t-\nu_{(i)}}{\nu_{(i+1)}-\nu_{(i)}}\hat{Q}_{i+1}+\dfrac{\nu_{(i+1)}-t}{\nu_{(i+1)}-\nu_{(i)}}\hat{Q}_{i}$, for $t\in[\nu_{(i)},\nu_{(i+1)})$ and $i=1,\dots,m-1,$
	where $\hat{Q}_i\equiv\hat{Q}(\nu_{(i)})$. Based on $\hat{Q}^c(t)$, we define
	\begin{equation}\label{eq:hat-t}
		\hat{t}=\sup\left\{t\in(0,1):\hat{Q}^c(t)\leq0\right\}.
	\end{equation}
	
	To prove the theorem, we will first establish the following asymptotic equivalence of thresholds $\hat{t}$ and $t^*$:
	\begin{equation}\label{eq:hatt_to_tstar}
		\hat{t}-t^*\overset{p}{\to}0,\quad m\to\infty,
	\end{equation}
	in two steps.
	
	\textbf{Step (a): The asymptotic equivalence between $\hat{Q}^c(t)$ and $\bar{Q}(t)$.} 
	In this part, we show that, for any $t\in(0,1)$ we have
	\begin{equation}\label{eq:hatcQ_to_Q}
		\hat{Q}^c(t)-\bar{Q}(t)\overset{p}{\to}0,\quad m\to\infty.
	\end{equation}
	Under the weak dependence condition \eqref{eq:cov}, by the WLLN, 
	$\hat{Q}(t)-G(t)\xrightarrow{p}0$. To prove \eqref{eq:hatcQ_to_Q}, since $\mid\hat{Q}^c(t)-\hat{Q}(t)\mid\leq\dfrac{1}{m}$,  it remains to show $G(t)-\bar{Q}(t)\to0, m\to\infty$.
	Decomposing $\frac{1}{m}\sum_{j\in\mathcal{D}^{\rm test}}\mathbb{P}(\tilde{V}_j\leq t,\tilde{V}_j< V_j)$ by:
	\begin{footnotesize}
		\begin{equation*}\label{part0}
			\frac{1}{m}\sum_{j\in\mathcal{D}^{\rm test}}\mathbb{P}(\tilde{V}_j\leq t,\tilde{V}_j< V_j)
			=\frac{1}{m}\sum_{j\in\mathcal{D}^{\rm test}}\mathbb{P}(\tilde{V}_j\leq t,\tilde{V}_j< V_j,  Y_j=0) 
			+ \frac{1}{m}\sum_{j\in\mathcal{D}^{\rm test}}\mathbb{P}(\tilde{V}_j\leq t,\tilde{V}_j< V_j, Y_j=1).
		\end{equation*}
	\end{footnotesize}
	The first term of the RHS above equals $\frac{1}{m}\sum_{j\in\mathcal{D}^{\rm test}}\mathbb{P}(V_j\leq t,V_j<\tilde{V}_j,  Y_j=0)$ by the score exchangeability under the null, and the second term 
	\begin{equation*}
		\frac{1}{m}\sum_{j\in\mathcal{D}^{\rm test}}\mathbb{P}(\tilde{V}_j\leq t,\tilde{V}_j< V_j, Y_j=1)\leq \frac{1}{m}\sum_{j\in\mathcal{D}^{\rm test}}\mathbb{P}(\tilde{V}_j< V_j\mid  Y_j=1) 
	\end{equation*}
	vanishes under Assumption \ref{ass:sep}. Hence,
	\begin{equation*}
		\begin{aligned}
			G(t)-\bar{Q}(t)&=\frac{1}{m}\sum_{j\in\mathcal{D}^{\rm test}}\mathbb{P}(\tilde{V}_j\leq t,\tilde{V}_j< V_j)-\frac{1}{m}\sum_{j\in\mathcal{D}^{\rm test}}\mathbb{P}(V_j\leq t,V_j<\tilde{V}_j, Y_j=0)\\
			&=\dfrac{1}{m}\sum_{j\in\mathcal{D}^{\rm test}}\mathbb{P}(\tilde{V}_j\leq t,\tilde{V}_j< V_j, Y_j=1)
			\to0,
		\end{aligned}
	\end{equation*}
	finishing the proof of \eqref{eq:hatcQ_to_Q}.
	
	\textbf{Step (b): The asymptotic equivalence between $\hat{t}$ and $t^*$.}
	Since $t^*$ is nonrandom, \eqref{eq:hatcQ_to_Q} implies $\hat{Q}^c(t^*)-\bar{Q}(t^*)=\hat{Q}^c(t^*)\overset{p}{\to}0$.
	Thus we have
	\begin{equation}
		\label{eq:tstar_<_hatt}
		\mathbb{P}(t^*\leq\hat{t})\geq\mathbb{P}(\hat{Q}_c(t^*)=0)\to1,\quad m\to\infty.
	\end{equation}
	Note that the BC threshold $\tau$ \eqref{eq:tau} can be equally expressed as 
	\begin{equation*}
		\tau=\max\left\{t\in\{\nu_{(i)}\}_{i=1}^m:\hat{Q}(t)\leq0\right\}.
	\end{equation*}
	Let $\tau=\nu_{(k)}$.
	By the definition of $t^*$ and the validity of the BC rejection set \eqref{dec:BC} (implied by Theorem \ref{thm:scq_validity} and Proposition \ref{prop:equiv_SCQ_BC}), we know that $\mathbb{P}(\nu_{(k)}\leq t^*)\geq\mathbb{P}(\bar{Q}(\nu_{(k)})\leq0)\to1$. Combining it with the result in \eqref{eq:tstar_<_hatt} and the fact that $\nu_{(k)}\leq\hat{t}<\nu_{(k+1)}$ (following by the definition of $\hat{t}$ in \eqref{eq:hat-t}), we have $\mathbb{P}(\nu_{(k)}\leq t^*\leq\hat{t}<\nu_{(k+1)})\to1$. This result indicates that, there always exists an interval $[\nu_{k},\nu_{k+1})$, such that with probability going to one, both \(t^*\) and \(\hat{t}\) lie in the interval $[\nu_{k},\nu_{k+1})$.
	Since $\hat{Q}^c(t)$ is monotone and continuous on $[\nu_{(k)},\nu_{(k+1)}]$, its inverse function $\hat{Q}^{c-1}(t)$ is then continuous on $[\hat{Q}^c(\nu_{(k)}),\hat{Q}^c(\nu_{(k+1)})]$. Thus, for any $\epsilon>0$, there exists $\gamma>0$ such that, with probability going to one, we have $|\hat{t}-t^*|=|\hat{Q}^{c-1}(0)-\hat{Q}^{c-1}(\hat{Q}^c(t^*))|\leq\epsilon, \text{ if }|\hat{Q}^c(t^*)|<\gamma$. Since $\hat{Q}^c(t^*)\overset{p}{\to}0$ as shown before, we obtain that
	\begin{equation*}
		\mathbb{P}(|\hat{t}-t^*|>\epsilon)\leq\mathbb{P}(|\hat{Q}^c(t^*)|\geq\gamma)\to0,\quad m\to\infty.
	\end{equation*}
	Hence, letting $\epsilon\to0$, we have \eqref{eq:hatt_to_tstar} proved.
	
	In the remainder of this section, to prove Theorem \ref{thm:asymptotic-FDR}, we will equivalently prove the desired FDR property of the decision rule derived from BC algorithm, which is denoted as $\pmb{\delta^{\tau}}=\left\{\delta_j^{\tau}\equiv \mathbb{I}(V_j\leq\tau,V_j<\tilde{V}_j):j\in\mathcal{D}^{\rm test}\right\}$. Let $\pmb{\hat{\delta}}=\left\{\hat{\delta}_j\equiv \mathbb{I}(V_j\leq\hat{t},V_j<\tilde{V}_j):j\in\mathcal{D}^{\rm test}\right\}$ and $\pmb{\delta^*}=\left\{\delta_j^*\equiv \mathbb{I}(V_j\leq t^*,V_j<\tilde{V}_j):j\in\mathcal{D}^{\rm test}\right\}$\footnote{By the definition of $t^*$ in \eqref{eq:tstar} and $t^\alpha_{OR}(\pmb{w})$ in \eqref{eq:oracle_threshold}, we have that the two decision rules, respectively thresholding at $t^*$ and $t^\alpha_{OR}(\pmb{w})$, are equivalent.} be the decision rules of the mirror-based algorithms with thresholds $\hat{t}$ and $t^*$ defined in \eqref{eq:hat-t} and \eqref{eq:tstar}, respectively.
	Define also the marginal FDR (mFDR) of decision rule $\pmb{\delta}=\{\delta_j:j\in\mathcal{D}^{\rm test}\}$ as $\text{mFDR}=\mathbb{E}\{\sum_{i\in\mathcal{D}^{\rm test}}(1- Y_i)\delta_i\}/\mathbb{E}(\sum_{i\in\mathcal{D}^{\rm test}}\delta_i)$. By Proposition 7 in \cite{tony2019covariate}, under Assumption \ref{ass:ER=Om} and the weak dependence condition \eqref{eq:cov} characterized in Lemma \ref{lem:cov}, we have $\text{FDR}_{\pmb{\delta^*}}=\text{mFDR}_{\pmb{\delta^*}}+o(1)=\alpha+o(1)$, where we use the notations $\text{FDR}_{\pmb{\delta}}$ and $\text{mFDR}_{\pmb{\delta}}$ to denote the FDR and mFDR of a procedure with decision rule $\pmb{\delta}$, respectively. Note that $\pmb{\hat{\delta}}$ coincides with $\pmb{\delta^{\tau}}$. As a result, it remains to show that $\text{FDR}_{\pmb{\hat{\delta}}}$ is asymptotically equal to $\text{FDR}_{\pmb{\delta^*}}$, and this will be proved by leveraging the asymptotic equivalence between thresholds $\hat{t}$ and $t^*$ as established in \eqref{eq:hatt_to_tstar}:
{\small	\begin{align*}
		\!\left|\text{FDR}_{\pmb{\delta^*}}-\text{FDR}_{\pmb{\hat{\delta}}}\right|
		&=\left|
		\mathbb{E}\!\left[
		\dfrac{\sum_{j\in\mathcal{D}^{\rm test}}\mathbb{I}(V_j\le t^*,\,V_j<\tilde V_j,\,Y_j=0)}
		{\sum_{j\in\mathcal{D}^{\rm test}}\mathbb{I}(V_j\le t^*,\,V_j<\tilde V_j)}
		-\dfrac{\sum_{j\in\mathcal{D}^{\rm test}}\mathbb{I}(V_j\le\hat t,\,V_j<\tilde V_j,\,Y_j=0)}
		{\sum_{j\in\mathcal{D}^{\rm test}}\mathbb{I}(V_j\le\hat t,\,V_j<\tilde V_j)}
		\right]\right|\\[0.3em]
		&\le 
		\mathbb{E}\!\left[
		\left|
		\dfrac{\sum_{j\in\mathcal{D}^{\rm test}}\mathbb{I}(V_j\le t^*,\,V_j<\tilde V_j,\,Y_j=0)}
		{\sum_{j\in\mathcal{D}^{\rm test}}\mathbb{I}(V_j\le t^*,\,V_j<\tilde V_j)}
		-\dfrac{\sum_{j\in\mathcal{D}^{\rm test}}\mathbb{I}(V_j\le\hat t,\,V_j<\tilde V_j,\,Y_j=0)}
		{\sum_{j\in\mathcal{D}^{\rm test}}\mathbb{I}(V_j\le\hat t,\,V_j<\tilde V_j)}
		\right|
		\right]\\
		&=
		\mathbb{E}\!\left[A\mid t^*=\hat t\right]\mathbb{P}(t^*=\hat t)
		+\mathbb{E}\!\left[A\mid t^*\neq\hat t\right]\mathbb{P}(t^*\neq\hat t)\\
		&\le \mathbb{P}(t^*\neq\hat t)=o(1),
	\end{align*}}
	where $A$ denotes the term $
	\frac{\sum_{j\in\mathcal{D}^{\rm test}}\mathbb{I}(V_j\le t^*,\,V_j<\tilde V_j,\,Y_j=0)}
	{\sum_{j\in\mathcal{D}^{\rm test}}\mathbb{I}(V_j\le t^*,\,V_j<\tilde V_j)}
	-\frac{\sum_{j\in\mathcal{D}^{\rm test}}\mathbb{I}(V_j\le\hat t,\,V_j<\tilde V_j,\,Y_j=0)}
	{\sum_{j\in\mathcal{D}^{\rm test}}\mathbb{I}(V_j\le\hat t,\,V_j<\tilde V_j)}$ for simplicity.		
	Finally, combining the above arguments, we conclude that $	\lim\limits_{m\to\infty}\text{FDR}_{\pmb{\delta^{\tau}}}=\alpha$, which completes the proof by confirming the FDR of BC algorithm can asymptotically achieve its target level $\alpha$.
\end{proof}

\subsection{Proof of Theorem \ref{thm:oracle_power}}
\begin{proof}
	Let $V_j\equiv p_j/w_j$ and $\tilde{V}_j\equiv \tilde{p}_j/w_j$ denote the weighted p-values with generic weight $w_j$ for each $j\in\mathcal{D}^{\rm test}$, where $p_j$ and $\tilde{p}_j$ are super-uniform conformal p-values constructed by \eqref{eq:conformal_p} with swap-invariant $s(\cdot)$ satisfying \eqref{eq:s_function_permu} under data exchangeability condition \eqref{eq:data_exch_joint}.
	We begin with a preliminary invariance claim: the conditional probability $\mathbb{P}\left(\eta_j=1\mid  Y_j=0,S_j\right)$ is identical for any $j\in\mathcal{D}^{\rm test}$. Note that $\eta_j$ does not depend on $S_j$, the claim then follows from the lemma of \cite{barber2015controlling}:
	\begin{lemma}$[\text{\cite{barber2015controlling}, Lemma 1}]$\label{l1}
		For any anti-symmetric function $h(x,y)$ satisfying $h(x,y)=-h(y,x)$, if the scores $\{(V_j,\tilde{V}_j):j\in\mathcal{D}^{\rm test}\}$ are pairwise exchangeable under the null, i.e., condition \eqref{eq:pwex-scores} holds, then $\left\{\text{sign}\left(h(V_j,\tilde{V}_j)\right):j\in\mathcal{H}_0\right\}$ are i.i.d. coin flips conditional on $\left\{|h(V_j,\tilde{V}_j)|:j\in\mathcal{D}^{\rm test}\right\}$.
	\end{lemma}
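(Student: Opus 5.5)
The plan is to follow the original argument of \cite{barber2015controlling}: derive a \emph{flip-sign} property for the statistics $W_j \coloneqq h(V_j,\tilde V_j)$ and then average over an independent Rademacher vector. First I would strengthen the single-coordinate condition \eqref{eq:pwex-scores} to a \emph{subset-swap} property:
\begin{equation*}
\bigl((V_j,\tilde V_j)_{j\in\mathcal{D}^{\rm test}}\bigr)_{\mathrm{swap}(\mathcal{J})} \overset{d}{=} (V_j,\tilde V_j)_{j\in\mathcal{D}^{\rm test}}, \qquad \forall\, \mathcal{J}\subset\mathcal{H}_0 .
\end{equation*}
This strengthening is genuinely needed. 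Swapping one coordinate at a time does not by itself give mutual independence of the signs. However, it follows from the proof of Proposition~\ref{prop:pairwise_scores} run with a general $\mathcal{J}$ in place of a singleton. The data condition \eqref{eq:data_exch_pair} already allows arbitrary $\mathcal{J}\subset\mathcal{H}_0$. The score map $h(x,S_j)=p(x)/w(S_j)$ is swap invariant by \eqref{eq:swap-inva-h}. Hence swapping the data on $\mathcal{J}$ swaps exactly the score pairs on $\mathcal{J}$ and leaves all others unchanged.

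Second, I would use antisymmetry. Since $h(x,y)=-h(y,x)$, swapping the pair $(V_j,\tilde V_j)$ replaces $W_j$ by $-W_j$ and leaves $|W_j|$ unchanged. Consequently, for every $\epsilon\in\{\pm1\}^{\mathcal{H}_0}$, extended by $\epsilon_j=1$ for $j\notin\mathcal{H}_0$, we get $(W_j)_{j} \overset{d}{=} (\epsilon_j W_j)_{j}$. Now let $\epsilon$ be drawn uniformly from $\{\pm1\}^{\mathcal{H}_0}$, independent of the data. Averaging the previous identity over $\epsilon$ gives $(W_j)_j \overset{d}{=} (\epsilon_j W_j)_j$ with $\epsilon$ random.

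Third, I would condition on the absolute values. Conditional on $\{|W_j|: j\in\mathcal{D}^{\rm test}\}$, and also on the signs of the non-null coordinates if desired, the vector of null signs equals in distribution $\bigl(\epsilon_j\,\mathrm{sign}(W_j)\bigr)_{j\in\mathcal{H}_0}$. Because $\epsilon$ is uniform and independent of everything else, multiplying any fixed sign pattern by $\epsilon$ produces i.i.d.\ fair coins. Formally, I would check the identity $\mathbb{P}(\mathrm{sign}(W_{\mathcal{H}_0})=\sigma,\ |W|\in B)=2^{-|\mathcal{H}_0|}\,\mathbb{P}(|W|\in B)$ for each $\sigma\in\{\pm1\}^{\mathcal{H}_0}$ and each Borel set $B$.

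The main obstacle is the first step: making sure that the joint subset-swap invariance holds, not just the marginal statement \eqref{eq:pwex-scores}. A secondary technical point concerns ties, i.e.\ $W_j=0$. There the sign is $0$, and the claim should be read on $\{j\in\mathcal{H}_0: W_j\neq0\}$. With a continuous $s(\cdot)$ and $h(x,y)=x-y$, ties between $V_j$ and $\tilde V_j$ occur with probability zero only if the p-values \eqref{eq:conformal_p} are tie-free. Since those p-values are discrete, I would state the result conditionally on nonzero $|W_j|$, which is all that Lemma~\ref{lemma:CLAW2} and the subsequent power argument use.
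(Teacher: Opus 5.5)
Your Steps 2 and 3 are the standard Barber--Cand\`es argument: antisymmetry turns a swap of the $j$-th pair into a sign flip of $W_j=h(V_j,\tilde V_j)$, leaving $|W_j|$ and every other $W_k$ fixed, and you then average over an independent Rademacher vector. This is what the paper relies on by citation; it gives no proof of its own. The problem is Step 1. You treat \eqref{eq:pwex-scores} as a marginal, one-coordinate-at-a-time statement that cannot yield independent signs, and so you go back to the data-level condition \eqref{eq:data_exch_pair} and the swap invariance \eqref{eq:swap-inva-h}. But \eqref{eq:pwex-scores} is not marginal: it involves $\mathcal{V}_{-i},\widetilde{\mathcal{V}}_{-i}$, i.e.\ the whole remaining vector. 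These should be read as indexed collections; that is how Proposition~\ref{prop:pairwise_scores} produces them, and it is the only reading under which the lemma's conclusion, which conditions on the indexed $|W_j|$, makes sense. Write $Z=((V_j,\tilde V_j))_{j\in\mathcal{D}^{\rm test}}$ and let $\sigma_i$ swap the entries of the $i$-th pair. Then \eqref{eq:pwex-scores} says $Z\overset{d}{=}\sigma_iZ$ for each $i\in\mathcal{H}_0$. Equality in law is preserved by a fixed measurable map, so $\sigma_{i_2}\sigma_{i_1}Z\overset{d}{=}\sigma_{i_2}Z\overset{d}{=}Z$, and by induction $Z\overset{d}{=}\sigma_{\mathcal{J}}Z$ for every $\mathcal{J}\subset\mathcal{H}_0$. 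The subset-swap property is therefore a one-line consequence of the stated hypothesis.

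The detour is not merely unnecessary. As written, it proves the lemma only for scores produced by the SCQ construction under the hypotheses of Proposition~\ref{prop:pairwise_scores} (or Remark~\ref{rmk:weaker-condi}). The lemma, however, is asserted for arbitrary scores satisfying \eqref{eq:pwex-scores}. Your narrower version happens to cover the paper's only use of the lemma, in the proof of Theorem~\ref{thm:oracle_power}, but you should replace Step 1 by the composition argument; Steps 2--3 then go through verbatim. Your caveat about ties is correct and not cosmetic: for null pairs built from \eqref{eq:conformal_p} with a continuous score, $\mathbb{P}(p_j=\tilde p_j)=2/(N+2)>0$ (cf.\ the proof of Lemma~\ref{lemma:F10i}). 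The clean fix is to state the conclusion as the flip-sign identity $(W_j)_j\overset{d}{=}(\epsilon_jW_j)_j$, which is unaffected by zeros. The coin-flip claim then holds on $\{j\in\mathcal{H}_0:W_j\neq0\}$, a set determined by $|W|$.
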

	Let $h(x,y)\coloneqq\text{sign}(x-y)(x\wedge y)$. By Lemma \ref{l1} and by the independence of $\eta_j$ and $S_j$, the conditional probability $\mathbb{P}\left(\eta_j=1\mid  Y_j=0,S_j\right)$ is the same for all $j\in\mathcal{D}^{\rm test}$.
	Based on this, we know that, for any generic weights $\pmb{w}=(w_j:j\in\mathcal{D}^{\rm test})$, the adjusted new weights (used for the proofs) $\pmb{w^*}=(w_j^*:j\in\mathcal{D}^{\rm test})$ that are derived from $\pmb{w}$ by \eqref{eq:w_star}  differ from $\pmb{w}$ only by a common multiplicative factor; consequently, the ordering of $\left\{p_j/w_j:j\in\mathcal{D}^{\rm test}\right\}$ coincides with that of $\left\{p_j/w_j^*:j\in\mathcal{D}^{\rm test}\right\}$. Moreover, note that rescaling $p_j$ and $\tilde{p}_j$ by $w_j$ or $w_j^*$ does not affect the pairwise comparison between them: $\mathbb{I}(p_j/w_j<\tilde{p}_j/w_j)=\mathbb{I}(p_j<\tilde{p}_j)=\eta_j$. As a result, since the generic decision rule $\delta^{\pmb{w}}(t)$ only depends on the orderings of $\left\{p_j/w_j:j\in\mathcal{D}^{\rm test}\right\}$ and $\left\{\tilde{p}_j/w_j:j\in\mathcal{D}^{\rm test}\right\}$, and on the indicators $\left\{\eta_j:j\in\mathcal{D}^{\rm test}\right\}$, we obtain that $\Psi_{\text{OR}}(\pmb{w})=\Psi_{\text{OR}}(\pmb{w^*})$ for any generic weights $\pmb{w}$.
	Consequently, it suffices for the power analysis to show that 
	\begin{equation}\label{eq:psiinequality}
		\Psi_{\text{OR}}(\pmb{w_s^*})\geq\Psi_{\text{OR}}(\pmb{1}),
	\end{equation}
	where $\pmb{w_s^*}\coloneqq(w_j^*:j\in\mathcal{D}^{\rm test})$ are the adjusted weights constructed by \eqref{eq:w_star} with the generic weights $(w_j:j\in\mathcal{D}^{\rm test})$ being specified as the structure-adaptive weights $\pmb{w_s}\coloneqq(w_j:j\in\mathcal{D}^{\rm test})$ utilized in the SCQ procedure.
	Define $\Psi(t,\pmb{w_s})= \mathbb{E}\left[\sum_{j\in\mathcal{D}^{\rm test}} \mathbb{I}( Y_j = 1, \delta^{w_j}(t) = 1)\right]$ as the expected number of true positives obtained from decision rule $\pmb{\delta^{w_s}}(t)$.
	Then, we will first show that, for any $t\in[0,1]$,
	\begin{equation}\label{psi comparison}
		\Psi(t,\pmb{w_s^*})\geq\Psi(t,\pmb{1}).
	\end{equation}
	We can rewrite $\Psi(t,\pmb{w_s})$ as follows:
	\begin{equation*}\label{psi}
		\begin{aligned}
			\Psi(t,\pmb{w_s})
			&=\mathbb{E}\left[\sum_{j\in\mathcal{D}^{\rm test}} \mathbb{I}( Y_j = 1, \delta^{w_j}(t) = 1)\right]
			=\mathbb{E}\left[\sum_{j\in\mathcal{D}^{\rm test}}\mathbb{I}(Y_j=1,p_j/w_j\leq t,p_j/w_j\leq \tilde{p}_j/w_j)\right]\\
			&=\mathbb{E}\left\{\mathbb{E}\left[\sum_{j\in\mathcal{D}^{\rm test}}\mathbb{I}( Y_j=1,p_j/w_j\leq t,p_j/w_j\leq \tilde{p}_j/w_j)\mid S_j\right]\right\}\\
			&=\mathbb{E}\left[\sum_{j\in\mathcal{D}^{\rm test}}\mathbb{P}\left( Y_j=1,\eta_j=1\mid S_j\right)\cdot\mathbb{P}\left(p_j\leq w_jt\mid\eta_j=1,  Y_j=1,S_j\right)\right]\\
			&=\mathbb{E}\left[\sum_{j\in\mathcal{D}^{\rm test}}\mathbb{P}\left( Y_j=1,\eta_j=1\mid S_i\right)\cdot F_{1,1,j}(w_jt)\right].
		\end{aligned}
	\end{equation*}
	Under Assumption \ref{ass:F_alt} (the regularity of $F_{1,1,j}$) and the weight condition in Assumption \ref{ass:weight_info}, we obtain that
	\begin{footnotesize}
		\begin{align*}
			\refstepcounter{equation}
			&\sum_{j\in\mathcal{D}^{\rm test}}\mathbb{P}\left( Y_j=1,\eta_j=1\mid S_j\right)\cdot F_{1,1,j}(w^*_jt)
			=\sum_{j\in\mathcal{D}^{\rm test}}\mathbb{P}\left( Y_j=1,\eta_j=1\mid S_j\right)\cdot F_{1,1,j}(t/w_j^{*-1})\\
			&\geq\sum_{j\in\mathcal{D}^{\rm test}}\mathbb{P}\left( Y_j=1,\eta_j=1\mid S_j\right)\cdot F_{1,1,j}\left(t\cdot\dfrac{\sum_{i\in\mathcal{D}^{\rm test}}\mathbb{P}\left(\eta_i=1, Y_i=1\mid S_i\right)}{\sum_{i\in\mathcal{D}^{\rm test}}\mathbb{P}\left(\eta_i=1, Y_i=1\mid S_i\right)w_i^{*-1}} \right)\\
			&=\sum_{j\in\mathcal{D}^{\rm test}}\mathbb{P}\left( Y_j=1,\eta_j=1\mid S_j\right)\cdot 
			F_{1,1,j}\left(t\cdot\frac{\sum_{i\in\mathcal{D}^{\rm test}} \mathbb{P}(\eta_i = 1,  Y_i = 0 \mid S_i)}{\sum_{i\in\mathcal{D}^{\rm test}} \mathbb{P}( Y_i = 0 \mid S_i) w_i}		\dfrac{\sum_{i\in\mathcal{D}^{\rm test}} \mathbb{P}(\eta_i = 1,  Y_i = 1 \mid S_i)}{\sum_{i\in\mathcal{D}^{\rm test}} \mathbb{P}(\eta_i = 1,  Y_i = 1 \mid S_i) w_i^{-1}}\right)\\
			&\geq\sum_{j\in\mathcal{D}^{test}}\mathbb{P}\left( Y_j=1,\eta_j=1\mid S_j\right)\cdot F_{1,1,j}(t).\tag{\theequation}
			\label{eq:result}
		\end{align*}
	\end{footnotesize}
	
	Then, for any $t\in[0,1]$, we have that
	\begin{equation*}
		\begin{split}
			\Psi(t,\pmb{w_s^*})
			&=\mathbb{E}\left[\sum_{j\in\mathcal{D}^{\rm test}}\mathbb{P}\left( Y_j=1,\eta_j=1\mid S_j\right)\cdot F_{1,1,j}(w^*_jt)\right]\\
			&\geq\mathbb{E}\left[\sum_{j\in\mathcal{D}^{\rm test}}\mathbb{P}\left( Y_j=1,\eta_j=1\mid S_j\right)\cdot F_{1,1,j}(t)\right]
			=\Psi(t,\pmb{1}).
		\end{split}
	\end{equation*}

	Let $N=|\mathcal{D}_0^{\rm cal}|$ be the number of calibration data utilized to compute conformal p-values by \eqref{eq:conformal_p}. Next, we show that, for any $t\in[\dfrac{1}{N+1},1]$, we have
	\begin{equation}\label{Q(t) comparison}
		Q(t,\pmb{w^*_s})\leq Q(t,\pmb{1}).
	\end{equation}
	For the notational convenience, let $a_{10}^j=\mathbb{P}\left(\eta_j=1, Y_j=0\mid S_j\right)$ and $a_{11}^j=\mathbb{P}\left(\eta_j=1, Y_j=1\mid S_j\right)$.
	Then, $Q(t,\pmb{w^*_s})$ can be written as
{\footnotesize	\begin{equation*}
		\begin{aligned}
			Q(t,\pmb{w^*_s})&=\dfrac{\sum_{j\in\mathcal{D}^{\rm test}}\mathbb{P}\left(\delta^{w_j^*}(t)=1, Y_j=0\mid S_j\right)}{\sum_{j\in\mathcal{D}^{\rm test}}\mathbb{P}\left(\delta^{w_j^*}(t)=1\mid S_j\right)}\\
			&=\dfrac{\sum_{j\in\mathcal{D}^{\rm test}}\mathbb{P}\left(p_j\leq w^*_jt\mid\eta_j=1,  Y_j=0,S_j\right)\cdot a_{10}^j}{\sum_{j\in\mathcal{D}^{\rm test}} a_{10}^j\mathbb{P}\left(p_j\leq w^*_jt\mid\eta_j=1,  Y_j=0,S_j\right)+\sum_{j\in\mathcal{D}^{\rm test}}  a_{11}^j\mathbb{P}\left(p_j\leq w^*_jt\mid\eta_j=1,  Y_j=1,S_j\right)}\\
		\end{aligned}	
	\end{equation*}}
	Under the null, we have the bound: $\mathbb{P}\left(p_j\leq w^*_jt\mid\eta_j=1,  Y_j=0,S_j\right)\leq\dfrac{w^*_jt}{\mathbb{P}\left(\eta_j=1\mid  Y_j=0,S_j\right)}$. Consequently, 
{\footnotesize	\begin{equation*}
		\begin{aligned}
			Q(t,\pmb{w^*_s})&\leq\dfrac{\sum_{j\in\mathcal{D}^{\rm test}}\dfrac{w^*_jt}{\mathbb{P}\left(\eta_j=1\mid  Y_j=0,S_j\right)}\cdot a_{10}^j}{\sum_{j\in\mathcal{D}^{\rm test}}\dfrac{w^*_jt}{\mathbb{P}\left(\eta_j=1\mid  Y_j=0,S_j\right)}\cdot a_{10}^j+\sum_{j\in\mathcal{D}^{\rm test}}\mathbb{P}\left(p_j\leq w^*_jt\mid\eta_j=1,  Y_j=1,S_j\right)\cdot a_{11}^j}\\
			&=\dfrac{\sum_{j\in\mathcal{D}^{\rm test}}t\cdot a_{10}^j}{\sum_{j\in\mathcal{D}^{\rm test}}t\cdot a_{10}^j+\sum_{j\in\mathcal{D}^{\rm test}}\mathbb{P}\left(p_j\leq w^*_jt\mid\eta_j=1,  Y_j=1,S_j\right)\cdot a_{11}^j}\\
		\end{aligned}
	\end{equation*}}
	Let $F_{1,0,j}(t)=\mathbb{P}\left(p_j\leq t\mid\eta_j=1, Y_j=0,S_j\right)$. Using the Lemma \ref{lemma:F10i} stated below (which guarantees $F_{1,0,i}(t)\geq t$) and the inequality \eqref{eq:result} established above, for any $t\in[\dfrac{1}{N+1},1]$, we obtain
	\begin{equation*}
		Q(t,\pmb{w^*_s})
		\leq\dfrac{\sum_{j\in\mathcal{D}^{\rm test}}F_{1,0,j}(t)\cdot a_{10}^j}{\sum_{j\in\mathcal{D}^{\rm test}}F_{1,0,j}(t)\cdot a_{10}^j+\sum_{j\in\mathcal{D}^{\rm test}}F_{1,1,j}(t)\cdot a_{11}^j}\\
		=Q(t,\pmb{1}).
	\end{equation*}
	\begin{lemma}\label{lemma:F10i}
		Suppose the number of calibration data used for calculating the conformal p-values $p_i$ is $N$. Assume that the initial score function $s(\cdot)$ utilized to calculate the conformal p-values defined in \eqref{eq:conformal_p} is continuous and satisfies the invariance principle \eqref{eq:s_function_permu}. Then, under data exchangeability condition \eqref{eq:data_exch_joint}, for any $t\in[1/(N+1),1]$, the conditional distribution function $F_{1,0,i}(t)\coloneqq\mathbb{P}\left(p_i\leq t\mid\eta_i=1, Y_i=0,S_i\right)$ satisfies: $F_{1,0,i}(t)\geq t$.
	\end{lemma}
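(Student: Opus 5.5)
The plan is to reduce $F_{1,0,i}(t)$ to an explicit combinatorial probability about ranks. I would work conditionally on $\mathbf{S}$ (in particular on $S_i$), on the non-null test points, on $\mathbf{X}_1$ and on $\mathbf{X}_0^{\rm tr}$.

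\textbf{Step 1: the ranks are uniform.} By \eqref{eq:data_exch_joint}, the $N$ calibration points together with the null test point $X_i$ and its mirror $\tilde X_i$ are exchangeable, jointly with the remaining null test and mirror points, given this conditioning information. By the permutation invariance \eqref{eq:s_function_permu}, the fitted score function $s(\cdot)$ is unchanged when these points are permuted. Hence the vector of scores
\[
\bigl(s(X_k): k\in\mathcal{D}_0^{\rm cal}\bigr),\quad s(X_i),\quad s(\tilde X_i)
\]
is exchangeable. Since $s$ is continuous, ties occur with probability zero. Therefore the pair of ranks of $s(X_i)$ and $s(\tilde X_i)$ among these $N+2$ values is uniformly distributed over ordered pairs of distinct positions in $\{1,\dots,N+2\}$.

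\textbf{Step 2: express $p_i$, $\tilde p_i$ and $\eta_i$ through the ranks.} Write $R$ and $R'$ for the ranks of $s(X_i)$ and $s(\tilde X_i)$.
\begin{itemize}
\item When $R<R'$, exactly $R-1$ calibration scores lie below $s(X_i)$, so \eqref{eq:conformal_p} gives $p_i=R/(N+1)$.
\item In the same case, $R'-2$ calibration scores lie below $s(\tilde X_i)$, so $\tilde p_i=(R'-1)/(N+1)$.
\item When $R>R'$, the same counting gives $p_i>\tilde p_i$.
\end{itemize}
Consequently $\eta_i=\mathbb{I}(p_i<\tilde p_i)=1$ exactly when $R'\ge R+2$. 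For a fixed $r\in\{1,\dots,N\}$, the number of admissible pairs with $R=r$ is $N+1-r$. Uniformity then gives, for integer $0\le k\le N$,
\[
\mathbb{P}(p_i\le k/(N+1)\mid \eta_i=1,\,Y_i=0,\,S_i)=\frac{\sum_{r=1}^k(N+1-r)}{N(N+1)/2}=\frac{k(2N+1-k)}{N(N+1)} .
\]
This value does not depend on $S_i$, which is consistent with the invariance claim made earlier in the proof of Theorem~\ref{thm:oracle_power}.

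\textbf{Step 3: compare with $t$ for arbitrary $t\in[1/(N+1),1]$.} Set $k=\lfloor t(N+1)\rfloor$. The assumption $t\ge 1/(N+1)$ guarantees $k\ge1$, and $p_i$ takes values on the grid $\{r/(N+1)\}$, so $F_{1,0,i}(t)$ equals the display above evaluated at $k$. If $k=N+1$ then $t=1$, and $F_{1,0,i}(1)=1$ trivially. Otherwise $1\le k\le N$ and $t<(k+1)/(N+1)$, so it suffices to show
\[
k(2N+1-k)\ \ge\ N(k+1).
\]
Expanding, $k(2N+1-k)-N(k+1)=(k-1)(N-k)$, which is nonnegative for $1\le k\le N$. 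This gives $F_{1,0,i}(t)\ge t$.

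\textbf{Main obstacle.} The delicate part is Step 1. One must justify that conditioning on $\eta_i=1$, $Y_i=0$ and $S_i$ leaves the rank pair uniform, even though $s$ may depend on the test and mirror data, as in the PUC case. The argument I would use is that $s$ is invariant under permutations of $(\mathbf{X},\tilde{\mathbf{X}},\mathbf{X}_0^{\rm cal})$ while $S_i$ lies in the conditioning set of \eqref{eq:data_exch_joint}. Together these leave the joint law of the scores exchangeable given $S_i$, and the grid counting in Steps 2 and 3 then follows directly.
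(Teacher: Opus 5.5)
Your proposal is correct and follows essentially the same route as the paper. Both arguments use uniform ranks of the $N+2$ scores, the representation $p_i=R/(N+1)$, $\tilde p_i=(R'-1)/(N+1)$, and the grid value $k(2N+1-k)/\{N(N+1)\}$ (the paper writes it as a ratio of joint-pmf sums that simplifies to this). Both then compare it with $(k+1)/(N+1)$; you make the factorization $(k-1)(N-k)\ge 0$ explicit where the paper only asserts the inequality, and you justify the conditioning on $S_i$ more carefully.

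One harmless slip: when $R>R'$ you only get $p_i\ge\tilde p_i$, with equality if $R=R'+1$. This still gives $\eta_i=0$, so nothing changes.
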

	
	The proof of Lemma \ref{lemma:F10i} will be given in Appendix \ref{proof:lemma-F10i}. Note that each $p_i$ takes values at $\left\{\dfrac{1}{N+1},\dfrac{2}{N+1},\dots,1\right\}$. Then, if $t_{\rm OR}^\alpha(\pmb{1})$ lies in $[0,\dfrac{1}{N+1})$, the inequality \eqref{eq:psiinequality} holds naturally since $\Psi_{\text{OR}}(\pmb{1})=0$ at that time. Next, if $t_{\rm OR}^\alpha(\pmb{1})\in[\dfrac{1}{N+1},1]$, by \eqref{Q(t) comparison}, 
	we have $Q(t_{\rm OR}^\alpha(\pmb{1}),\pmb{w^*_s})\leq Q(t_{\rm OR}^\alpha(\pmb{1}),\pmb{1})\leq\alpha$. Hence, $t_{\rm OR}^\alpha(\pmb{1})\leq t_{\rm OR}^\alpha(\pmb{w_s^*})$. Finally, by \eqref{psi comparison} and the monotonicity of $\Psi(t,\cdot)$ in the threshold $t$, we can establish the desired inequality as follows:
	\begin{equation*}
		\Psi_{\rm OR}(\pmb{1})\equiv\Psi(t_{\rm OR}^\alpha(\pmb{1}),\pmb{1})\leq\Psi(t_{\rm OR}^\alpha(\pmb{w_s^*}),\pmb{1})\leq\Psi(t_{\rm OR}^\alpha(\pmb{w_s^*}),\pmb{w^*_s})\equiv\Psi_{\rm OR}(\pmb{w_s^*})=\Psi_{\rm OR}(\pmb{w_s}).
	\end{equation*}
\end{proof}

\subsection{Generalization of Theorem \ref{thm:BC_power} and its proof}\label{proof:BC_power}

The result of Theorem~\ref{thm:BC_power} can be generalized to mirror-based algorithms with any pairwise exchangeable conformity scores, taking the p-value weighting procedure as a special case.
We formally state this in Theorem~\ref{thm:BC_power_general} below and subsequently provide its proof, with Theorem~\ref{thm:BC_power} following as an immediate corollary. 
In Theorem \ref{thm:BC_power_general}, we establish the asymptotic equivalence between the expected true-positive counts produced by the mirror-based algorithm using the data-driven threshold $\tau$ \eqref{eq:tau} and the oracle procedure with threshold $t^*$ \eqref{eq:tstar}. This provides the foundation of extending existing power analysis theories, which are based on the oracle thresholds to the data-driven procedures (see Theorem \ref{thm:DD_power}).
Recall the definitions of $\pmb{\delta^{\tau}}$ and $\pmb{\delta^*}$, which represent the decision rules of the aforementioned two procedures, respectively: $\pmb{\delta^{\tau}}=\left\{\delta_j^{\tau}\equiv \mathbb{I}(V_j\leq\tau,V_j<\tilde{V}_j):j\in\mathcal{D}^{\rm test}\right\}$ and  $\pmb{\delta^*}=\left\{\delta_j^*\equiv \mathbb{I}(V_j\leq t^*,V_j<\tilde{V}_j):j\in\mathcal{D}^{\rm test}\right\}$.
We adjust Assumption \ref{ass:ER=Om} to the more general Assumption \ref{ass:ER=Om_general}, which does not specify the score type, and then present Theorem \ref{thm:BC_power_general}, whose proof is given thereafter.
\begin{assumption}\label{ass:ER=Om_general}
	For $m$ pairs of weighted conformal p-values $\{(V_j,\tilde{V}_j):j\in\mathcal{D}^{\rm test}\}$, we have	$\mathbb{E}\left[\sum_{j\in\mathcal{D}^{\rm test}}\mathbb{I}(V_j\leq t^* ,V_j<\tilde{V}_j)\right]\asymp m$, where $t^*$ is the oracle threshold defined in \eqref{eq:tstar}. 
\end{assumption}

\begin{theorem}[Generalization of Theorem \ref{thm:BC_power}]\label{thm:BC_power_general}
	Suppose we have $m$ pairs of conformity scores $\{(V_j,\tilde{V}_j):j\in\mathcal{D}^{\rm test}\}$ satisfying the pairwise exchangeability \eqref{eq:pwex-scores}.
	Define $\Psi_{\pmb{\delta^{\tau}}}\coloneqq\mathbb{E}\left[\sum_{j\in\mathcal{D}^{\rm test}}\mathbb{I}( Y_j=1,\delta_j^{\tau}=1)\right]$ and $\Psi_{\pmb{\delta^{*}}}\coloneqq\mathbb{E}\left[\sum_{j\in\mathcal{D}^{\rm test}}\mathbb{I}( Y_j=1,\delta_j^{*}=1)\right]$ as the expected numbers of the true positives obtained from $\pmb{\delta^{\tau}}$ and $\pmb{\delta^*}$, which denote the decision rules of the mirror-based algorithms with data driven threshold $\tau$ \eqref{eq:tau} and oracle threshold $t^*$ \eqref{eq:tstar}, respectively. Then,	under Assumptions \ref{ass:sep} and~\ref{ass:ER=Om_general}, and the conditions in Lemma \ref{lem:cov}, we have 
	$$
	\lim_{m \to \infty}\dfrac{\Psi_{\pmb{\delta^{\tau}}}}{\Psi_{\pmb{\delta^*}}} = 1.
	$$
\end{theorem}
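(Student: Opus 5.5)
The plan is to reuse the threshold-consistency machinery from the proof of Theorem~\ref{thm:asymptotic-FDR} and transfer it from the false-discovery count to the true-positive count. Let $\hat t$ be the interpolated threshold in \eqref{eq:hat-t} and $t^*$ the oracle threshold in \eqref{eq:tstar}. Recall that the decision rule $\pmb{\hat\delta}$ coincides with $\pmb{\delta^\tau}$. Hence it suffices to compare
\[
T(\hat t) \coloneqq \sum_{j}\mathbb{I}(Y_j=1,\,V_j\le \hat t,\,V_j<\tilde V_j)
\qquad\text{with}\qquad
T(t^*) \coloneqq \sum_{j}\mathbb{I}(Y_j=1,\,V_j\le t^*,\,V_j<\tilde V_j).
\]
Steps (a)--(b) of that proof used only pairwise exchangeability \eqref{eq:pwex-scores}, Assumption~\ref{ass:sep}, and the weak dependence \eqref{eq:cov} guaranteed by Lemma~\ref{lem:cov}. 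They therefore yield $\hat t - t^*\overset{p}{\to}0$ here as well. More sharply, they yield that, with probability tending to one, $\hat t$ and $t^*$ lie in a common interval $[\nu_{(k)},\nu_{(k+1)})$ between consecutive order statistics of $\nu_i=\min(V_i,\tilde V_i)$.

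\textbf{Key step.} The rejection sets generated by thresholds $t^*$ and $\hat t$ coincide whenever both thresholds lie in the same gap $[\nu_{(k)},\nu_{(k+1)})$. The reason is that the indicator $\mathbb{I}(V_j\le t, V_j<\tilde V_j)$ can change value only as $t$ crosses some $\nu_i$. Consequently $\mathbb{P}(T(\hat t)\neq T(t^*))\le \mathbb{P}(\text{not in a common gap})=o(1)$. Since $|T(\hat t)-T(t^*)|\le m$, this gives
\[
|\Psi_{\pmb{\delta^\tau}}-\Psi_{\pmb{\delta^*}}|\le m\cdot o(1).
\]

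\textbf{Normalization.} It then remains to show that $\Psi_{\pmb{\delta^*}}$ is of order $m$. Assumption~\ref{ass:ER=Om_general} says the expected total number of rejections at $t^*$ is of order $m$. By definition of $t^*$, the marginal FDR at $t^*$ is at most $\alpha<1$, so the expected true positives are at least $(1-\alpha)\,\mathbb{E}[\text{rejections}]\asymp m$. Dividing gives $\Psi_{\pmb{\delta^\tau}}/\Psi_{\pmb{\delta^*}}=1+o(1)$.

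\textbf{Main obstacle.} The delicate step is the $o(1)$ bound on $\mathbb{P}(T(\hat t)\neq T(t^*))$, because $\hat t-t^*\to0$ alone does not prevent an order statistic $\nu_i$ from falling between them. The first thing to try is to use the common-gap event established in Step (b), which gives $\nu_{(k)}\le t^*\le\hat t<\nu_{(k+1)}$ with probability tending to one. If that event is deemed too fragile, the fallback is a quantitative argument. The number of $\nu_i$ in $[t^*-\epsilon,t^*+\epsilon]$ is controlled in expectation by the continuity of the marginal distributions of $V_j$, which follows from the continuity of $s$. Combined with the WLLN under \eqref{eq:cov}, this shows $|T(\hat t)-T(t^*)|/m\overset{p}{\to}0$. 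Bounded convergence then gives $(\Psi_{\pmb{\delta^\tau}}-\Psi_{\pmb{\delta^*}})/m\to0$, which suffices together with the normalization step.
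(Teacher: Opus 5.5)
Your route is the paper's route. You import the threshold analysis of Step~(b) in the proof of Theorem~\ref{thm:asymptotic-FDR}, bound $|\Psi_{\pmb{\delta^{\tau}}}-\Psi_{\pmb{\delta^*}}|$ by $m$ times the probability that the two decision vectors differ, and normalize using Assumption~\ref{ass:ER=Om_general} and the mFDR constraint at $t^*$. Your remark that $\mathbb{I}(V_j\le t,\,V_j<\tilde V_j)$ depends on $t$ only through the gap of $\{\nu_{(i)}\}$ containing $t$ is the right bridge. The paper instead bounds the difference by $\mathbb{P}(\hat t\neq t^*)$, which is not implied by $\hat t-t^*\overset{p}{\to}0$ and is typically close to $1$.

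However, the step you flagged as the main obstacle is a genuine gap, and citing Step~(b) does not close it. There the common-gap event is deduced from $\mathbb{P}(\hat Q^c(t^*)=0)\to1$, which does not follow from $\hat Q^c(t^*)\overset{p}{\to}0$. Heuristically, in typical settings the event even has vanishing probability. Write $r^*=\mathbb{E}\sum_j R_j(t^*)$. Then $\hat Q$ fluctuates at $t^*$ on the scale $\sqrt{r^*}/m\gg 1/m$, while $\bar Q$ crosses zero there with non-vanishing slope and the $\nu_i$ are spaced at order $1/m$. So roughly $\sqrt{r^*}$ of the $\nu_i$ typically separate $\hat t$ from $t^*$. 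The paper also derives $\hat t-t^*\overset{p}{\to}0$ from this event, so your fallback rests on the same unproved step. Its anti-concentration claim also does not follow from continuity of $s$: the $V_j=p_j/w_j$ are discrete, and the control needed is uniform in $m$ at the moving point $t^*$.

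More importantly, put your own two bounds side by side. Since $T(t)$ counts only outliers, $0\le T(t)\le m_1$ for every $t$. Under the conditions of Lemma~\ref{lem:cov} this gives $|\Psi_{\pmb{\delta^{\tau}}}-\Psi_{\pmb{\delta^*}}|\le m_1=O(m^{1-\epsilon_1})=o(m)$, with no threshold analysis at all. But the same inequality gives $\Psi_{\pmb{\delta^*}}\le m_1=o(m)$. This contradicts your normalization $\Psi_{\pmb{\delta^*}}\ge(1-\alpha)r^*\asymp m$, and the paper's proof makes the identical step.

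So $m_1\asymp m^{1-\epsilon_1}$, Assumption~\ref{ass:ER=Om_general}, and the defining property of $t^*$ cannot hold together for large $m$; the statement is true only vacuously. A meaningful version must normalize by the actual order of $\Psi_{\pmb{\delta^*}}$, which is at most $m_1$. You would then need the number of outliers whose $\nu_i$ lies between $\hat t$ and $t^*$ to be $o_p(m_1)$. That requires a quantitative localization of $\hat t$ around $t^*$ together with anti-concentration of the outlier scores near $t^*$, and neither the common-gap argument nor continuity of $s$ supplies this.
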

\begin{proof}	
	We follow the similar arguments as in the proof of Theorem \ref{thm:asymptotic-FDR} in Appendix \ref{proof:BC FDR analysis} and retain the notations introduced therein.
	Recall the definition of $\hat{t}$ in \eqref{eq:hat-t}, denote $\Psi_{\pmb{\hat{\delta}}}\coloneqq\mathbb{E}\left[\sum_{j\in\mathcal{D}^{\rm test}}\mathbb{I}( Y_j=1,\hat{\delta}_j=1)\right]$ as expected number of true positives obtained from $\pmb{\hat{\delta}}=\left\{\hat{\delta}_j\equiv \mathbb{I}(V_j\leq\hat{t},V_j<\tilde{V}_j):j\in\mathcal{D}^{\rm test}\right\}$, which represents the decision rule of the mirror-based algorithm with threshold $\hat{t}$.
	Since we have shown $\hat{t}-t^*\overset{p}{\to}0$ in Appendix \ref{proof:BC FDR analysis}, we know that
	
	\begin{align*}
		\left|\dfrac{1}{m}\left(\Psi_{\pmb{\hat{\delta}}}-\Psi_{\pmb{\delta^{*}}}\right)\right|
		&=\left|\mathbb{E}\left[\dfrac{1}{m}\sum_{j\in\mathcal{D}^{\rm test}}\mathbb{I}( Y_j=1,\hat{\delta}_j=1)-\dfrac{1}{m}\sum_{j\in\mathcal{D}^{\rm test}}\mathbb{I}( Y_j=1,\delta_j^*=1)\right]\right|\\
		&\leq\mathbb{E}\left[\left|\dfrac{\sum_{j\in\mathcal{D}^{\rm test}}\left\{\mathbb{I}( Y_j=1,\hat{\delta}_j=1)-\mathbb{I}( Y_j=1,\delta_j^*=1)\right\}}{m}\right|\mid \hat{t}=t^*\right]\mathbb{P}\left(\hat{t}=t^*\right)\\
		&\quad+\mathbb{E}\left[\left|\dfrac{\sum_{j\in\mathcal{D}^{\rm test}}\left\{\mathbb{I}( Y_j=1,\hat{\delta}_j=1)-\mathbb{I}( Y_j=1,\delta_j^*=1)\right\}}{m}\right|\mid \hat{t}\neq t^*\right]\mathbb{P}\left(\hat{t}\neq t^*\right)\\
		&\leq\mathbb{P}\left(\hat{t}\neq t^*\right)
		=o(1).		
	\end{align*}
	Under Assumption \ref{ass:ER=Om_general}, we have $\Psi_{\pmb{\delta^{*}}}=(1-\alpha)\mathbb{E}\left[\sum_{j\in\mathcal{D}^{\rm test}}\mathbb{I}(\delta_j^*=1)\right]\asymp m.$ Therefore, we can conclude that 
	\begin{equation*}
		\dfrac{\Psi_{\pmb{\delta^{\tau}}}}{\Psi_{\pmb{\delta^*}}} = \dfrac{\Psi_{\pmb{\hat{\delta}}}}{\Psi_{\pmb{\delta^*}}}=1+\dfrac{\dfrac{1}{m}\left(\Psi_{\pmb{\hat{\delta}}}-\Psi_{\pmb{\delta^{*}}}\right)}{\dfrac{1}{m}\Psi_{\pmb{\delta^*}}}=1+o(1),
	\end{equation*}
	which establishes the desired asymptotic equivalence between the expected numbers of true positives obtained by the mirror-based algorithms using the data-driven threshold $\tau$ and those obtained under the oracle threshold $t^*$.
\end{proof}

\subsection{Proof of Theorem \ref{thm:DD_power}}
\begin{proof}
	Combining the results from Theorem \ref{thm:oracle_power} and Theorem \ref{thm:BC_power}, we have Theorem \ref{thm:DD_power} proved by noticing that
	\begin{equation*}
		\begin{aligned}
			\lim\limits_{m \to \infty}\dfrac{\Psi_{\text{DD}}(\pmb{w_s})}{\Psi_{\text{DD}}(\pmb{1})}
			&= \lim\limits_{m \to \infty}\dfrac{\Psi_{\text{DD}}(\pmb{w_s})}{\Psi_{\text{OR}}(\pmb{w_s})}\cdot\dfrac{\Psi_{\text{OR}}(\pmb{w_s})}{\Psi_{\text{OR}}(\pmb{1})}\cdot\dfrac{\Psi_{\text{OR}}(\pmb{1})}{\Psi_{\text{DD}}(\pmb{1})}\\
			&\geq\lim\limits_{m \to \infty}\dfrac{\Psi_{\text{DD}}(\pmb{w_s})}{\Psi_{\text{OR}}(\pmb{w_s})}\cdot\dfrac{\Psi_{\text{OR}}(\pmb{1})}{\Psi_{\text{DD}}(\pmb{1})}=1,
		\end{aligned}
	\end{equation*}
	where the last inequality and the last equality follow from Theorem \ref{thm:oracle_power} and Theorem \ref{thm:BC_power}, respectively.
\end{proof}

\subsection{Proof of Lemma \ref{lem:cov}}
\begin{proof}
	Without loss of generality, we assume $w_j \in (0,1]$. Otherwise, one can normalize $(w_j : j \in \mathcal{D}^{\rm test})$, which leaves the score ranking -- and hence the decision rule -- unchanged. For any four null conformal p-values, $(p_1,p_2,p_3,p_4)\in\{\frac{1}{N+1},\frac{2}{N+2},\dots,1\}^4$, which are constructed by \eqref{eq:conformal_p} with continuous score function $s(\cdot)$, there exist the following five behaviors that $(p_1,\dots,p_4)$ exchangeably exhibit:
	\begin{itemize}
		\item [A:] $p_1=p_2=p_3=p_4=\dfrac{a}{N+1}$;
		\item [B:] $p_1=p_2=p_3=\dfrac{a}{N+1},p_4=\dfrac{b}{N+1}$;
		\item [C:] $p_1=p_2=\dfrac{a}{N+1},p_3=p_4=\dfrac{b}{N+1}$;
		\item [D:] $p_1=p_2=\dfrac{a}{N+1},p_3=\dfrac{b}{N+1},p_4=\dfrac{c}{N+1}$;
		\item [E:]
		$p_1=\dfrac{a}{N+1},p_2=\dfrac{b}{N+1},p_3=\dfrac{c}{N+1},p_4=\dfrac{d}{N+1}$,
	\end{itemize}
	where $a,b,c,d\in\{1,\dots,N+1\}$ and $a\neq b\neq c\neq d$. Denote $\Pi_4 = (N+4)(N+3)(N+2)(N+1)$, we have the joint distribution of $(p_1,\dots,p_4)$ as follows:
	\begin{equation}\label{eq:joint_four_p}
		\begin{gathered}
			\mathbb{P}(p_1=p_2=p_3=p_4=\dfrac{a}{N+1})=\dfrac{24}{\Pi_4},\\
			\mathbb{P}(p_1=p_2=p_3=\dfrac{a}{N+1},p_4=\dfrac{b}{N+1})=\dfrac{6}{\Pi_4},\\
			\mathbb{P}(p_1=p_2=\dfrac{a}{N+1},p_3=p_4=\dfrac{b}{N+1})=\dfrac{4}{\Pi_4},\\
			\mathbb{P}(p_1=p_2=\dfrac{a}{N+1},p_3=\dfrac{b}{N+1},p_4=\dfrac{c}{N+1})=\dfrac{2}{\Pi_4},\\
			\mathbb{P}(p_1=\dfrac{a}{N+1},p_2=\dfrac{b}{N+1},p_3=\dfrac{c}{N+1},p_4=\dfrac{d}{N+1})=\dfrac{1}{\Pi_4}.
		\end{gathered}
	\end{equation}	
	
	In the remainder of this section, we prove that the first condition in \eqref{eq:cov} holds under the assumptions of Lemma~\ref{lem:cov}, while the proof of the remaining condition is analogous and therefore omitted.	
	For any null units $i,j\in\mathcal{H}_0$, and any $t,w_i,w_j\in(0,1)$, we discuss $\text{Cov}\left[R_i(t),R_j(t)\right]$ under two situations. First, when there exists some $l\in\{0,\dots,N\}$ such that $tw_i$ and $tw_j$ both belong to $[\frac{l}{N+1},\frac{l+1}{N+1})$, by the joint distribution given in \eqref{eq:joint_four_p}, we have 
	\begin{align*}
		\mathbb{P}(&p_i\leq \dfrac{l}{N+1},p_i<\tilde{p}_i,p_j\leq \dfrac{l}{N+1},p_j<\tilde{p}_j)\\
		&=\sum_{a=1}^l\sum_{b=1}^l\sum_{c=a+1}^{N+1}\sum_{d=b+1}^{N+1}\mathbb{P}(p_i=\dfrac{a}{N+1},\tilde{p}_i=\dfrac{c}{N+1},p_j=\frac{b}{N+1},\tilde{p}_j=\frac{d}{N+1})\\
		&=
		\dfrac{\,l\left(12N^{2}l+12N^{2}-12Nl^{2}+24Nl+12N+3l^{3}-22l^{2}+9l+10\right)}{12(N+4)(N+3)(N+2)(N+1)}.
	\end{align*}
	Meanwhile, we also obtain 
	\begin{equation*}
		\mathbb{P}(p_i\leq \dfrac{l}{N+1},p_i<\tilde{p}_i)	
		=\sum_{j=1}^l\sum_{k=j+1}^{N+1}
		\mathbb{P}(p_i=\frac{j}{N+1},\tilde p_i=\frac{k}{N+1})
		=\dfrac{l}{N+2}-\frac{l(l+1)}{2(N+1)(N+2)}.
	\end{equation*}
	Hence, we have
	{\setlength{\belowdisplayskip}{4pt}
		\begin{align*}
			\text{Cov}\left[R_i(t),R_j(t)\right]
			&=\mathbb{P}(p_i\leq\frac{l}{N+1},p_i<\tilde{p}_i,p_j\leq\frac{l}{N+1},p_j<\tilde{p}_j)\\
			&\quad	-\mathbb{P}(p_i\leq\frac{l}{N+1},p_i<\tilde{p}_i)\mathbb{P}(p_j\leq\frac{l}{N+1},p_j<\tilde{p}_j)\\
			&=\dfrac{\,l\left(12N^{2}l+12N^{2}-12Nl^{2}+24Nl+12N+3l^{3}-22l^{2}+9l+10\right)}{12(N+4)(N+3)(N+2)(N+1)}\\
			&\quad -
			(\dfrac{l}{N+2}-\frac{l(l+1)}{2(N+1)(N+2)})^2\\
			&= \dfrac{12l^2N^2-12Nl^3+3l^4}{12(N+4)(N+3)(N+2)(N+1)}-\dfrac{4N^2l^2-4Nl^3+l^4}{4(N+1)^2(N+2)^2} + O(N^{-1})\\
			&=
			-\dfrac{l^2(2N - l)^2 (2N + 5)}{2(N+4)(N+3)(N+2)^2(N+1)^2}+ O(N^{-1})\\
			&=  O(N^{-1}).
	\end{align*}}
	Second, when $tw_i$ and $tw_j$ do not belong to the same interval with length $\frac{1}{N+1}$, there exist $l,k\in\{0,\dots,N\}$, such that $tw_i\in[\frac{l}{N+1},\frac{l+1}{N+1})$ and $tw_j\in[\frac{k}{N+1},\frac{k+1}{N+1})$. Without loss of generality, let $l<k$. Then, similarly we have
	{\setlength{\belowdisplayskip}{4pt}	\begin{align*}
			&\text{Cov}\left[R_i(t),R_j(t)\right]\\
			&=\mathbb{P}(p_i\leq\frac{l}{N+1},p_i<\tilde{p}_i,p_j\leq\frac{k}{N+1},p_j<\tilde{p}_j)\\
			&\quad -\mathbb{P}(p_i\leq\frac{l}{N+1},p_i<\tilde{p}_i)\mathbb{P}(p_j\leq\frac{k}{N+1},p_j<\tilde{p}_j)\\
			&=\frac{
				l\left(4N^{2}k+4N^{2}-2Nk^{2}-2Nkl+12Nk-4N+4N+k^{2}l-5k^{2}-kl+17k-8l-4\right)
			}{
				4(N+1)(N+2)(N+3)(N+4)
			}\\
			&\quad -
			\frac{\big(l^{2}k^{2}+4N^{2}lk-2Nl^{2}k-2Nlk^{2}\big)(2N+5)}
			{2(N+1)^{2}(N+2)^{2}(N+3)(N+4)}\\
			&= \dfrac{4N^2lk-2Nlk^2-2Nkl^2+k^2l^2}{4(N+4)(N+3)(N+2)(N+1)} - \dfrac{l^2k^2+4N^2lk-2Nl^2k-2Nlk^2}{4(N+1)^2(N+2)^2} + O(N^{-1})\\
			&= -\frac{k l (2N-k)(2N-l)(2N+5)}{2 (N+4)(N+3)(N+2)^2 (N+1)^2}
			+ O(N^{-1})\\
			&=O(N^{-1}).
	\end{align*}}
	Note that
	\begin{align*}
		\dfrac{1}{m^2}\sum\limits_{\substack{i<j,\\i,j\in\mathcal{D}^{\rm test}}}\text{Cov}\left[R_i(t),R_j(t)\right]
		&=\dfrac{1}{m^2}\{(\sum\limits_{\substack{i<j,\\i,j\in\mathcal{H}_0}}+\sum\limits_{\substack{i<j,\\i,j\in\mathcal{H}_1}}+\sum\limits_{\substack{i<j,\\i\in\mathcal{H}_0,\\j\in\mathcal{H}_1}}+\sum\limits_{\substack{i<j,\\i\in\mathcal{H}_1,\\j\in\mathcal{H}_0}})\text{ Cov}\left[R_i(t),R_j(t)\right]\}.
	\end{align*}
	As a result, since we have $\dfrac{1}{m^2}\sum\limits_{\substack{i<j,\\i,j\in\mathcal{H}_0}}\text{Cov}\left[R_i(t),R_j(t)\right]
	= \dfrac{m_0^2-m_0}{2m^2}\cdot O(N^{-1}) = O(m^{-\epsilon_{N}})$, and 
	{\setlength{\belowdisplayskip}{4pt}		\begin{align*}
			\Big|\dfrac{1}{m^2}\{(\sum\limits_{\substack{i<j,\\i,j\in\mathcal{H}_1}}+\sum\limits_{\substack{i<j,\\i\in\mathcal{H}_0,\\j\in\mathcal{H}_1}}+\sum\limits_{\substack{i<j,\\i\in\mathcal{H}_1,\\j\in\mathcal{H}_0}})\text{ Cov}\left[R_i(t),R_j(t)\right]\}\Big|
			&\leq \dfrac{1}{m^2}(\sum\limits_{\substack{i<j,\\i,j\in\mathcal{H}_1}}1+\sum\limits_{\substack{i<j,\\i\in\mathcal{H}_0,\\j\in\mathcal{H}_1}}1+\sum\limits_{\substack{i<j,\\i\in\mathcal{H}_1,\\j\in\mathcal{H}_0}}1)\\
			&=\dfrac{1}{m^2}\left(O(m^{2-2\epsilon_1})+O(m^{2-\epsilon_1})\right)
			=O (m^{-\epsilon_1}),
	\end{align*}}
	we conclude that $\dfrac{1}{m^2}\sum\limits_{\substack{i<j,\\i,j\in\mathcal{D}^{\rm test}}}\text{Cov}\left[R_i(t),R_j(t)\right]\to0$ when $m\to\infty$, which establishes the first part of \eqref{eq:cov}. 
	The proof of the remaining condition follows analogously and is therefore omitted.
\end{proof}

\subsection{Proof of Lemma \ref{lemma:F10i}}\label{proof:lemma-F10i}

\begin{proof}
	We first give the joint distribution of the conformal p-values $p_i$ and $\tilde{p}_i$ calculated by \eqref{eq:conformal_p} for null samples $X_i$ and $\tilde{X}_i$. Let $s_1,\dots,s_N$ denote the scores of samples in $\mathcal{D}_0^{\rm cal}$, and let $s_{N+1}$ and $s_{N+2}$ denote the scores of $X_i$ and $\tilde{X}_i$, respectively, all computed by the scoring function $s(\cdot)$. 
	Denote $r_1,\dots,r_N,r_{N+1},r_{N+2}$ as the corresponding ranks.
	Then, since $s(\cdot)$ is continuous, $s_1,\dots,s_{N+2}$ exchangeably follow a non-atomic distribution and $r_1,\dots,r_{N+2}$ are mutually distinct almost surely. Therefore, $(r_1,\dots,r_{N+2})\sim \text{Unif}\{1,\dots,N+2\}.$
	Thus, for any $j=1,\dots,N+1$,
	
	\begin{equation}\label{joint pi1}
		\begin{aligned}
			\mathbb{P}(p_{N+1}=p_{N+2}=\frac{j}{N+1})
			&=\mathbb{P}\left(r_{N+1}=j,r_{N+2}=j\right)+\mathbb{P}\left(r_{N+1}=j+1,r_{N+2}=j\right)\\
			&=2\mathbb{P}\left(r_{N+1}=j,r_{N+2}=j+1\right)=\dfrac{2}{(N+1)(N+2)},
		\end{aligned}
	\end{equation}
	for any $1\leq j<k\leq N+1$,
	\begin{equation}\label{joint pi2}
		\mathbb{P}(p_{N+1}=\frac{j}{N+1},p_{N+2}=\frac{k}{N+1})
		=\mathbb{P}\left(r_{N+1}=j,r_{N+2}=k+1\right)
		=\dfrac{1}{(N+1)(N+2)},
	\end{equation}
	and symmetrically, 
	\begin{equation}\label{joint pi3}
		\mathbb{P}(p_{N+1}=\frac{k}{N+1},p_{N+2}=\frac{j}{N+1})
		=\dfrac{1}{(N+1)(N+2)}.		
	\end{equation}
	Then, considering the joint distribution of null conformal p-values given in \eqref{joint pi1} - \eqref{joint pi3} above, for any $1\leq l\leq N$, since the derivation of $p_i$ does not depend on $S_i$, we have
	
	\begin{equation*}
		\begin{alignedat}{2}
			\mathbb{P}(p_i\!\le\!\frac{l}{N+1}\mid p_i\!<\!\tilde p_i,Y_i\!=\!0,S_i)
			&=\frac{\mathbb{P}(p_i\le\frac{l}{N+1},\,p_i<\tilde p_i\mid Y_i=0)}
			{\mathbb{P}(p_i<\tilde p_i\mid Y_i=0)}\\[3pt]
			&=\frac{\sum_{j=1}^l\mathbb{P}(p_i=\frac{j}{N+1},p_i<\tilde p_i\mid Y_i=0)}
			{\left(1-\mathbb{P}(p_i=\tilde p_i)\right)/2}\\[3pt]
			&=\frac{\sum_{j=1}^l\sum_{k=j+1}^{N+1}
				\mathbb{P}(p_i=\frac{j}{N+1},\tilde p_i=\frac{k}{N+1}\mid Y_i=0)}
			{\left(1-\sum_{j=1}^{N+1}\mathbb{P}(p_i=\frac{j}{N+1},
				\tilde p_i=\frac{j}{N+1}\mid Y_i=0)\right)/2}\\[3pt]
			&=\frac{\frac{l}{N+2}-\frac{l(l+1)}{2(N+1)(N+2)}}{\frac12-\frac{1}{N+2}}
			\;\ge\;\frac{l+1}{N+1}.
		\end{alignedat}
	\end{equation*}
	
	As a result, we have Lemma \ref{lemma:F10i} proved by noticing that, for any $t\in[\dfrac{1}{N+1},1)$ (the case where $t=1$ holds trivially), suppose $t$ lies in $[\dfrac{l}{N+1},\dfrac{l+1}{N+1})$ for some $1\leq l\leq N$, we obtain the desired inequality as follows 
	\begin{equation*}
		\mathbb{P}\left(p_i\leq t\mid p_i<\tilde{p}_i, Y_i=0,S_i\right)
		\geq\mathbb{P}(p_i\leq\dfrac{l}{N+1}\mid p_i<\tilde{p}_i, Y_i=0,S_i)
		\geq\dfrac{l+1}{N+1}\geq t.
	\end{equation*}
\end{proof}

\subsection{Further details of Assumption \ref{ass:sep}}\label{sec:proof_sep_ass}
We begin with a claim which presents a concrete example showing that Assumption \ref{ass:sep} can be satisfied under some conditions, and we provide its proof thereafter.
\begin{claim}\label{claim1}
	Suppose we have test samples $\{X_i:i\in[m]\}$ and null samples $\{\tilde{X}_i:i\in[m]\}$ generated from the models $		X_i\mid Y_i\overset{ind.}{\sim}(1- Y_i)f_0+ Y_i f_{1m}$ and $\tilde{X}_i\overset{iid.}{\sim}f_0$, respectively, where $f_0(t)$ and $f_{1m}(t)$ obey the following two-point Gaussian mixture model, which is widely used in the high-dimensional sparse inference \citep{donoho2004higher, meinshausen2006estimating, cai2007estimation, TonyCaioptimal2017, arias2017distribution}:
	\begin{equation*}
		f_0(t)=\phi_{\sigma_0}(t-\mu_0)\text{ and }f_{1m}(t)=\phi_{\sigma_0}(t-\mu_m),\text{ with }\mu_m\to\infty\text{ as }m\to\infty,
	\end{equation*}
	where $\phi_{\sigma}$ denotes the density function of an $\mathcal{N}(\mu,\sigma)$ variable. Then, if we construct the initial score function as $s(\cdot)\equiv\frac{f_0(\cdot)}{f(\cdot)}$ and calculate the weighted conformal p-values $\{(V_j\equiv p(X_j)/w_j,\tilde{V}_j\equiv {p}(\tilde{X}_j)/w_j):j\in\mathcal{D}^{\rm test}\}$ via \eqref{eq:conformal_p} and \eqref{eq:weighted_p}, we have $\dfrac{1}{m}\sum_{j=1}^m\mathbb{P}\left\{\tilde{V}_j< V_j\mid  Y_j=1\right\}=o(1)$, which meets Assumption \ref{ass:sep}.
\end{claim}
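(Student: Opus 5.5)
We will reduce the claim to a statement about the unweighted conformal p-values. Since $w_j>0$ is a common factor within each pair, $\mathbb{I}(\tilde V_j<V_j)=\mathbb{I}(p(\tilde X_j)<p(X_j))$. It therefore suffices to show that $\max_j \mathbb{P}\{p(\tilde X_j)<p(X_j)\mid Y_j=1\}\to 0$. Because the conformal p-value is a nondecreasing function of the score, $\{p(\tilde X_j)<p(X_j)\}\subset\{s(\tilde X_j)<s(X_j)\}$ with probability one. So we bound $\mathbb{P}\{s(\tilde X_j)<s(X_j)\mid Y_j=1\}$, where $\tilde X_j\sim f_0$ and $X_j\sim f_{1m}$ are independent.

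Next we make the score explicit. Writing $f=(1-\pi)f_0+\pi f_{1m}$ for the marginal density (fixed or slowly varying mixing proportion $\pi\in(0,1)$), we have
\[
s(x)=\frac{1}{(1-\pi)+\pi\,\exp\{(\mu_m-\mu_0)(x-(\mu_0+\mu_m)/2)/\sigma_0^2\}}.
\]
This is strictly decreasing in $x$ once $\mu_m>\mu_0$, which holds for large $m$. Hence $s(\tilde X_j)<s(X_j)$ iff $\tilde X_j>X_j$. With $\tilde X_j-X_j\sim\mathcal N(\mu_0-\mu_m,2\sigma_0^2)$, this gives
\[
\mathbb{P}\{s(\tilde X_j)<s(X_j)\mid Y_j=1\}=\bar\Phi\!\left(\frac{\mu_m-\mu_0}{\sqrt2\,\sigma_0}\right)\to0 .
\]
The bound is uniform in $j$, so averaging over $j$ gives $o(1)$. (If $\sigma_0$ denotes a variance rather than a standard deviation, only the constant changes.)

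The main obstacle is the treatment of $s$ in practice. $s$ may be estimated rather than the oracle $f_0/f$, and it must satisfy \eqref{eq:s_function_permu}. I would state the claim for the oracle score, which trivially satisfies the invariance principle since it does not depend on the data.

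Ties need no separate handling. The conformal p-values are discrete, so $p(\tilde X_j)=p(X_j)$ can occur, but such ties only make the event $\{\tilde V_j<V_j\}$ smaller. The inclusion in the first step therefore covers them, and continuity of $s$ rules out ties among the scores themselves almost surely.

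If one insists on an estimated score, the plan is to assume it is a monotone transformation of $x$ up to an $o_p(1)$ perturbation. The same Gaussian tail bound then applies after conditioning on the training data.
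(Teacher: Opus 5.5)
Your proof is correct and follows the same route as the paper: drop the common positive weight, pass from the conformal p-value comparison to the score comparison, use that $f_0/f$ is strictly decreasing in $x$ under the Gaussian location model, and evaluate $\mathbb{P}(\tilde X_j > X_j \mid Y_j=1)=\Phi\bigl(-(\mu_m-\mu_0)/(\sqrt{2}\,\sigma_0)\bigr)$. Your handling of the discreteness step is the more careful one. The inclusion $\{p(\tilde X_j)<p(X_j)\}\subset\{s(\tilde X_j)<s(X_j)\}$ gives the required upper bound directly, whereas the paper's stated equality between $\{\tilde V_j\le V_j\}$ and $\{s(\tilde X_j)\le s(X_j)\}$ holds only as ``$\ge$'' when both scores fall between the same pair of calibration scores. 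In addition, your explicit mixture form of $f$ justifies the paper's unstated replacement of $f$ by $f_{1m}$.
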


\begin{proof}
	Under the model in Claim~\ref{claim1}, the desired convergence is established by showing that
	
	\begin{align*}
		\frac{1}{m}\sum_{j=1}^m 
		\mathbb{P}\{\tilde V_j< V_j \mid Y_j=1\}
		&\le 
		\frac{1}{m}\sum_{j=1}^m 
		\mathbb{P}\{\tilde V_j\le V_j \mid Y_j=1\}                                   \\[0.3em]
		&=
		\frac{1}{m}\sum_{j=1}^m 
		\mathbb{P}\!\left\{
		\frac{f_0(\tilde X_j)}{f(\tilde X_j)}
		\le
		\frac{f_0(X_j)}{f(X_j)}
		\;\middle|\; Y_j=1
		\right\}                               \\[0.3em]
		&=
		\frac{1}{m}\sum_{j=1}^m 
		\mathbb{P}\!\left\{
		\frac{\phi_{\sigma_0}(\tilde X_j-\mu_0)}
		{\phi_{\sigma_0}(\tilde X_j-\mu_m)}
		\le
		\frac{\phi_{\sigma_0}(X_j-\mu_0)}
		{\phi_{\sigma_0}(X_j-\mu_m)}
		\;\middle|\; Y_j=1
		\right\}                               \\[0.3em]
		&=
		\frac{1}{m}\sum_{j=1}^m 
		\mathbb{P}\!\left\{
		(\mu_0-\mu_m)\tilde X_j
		\le
		(\mu_0-\mu_m)X_j
		\;\middle|\; Y_j=1
		\right\}                               \\[0.3em]
		&=
		\Phi\!\left(-\frac{|\mu_0-\mu_m|}{\sqrt{2}\,\sigma_0}\right)
		=o(1).
	\end{align*}
	
\end{proof}

\section{Supplementary Methodological Details}\label{sec:supple_methdos}

\subsection{Construction of Structure-Adaptive Weights}\label{sec:weight_construction}

This section introduces a principled and practical method for constructing the structure-adaptive weights used in Step 2 of the SCQ procedure in Section \ref{subsec:SCQ}.

The weight construction for SCQ is motivated by the LAWS procedure \citep{cai2022laws}, which leverages heterogeneity in local sparsity levels -- informed by side information -- to upweight or downweight corresponding p-values. In many high-dimensional settings, signals are sparse: only a small subset of units exhibit meaningful effects, while most are null or negligible. Let \(\pi(S_j) \coloneqq \mathbb{P}(Y_j = 1 \mid S_j)\) denote the \emph{local sparsity level} conditional on side information \(S_j\). When \(S_j\) is categorical (e.g., group membership), \(\pi(S_j)\) corresponds to the signal proportion within that group; when \(S_j\) is ordinal or continuous, higher \(\pi(S_j)\) indicates a greater density of signals around \(S_j\). The key idea is to assign larger weights to coordinates where signals are likely more abundant (in light of side information).

We begin with an oracle setting in which \(\pi(S_j)\) is known. Following an empirical Bayes approach, \cite{cai2022laws} introduced the weighted p-value \(p_j / w_o(S_j)\) with oracle weights
\begin{equation}\label{eq:or_weight}
	w_o(S_j) = \frac{\pi(S_j)}{1 - \pi(S_j)}, \quad j \in \mathcal{D}^{\text{test}},
\end{equation}
which preferentially targets discoveries in regions of higher signal density. Alternative weighting schemes can also be considered, including simple functions such as \(\pi(S_j)\) or \(\{1-\pi(S_j)\}^{-1}\), as well as more complex methods discussed in \cite{ignatiadis2021covariate} and \cite{liang2027locally}.

In practice, \(\pi(S_j)\) is unknown. We adapt the screening approach of \citet{cai2022laws} to estimate this quantity within the conformal framework. Our proposed method addresses two key limitations of the existing approach: (i) direct p‑value derivation is often infeasible in high‑dimensional or structurally complex settings, and (ii) conventional weight‑construction techniques fail to satisfy the required swap‑invariance property \eqref{eq:g_func_swap_inva}. Our carefully designed estimation algorithm ensures both pairwise exchangeability of the weighted p‑values and applicability in scenarios where traditional p‑values are unavailable. The construction proceeds in three steps.

\noindent\textbf{Step 1. Construction of the weight matrix.} We first define a weight matrix \(\mathbf{\Omega}=(\omega_{j,j^\prime})_{m\times m}\) based on the side information vector \(\mathbf S=\{S_j: j\in\mathcal D^{\text{test}}\}\). Each entry \(\omega_{j,j^\prime}\) quantifies the contribution of unit \(j^\prime\) to the estimation of \(\pi(S_j)\). If \(S_j\) is categorical, we set \(\omega_{jj^\prime}=\mathbb{I}(S_j=S_{j^\prime})\). For continuous \(S_j\), we use a kernel-smoothed weighting scheme:
\[
\omega_{jj^\prime}=K_h\bigl(d(S_j,S_{j^\prime})\bigr),\quad \text{where }K_h(t)=h^{-1}K(t/h).
\]
Here \(K(\cdot)\) is a symmetric kernel with \(\int K=1\), \(\int tK=0\) and \(\int t^{2}K<\infty\); \(h\) is a bandwidth selected via permutation-invariant rules (e.g., Silverman’s, Sheather–Jones, or Lepski’s method), and \(d(\cdot,\cdot)\) is a metric on the space of \(\mathbf S\). These choices preserve the swap-invariance required by Proposition \ref{prop:pairwise_scores}.

\noindent\textbf{Step 2. Sparsity estimation.} We construct a swap-invariant estimator:
\begin{equation}\label{eq:hat-pi}
	\hat{\pi}(S_j) = 1 - \frac{\sum_{i=1}^{m} \omega_{ij} \left[ \mathbb{I}(p_i > \lambda) + \mathbb{I}(\tilde{p}_i > \lambda) \right]}{2(1 - \lambda) (\sum_{i=1}^{m} \omega_{ij})},
\end{equation}
where $(p_i, \tilde{p}_i)$ are conformal p-values and $\lambda \in (0, 1)$ is a screening threshold, with the default choice of $\lambda=0.1$. 
\begin{remark}
	As highlighted by \cite{cai2022laws}, the selection of $\lambda$ entails a bias–variance trade-off. Hence, we develop a data-adaptive selection algorithm (P‑TAMS+) within the conformal framework, as described in Appendix~\ref{sec:P-TAMS+} for interested readers. However, this additional step is secondary to our main contributions, and the related discussion is included only for completeness. Instead, we adopt a default value of $\lambda = 0.1$ in all numerical experiments, which delivers stable and satisfactory performance. 
\end{remark}

\noindent\textbf{Step 3. Bias correction.} Under standard regularity conditions, \(\hat{\pi}(S_j)\) converges in probability to \(\pi(S_j)/2\) as \(m\to\infty\), introducing a multiplicative bias. We therefore recalibrate the estimator and construct the final data‑driven weights as
\begin{equation}\label{eq:sd-weight}
	w(S_j)=\frac{\hat{\pi}(S_j)}{1/2-\hat{\pi}(S_j)},\qquad j\in\mathcal{D}^{\rm test}.
\end{equation}
By construction, the estimator satisfy the swap‑invariance condition \eqref{eq:g_func_swap_inva} as desired. 

Finally, we summarize the practical SCQ procedure with the above weight construction methods in the following Algorithm \ref{alg:SCQ-practical}.

\begin{algorithm}[h]
	\caption{The practical SCQ procedure}
	\begin{algorithmic}[1]
		\Require test data $\mathbf{X}=\{X_j: j\in\mathcal{D}^{\rm test}\}$, covariates $\textbf{S}=\{S_j: j\in\mathcal{D}^{\rm test}\}$, mirror data $\tilde{\mathbf X}=\{\tilde{X}_j: j\in\mathcal{D}^{\rm test}\}$, training data $\mathbf X_0^{\rm tr}$, calibration data $\mathbf X_0^{\rm cal}$, labeled outliers $\mathbf{X}_1$ (if available), target FDR level $\alpha$, classifier $\mathcal{C}$, tuning parameter $\lambda$;  
		\Ensure the rejection set $\mathcal{R}\subset\mathcal{D}^{\rm test}$. 
		\State Calculate the weight matrix \(\mathbf{\Omega}=(\omega_{j,j^\prime})_{m\times m}\) based on the side information vector $\mathbf{S}$.
		\State Calculate the initial score function $s(\cdot)$ in the form of \eqref{eq:s_function_permu} with $\mathcal{C}$.
		\For{$j = n+1$ to $n+m$}
		\State Calculate the conformal p-values $p(X_j)$ and $p(\tilde{X}_j)$ for test sample $X_j$ and mirror sample $\tilde{X}_j$ through \eqref{eq:conformal_p}.
		\State Calculate the sparsity estimator $\hat{\pi}(S_j)$ through \eqref{eq:hat-pi}.
		\State Calculate the data-driven weight $w(S_j)$ through \eqref{eq:sd-weight}.
		\State Calculate the conformity scores $V_j$ and $\tilde{V}_j$ for test sample $X_j$ and mirror sample $\tilde{X}_j$ through \eqref{eq:weighted_p}.
		\State Calculate the conformal q-value $q_j$ through \eqref{eq:scq}.
		\EndFor
		\State Let $\mathcal{R}=\left\{j\in\mathcal{D}^{\rm test}:q_j\leq\alpha\right\}$.\\
		\Return The rejection set $\mathcal{R}$.
	\end{algorithmic}
	\label{alg:SCQ-practical}
\end{algorithm}

\subsection{The outlines of SCQ and P-TAMS algorithms}\label{sec:SCQ_P-TAMS algorithm}

This section summarizes the detailed steps of the SCQ procedure (Algorithm \ref{alg:SCQ}) and P-TAMS algorithm (Algorithm \ref{alg:P-TAMS}), corresponding to the methodologies introduced in Sections \ref{subsec:SCQ} and \ref{subsec:P-TAMS} of the main text, respectively.

\begin{algorithm}[h]
	\caption{The SCQ procedure}
	\begin{algorithmic}[1]
		\Require test data $\mathbf{X}=\{X_j: j\in\mathcal{D}^{\rm test}\}$, covariates $\textbf{S}=\{S_j: j\in\mathcal{D}^{\rm test}\}$, mirror data $\tilde{\mathbf X}=\{\tilde{X}_j: j\in\mathcal{D}^{\rm test}\}$, training data $\mathbf X_0^{\rm tr}$, calibration data $\mathbf X_0^{\rm cal}$, labeled outliers $\mathbf{X}_1$ (if available), target FDR level $\alpha$, classifier $\mathcal{C}$;  
		\Ensure the rejection set $\mathcal{R}\subset\mathcal{D}^{\rm test}$. 
		\State Calculate the initial score function $s(\cdot)$ in the form of \eqref{eq:s_function_permu} with $\mathcal{C}$.
		\For{$j = n+1$ to $n+m$}
		\State Calculate the conformal p-values $p(X_j)$ and $p(\tilde{X}_j)$ for test sample $X_j$ and mirror sample $\tilde{X}_j$ through \eqref{eq:conformal_p}.
		\State Calculate the weight $w(S_j)$ satisfying the swap-invariance principle \eqref{eq:g_func_swap_inva}.
		\State Calculate the conformity scores $V_j$ and $\tilde{V}_j$ for test sample $X_j$ and mirror sample $\tilde{X}_j$ through \eqref{eq:weighted_p}.
		\State Calculate the conformal q-value $q_j$ through \eqref{eq:scq}.
		\EndFor
		\State Let $\mathcal{R}=\left\{j\in\mathcal{D}^{\rm test}:q_j\leq\alpha\right\}$.\\
		\Return The rejection set $\mathcal{R}$.
	\end{algorithmic}
	\label{alg:SCQ}
\end{algorithm}

\begin{algorithm}[h]
	\caption{The P-TAMS algorithm}
	\begin{algorithmic}[1] 
		\Require test data $\mathbf{X}=\{X_j: j\in\mathcal{D}^{\rm test}\}$, covariates $\textbf{S}=\{S_j: j\in\mathcal{D}^{\rm test}\}$, mirror data $\tilde{\mathbf X}=\{\tilde{X}_j: j\in\mathcal{D}^{\rm test}\}$, training data $\mathbf X_0^{\rm tr}$, calibration data $\mathbf X_0^{\rm cal}$, labeled outliers $\mathbf{X}_1$ (if available), target FDR level $\alpha$, candidate classifiers $\mathbf{C}=\left\{\mathcal{C}_k:k\in[K]\right\}$, tuning parameter $\alpha_0$; 
		\Ensure the rejection set $\mathcal{R}_*\subset\mathcal{D}^{\rm test}$;

		\For {$k = 1$ to $K$}
		\State Calculate the initial score function $s_k(\cdot)$ in the form of \eqref{eq:g_func_swap_inva} with $\mathcal{C}_k$.
		\State Apply the BH procedure on the pseudo conformal p-values $\bar{\mathcal{P}}_k = \left\{\min\left(p_j^k, \tilde{p}_j^k\right) : j \in \mathcal{D}^{\rm test}\right\}$  at level $\alpha_0$ with conformal p-values calculated through \eqref{eq:conformal_p}, and get the preliminary rejection set $\mathcal{\bar{R}}_k$.
		\State Calculate the conformity scores $\mathcal{V}^k=\left\{V_j^k:j\in\mathcal{D}^{\rm test}\right\}$ and $\tilde{\mathcal{V}}^k=\left\{\tilde{V}_j^k:j\in\mathcal{D}^{\rm test}\right\}$ by Algorithm \ref{alg:SCQ} with $\mathcal{C}=\mathcal{C}_k$.
		\State Calculate the pseudo conformity scores $\mathcal{U}^k=\left\{U_j^k:j\in\mathcal{D}^{\rm test}\right\}$ and $\mathcal{\tilde{U}}^k=\left\{\tilde{U}_j^k:j\in\mathcal{D}^{\rm test}\right\}$ by \eqref{eq:pseudo_outlier} and \eqref{eq:pseudo_inlier}.
		\State Apply Algorithm \ref{alg:SCQ} at level $\alpha$ with $\mathcal{U}^k$ and $\mathcal{\tilde{U}}^k$ being the conformity scores for test data and mirror data. Calculate the corresponding rejection number as $r_k$.
		\EndFor
		\State Select the calssifier $\mathcal{C}_*$ with the largest $r_k$.
		\State Apply Algorithm \ref{alg:SCQ} at level $\alpha$ with $\mathcal{C}=\mathcal{C}_*$ to obtain the refined rejection set $\mathcal{R}_*$.\\
		\Return the refined rejection set $\mathcal{R}_*$.
	\end{algorithmic}
	\label{alg:P-TAMS}
\end{algorithm}

\subsection{P-TAMS augmented with hyperparameter selection}\label{sec:P-TAMS+}

The practical SCQ procedure (Algorithm \ref{alg:SCQ-practical}) proposed in Appendix \ref{sec:weight_construction} requires estimating the local sparsity level via \eqref{eq:hat-pi}, which depends on a tuning parameter $\lambda$. To automate this selection and improve power, we propose P-TAMS+, an extension of P-TAMS that adaptively selects the hyperparameter $\lambda$ (Algorithm~\ref{alg:P-TAMS+}). Under the conditions of Proposition \ref{prop:pairwise_scores}, P-TAMS+ ensures finite-sample FDR control; the proof follows arguments similar to Theorem \ref{thm:P-TAMS-validity} and is omitted. 

\begin{algorithm}[ht]
	\caption{P-TAMS+: P-TAMS algorithm augmented with parameter selection}
	\begin{algorithmic}[1] 
		\Require test data $\mathbf{X}=\{X_j: j\in\mathcal{D}^{\rm test}\}$, covariates $\textbf{S}=\{S_j: j\in\mathcal{D}^{\rm test}\}$, mirror data $\tilde{\mathbf X}=\{\tilde{X}_j: j\in\mathcal{D}^{\rm test}\}$, training data $\mathbf X_0^{\rm tr}$, calibration data $\mathbf X_0^{\rm cal}$, labeled outliers $\mathbf{X}_1$ (if available), target FDR level $\alpha$, candidate classifiers $\mathbf{C}=\left\{\mathcal{C}_k:k\in[K]\right\}$, target FDR level $\alpha$, tuning parameter $\alpha_0$, candidate $\lambda$ values $\{\lambda_l:l\in[L]\}$; 
		\Ensure the rejection set $\mathcal{R}_{**}\subset\mathcal{D}^{\rm test}$.  
		\For {$k = 1$ to $K$}
		\State Calculate the initial score function $s_k(\cdot)$ in the form of \eqref{eq:g_func_swap_inva} with $\mathcal{C}_k$.
		\State Apply BH on $\bar{\mathcal{P}}_k = \left\{\min\left(p_j^k, \tilde{p}_j^k\right) : j \in \mathcal{D}^{\rm test}\right\}$  at level $\alpha_0$ with conformal p-values calculated through \eqref{eq:conformal_p}, and get the preliminary rejection set $\mathcal{\bar{R}}_k$.
		\State Calculate  $\mathcal{V}^k=\left\{V_j^k:j\in\mathcal{D}^{\rm test}\right\}$ and $\tilde{\mathcal{V}}^k=\left\{\tilde{V}_j^k:j\in\mathcal{D}^{\rm test}\right\}$ by Algorithm \ref{alg:SCQ-practical} with $\mathcal{C}=\mathcal{C}_k$ and $\lambda=0.1$.
		\State Calculate the pseudo conformity scores $\mathcal{U}^k=\left\{U_j^k:j\in\mathcal{D}^{\rm test}\right\}$ and $\mathcal{\tilde{U}}^k=\left\{\tilde{U}_j^k:j\in\mathcal{D}^{\rm test}\right\}$ by \eqref{eq:pseudo_outlier} and \eqref{eq:pseudo_inlier}.
		\State Apply Algorithm \ref{alg:SCQ-practical} at level $\alpha$ with $(\mathcal{U}^k,\mathcal{\tilde{U}}^k)$. Record the rejection number as $r_k$.
		\EndFor
		\State Select the calssifier $\mathcal{C}_*$ with the largest $r_k$.
		\State Apply the Algorithm \ref{alg:SCQ-practical} at level $\alpha$ with $\mathcal{C}_*$ to obtain the refined rejection set $\mathcal{R}_*$.\\
		\For {$l = 1$ to $L$}
		\State Calculate the conformity scores $\mathcal{V}^l=\left\{V_i^l:i\in\mathcal{D}^{\rm test}\right\}$ and $\tilde{\mathcal{V}}^l=\left\{\tilde{V}_i^l:i\in\mathcal{D}^{\rm test}\right\}$ by Algorithm \ref{alg:SCQ-practical} with $\mathcal{C}=\mathcal{C}_*$ and $\lambda=\lambda_l$.
		\State Calculate the pseudo conformity scores $\mathcal{U}^l$ and $\mathcal{\tilde{U}}^l$ by \eqref{eq:pseudo_outlier} and \eqref{eq:pseudo_inlier}.
		\State Apply Algorithm \ref{alg:SCQ-practical} at level $\alpha$ with $\mathcal{U}^l$ and $\mathcal{\tilde{U}}^l$. Denote $r_l$ the number of rejections.
		\EndFor
		\State Select $\lambda_*$ with the largest $r_l$.
		\State Apply Algorithm \ref{alg:SCQ-practical} at level $\alpha$ with $\mathcal{C} = \mathcal{C}_*$ and $\lambda=\lambda_*$ to the obtain rejection set $\mathcal{R}_{**}$.\\
		\Return the refined rejection set $\mathcal{R}_{**}$.
	\end{algorithmic}
	\label{alg:P-TAMS+}
\end{algorithm}

\section{Conceptual and Theoretical Comparisons}\label{sec:supple_comparisons}

\subsection{Comparison with ICP method}
In this section, we provide a comprehensive comparison of our proposed methods, SCQ and P-TAMS, with the integrative conformal p-value (ICP) procedure \citep{liang2024integrative}, from four perspectives -- the utilization of side information, the weighting strategy, the AMS approach, and the theories. 

First, the implementation of ICP relies on labeled outliers, whereas SCQ leverages the structures of test samples as side information. The reliance on labeled outliers may cause three problems: the unavailability of labeled outliers in many real-world applications; potential distribution shifts between labeled outliers and test outliers that lead to negative transfer (see Figure~\ref{fig:SCQ_vs_ICP} in Section~\ref{sec:impact_dist_shift}); and the difficulty of determining whether labeled outliers should be used when the data are high-dimensional and imbalanced, a phenomenon noted by \cite{liang2024integrative}.  Unlike ICP, SCQ does not rely on the existence of labeled outliers; instead, it exploits the structural information from test samples, while still being able to incorporate labeled outliers adaptively -- through P-TAMS -- to obtain potential power gain when they are helpful (see Figure~\ref{fig:ams_ob} in Section \ref{sec:impact_imbalance}).

Second, the weighting strategies differ. While ICP computes the conformal p-values for the alternative hypotheses as weights, SCQ proposes the generic weight function $w(\cdot)$ satisfying the swap-invariance principle in \eqref{eq:g_func_swap_inva}. Moreover, SCQ constructs weights by leveraging diverse sources of data, including the structural information, test samples, null samples, and labeled outliers (when available), while ICP solely utilizes the labeled outliers.

Third, the AMS approaches are distinct. ICP-AMS adopts the model-oriented strategy that evaluates the classifier's ability, while P-TAMS utilizes the task-oriented strategy which selects the model in order to maximize the utility for performing a specific task, such as maximizing the number of rejections in OOD testing. Furthermore, ICP-AMS utilizes the labeled outliers to select model, which may underperform when there exists distribution shifts. Crucially, ICP-AMS is based on the joint exchangeability principle and is therefore incapable of dealing with structured OOD testing problems. By contrast, P-TAMS provides a principled and effective solution for the structured OOD testing  by carefully designing the swap-invariant selection procedure.
A detailed numerical comparison of P-TAMS and ICP-AMS is provided in Appendix \ref{sec:P-TAMS_vs_ICPAMS}. 

Finally, the theories also differ in two aspects. On the one hand, ICP applies the conditional calibration method to the weighted p-values to obtain a valid rejection set, at the cost of power loss; SCQ employs the generic BC algorithm to control FDR with asymptotic guarantees on anti-conservativeness (cf. Theorem \ref{thm:asymptotic-FDR}). On the other hand, the power analysis theories are different: while ICP analyzes the power gain with oracle thresholds, we extend the analysis to more practical scenarios with data-driven thresholds (see Section \ref{subsec:power-analysis} for details).

\subsection{Comparisons of P-TAMS and Existing CAMS Methods}\label{sec:P-TAMS_vs_CAMS}

As noted in Sections~\ref{subsec:setup} and \ref{subsec:P-TAMS-general}, classifier families such as OCC, BIC, and PUC each exhibit distinct strengths and limitations. The OCC approach relies solely on inlier samples, providing enhanced stability but failing to utilize information from labeled outliers or test data. Conversely, BIC incorporates labeled outliers during training but does not leverage test data; as observed by \cite{liang2024integrative}, it can become unstable -- sometimes underperforming even OCC in highly imbalanced settings. PUC, meanwhile, exploits test data but ignores labeled outliers, which may not always be desirable, and in high‑dimensional, severely imbalanced scenarios, it may be outperformed by OCC. These limitations highlight the necessity of effective and reliable CAMS methods that can adaptively identify the most suitable classifier, rather than relying on a single pre‑specified model.

In this section, we contrast the proposed P-TAMS algorithm with existing CAMS approaches and articulate its advantages from several key perspectives.

First, existing CAMS methods, which are often built on the BH or conditional calibration frameworks, require full permutation invariance between test and calibration samples to construct valid conformal p‑values, a requirement typically met via additional sample splitting or data mixing \citep{magnani2023collective, marandon2024adaptive}. P‑TAMS introduces a distinct principle that avoids such splitting. Additional splitting not only aggravates data inefficiency, but is also impractical in structured OOD testing, where side information for null samples may be unavailable. 

Second, compared to the AMS strategy in ICP \citep{liang2024integrative}, which also performs OOD testing with FDR control, P‑TAMS provides three main advantages beyond its relaxed exchangeability conditions (detailed in the next paragraph). Unlike ICP, which selects a model separately for each test point \(X_i\), P‑TAMS uses the full dataset to select a single, globally optimal model. Moreover, ICP employs a model‑oriented selection criterion, while P‑TAMS adopts a task‑oriented one. Most importantly, ICP requires labeled outliers, which can cause negative learning when the distribution of labeled outliers differs from that of test outliers. In contrast, P‑TAMS adapts directly to the test data and exploits structural information from auxiliary sources, offering greater robustness to distribution shifts; illustrative results are provided in Figures \ref{fig:ams_occ}, \ref{fig:ams_bic}, \ref{fig:ams_noshift}, \ref{fig:ams_shift} of the supplement and Figure \ref{fig:ams_ob} of the main text.

Third, relative to other CAMS methods less tailored to OOD testing (e.g. \cite{liang2023conformal} for prediction tasks and \cite{magnani2023collective} for lower‑bound estimation), the principal advantage of P‑TAMS is its reliance on pairwise score exchangeability rather than the stricter joint exchangeability required by existing CAMS approaches. This relaxation not only affords greater flexibility, but also makes the method better suited for structured OOD testing problems.

\section{Additional experimental results}\label{sec:supple_numerical_results}

\subsection{Details on methods implementation and experimental setup}\label{sec:detailed_methods}
We first give the implementation details of the methods considered in our numerical experiments (see Section \ref{sec:simu setup} for a brief introduction), which include: cfBH-OCC, ICP-OCC, AdaDetect-KDE, AdaDetect-PU, CLAW-KDE, CLAW-PU, SCQ-OCC, SCQ-BIC, SCQ-PU, and SCQ+P-TAMS. All classifiers involved (e.g., OneClassSVM or RandomForest) and the KDE model (i.e., KernelDensity) are utilized through the Python package \textit{scikit-learn} \citep{pedregosa2011scikit}.

Throughout our numerical experiments, the null sample set $\mathcal{D}_0$ is split into three subsets: $\mathcal{D}_0^{\rm tr}$ and $\mathcal{D}_0^{\rm cal}$, each of equal size, and $\mathcal{D}_0^{\rm mir}$ whose size matches that of the test samples. For cfBH, ICP, and AdaDetect, which rely on jointly exchangeable scores and do not need mirror data, we merge $\mathcal{D}_0^{\rm mir}$ and $\mathcal{D}_0^{\rm cal}$ to form a larger calibration data set for these methods.
Specifically, \textbf{ICP-OCC} first calculates the split conformal p-values of test samples for null and alternative hypotheses, respectively, then takes the quotient to compute the integrative conformal p-values, and finally applies the Storey-BH on them to obtain the rejection set. Note that the conditional calibration method has been applied by \cite{liang2024integrative} to guarantee valid FDR control, while we substitute it with the Storey-BH. This is because we focus on the power comparison in our numerical experiments to explore the underlying detection power of different methods, and applying Storey-BH on ICP is more powerful than the conditional calibration at the theoretical cost of statistical validity (while \cite{liang2024conformal} demonstrates the general robustness of BH by showing that the empirical results of Storey-BH calibration lead to valid FDR control). 

\textbf{AdaDetect-KDE} and \textbf{CLAW-KDE} first estimate the null distribution (on the training data) and the mixture distribution (on the collection of calibration data and test data) using KDE method, then take the quotient to compute the density ratios of test samples, and finally apply the Storey-BH and BC algorithms, respectively, on the density-ratio-based scores to obtain the rejection set. The difference between \textbf{$*$-PU} and \textbf{$*$-KDE} methods, where $*$ represents \textbf{AdaDetect} or \textbf{CLAW}, lies in their calculation of density ratios: the former utilizes PU-learning which trains a binary classifier (we use the random forest in our experiments) with the null training data being the ``positive sample” and the collection of calibration and test data being the ``unlabeled sample", while the latter uses the KDE method.
\textbf{SCQ-KDE} and \textbf{SCQ-PU} are designed as two types of SCQ variants to weight the p-values calculated by \textbf{AdaDetect-KDE} and \textbf{AdaDetect-PU}, respectively: we calculate the conformal p-values as constructed by AdaDetect for both the test data $\mathcal{D}^{\rm test}$ and mirror data $\mathcal{D}_0^{\rm mir}$, and apply the structure-adaptive weights to them. \textbf{SCQ+P-TAMS} is implemented by running the P-TAMS algorithm to perform AMS on candidate models that are specified in different experiments.

In the remainder of this section, we give details of the experimental setup in Figure \ref{fig:ams_ob}, which exhibits the experimental results in Section \ref{sec:impact_imbalance}. Specifically, the simulated data are generated based on a combination of the hierarchical model and the multivariate Gaussian mixture distribution $P_X^a$, which has also been utilized by \cite{bates2023testing} and \cite{liang2024integrative}: the setting is the same as in model \eqref{eq:hierarchical_model} in Section \ref{sec:simu setup}, while the difference lies in the definition of the null and alternative distributions. When we say $X_i\sim P_X^a$, we mean that we define $X_i = \sqrt{a}C_i+D_i$ for some constant $a\geq1$, where $C_i$ is a random vector having $p$ independent standard Gaussian components, and $D_i$ has coordinates sampled independently from the uniform distribution $[-1,1]^p$. We set the null distribution as $P_X^1$, while the two types of alternative distributions are defined as $P_X^4$ and $P_X^{20}$. With fixed parameters $p=150$, $\pi_1=0.6$, and $\pi_2=0.3$, we vary the number of labeled outliers $n_1$ from 0 to 3000. Other settings remain the same as in Section \ref{sec:impact_dist_shift}. We implement the P-TAMS algorithm with four candidate classifiers: two OCCs, including the SVM-sigmoid (OneClassSVM with sigmoid kernel), SVM-poly (OneClassSVM with polynomial kernel), and two BICs, including the KNN ($K$-nearest neighbors) and MLP (multi-layer perceptron).

\subsection{Additional details for the PageBlocks experiment}
\label{sec:app_pageblocks}

We use the normalized and deduplicated version of the PageBlocks dataset
\citep{campos2016evaluation}. The dataset contains 5,139 document blocks, each
represented by 10 layout-based features. Among them, 4,883 are text blocks and
256 are non-text blocks. We treat text blocks as inliers and non-text blocks as
outliers. The non-text class includes images, graphics, and separator lines. This
task is relevant to document analysis and digitization, where distinguishing text
from non-text regions can help optical character recognition and information
extraction systems focus on appropriate regions.

For each repetition, we randomly sample 600 inliers and 150 outliers to form the
test data, which are partitioned into three groups:
\(\mathcal{D}^{\rm test}_1\) with 120 outliers and 200 inliers,
\(\mathcal{D}^{\rm test}_2\) with 25 outliers and 200 inliers, and
\(\mathcal{D}^{\rm test}_3\) with 5 outliers and 200 inliers.
The side information \(S_j\in\{1,2,3\}\) is set according to the group membership.
This construction mimics document collections in which different sources or page
types contain different proportions of non-text elements.

The remaining observations are used as labeled data. The number of labeled inliers
is fixed, while the number of labeled outliers, denoted by \(n_1\), is varied. For
each value of \(n_1\), we compare four SCQ variants based on LOF, GMM, KNN, and MLP,
as well as SCQ+P-TAMS, which selects among these variants using the P-TAMS procedure.
We report the empirical ETP and FDR over 500 independent repetitions, using the same
random-sampling protocol as in Section~\ref{sec:cicids}.

\subsection{Comparison of P-TAMS with naive AMS method}\label{sec:P-TAMS_eff}

In this section, the P-TAMS procedure (Algorithm \ref{alg:P-TAMS}) and the naive AMS method, which simply selects the model yielding the largest number of rejections without appropriate adjustment as introduced in Section \ref{subsec:P-TAMS-general}, are implemented and compared to select among the SCQ variants constructed with various classifiers. Specifically, the SCQ procedure (Algorithm~\ref{alg:SCQ}) is applied in conjunction with split conformal p-values \citep{bates2023testing}, utilizing two distinct families of classifiers: the one-class classification models and the binary classification models.

For the one-class classification (OCC) family, six algorithms are considered: the one-class support vector machine (SVM) with three alternative kernel choices (radial basis function, sigmoid, and a third-degree polynomial), the isolation forest (IF), the local outlier factor (LOF), and the Gaussian mixture model (GMM). For the binary classification (BIC) family, we include six commonly used models: the random forest (RF), the $K$-nearest neighbors (KNN), the support vector classifier (SVC), the Gaussian naive Bayes model (NB), the quadratic discriminant analysis (QDA), and the multi-layer perceptron (MLP). All models are implemented using the \textit{scikit-learn} package in Python \citep{pedregosa2011scikit}. All hyperparameters are taken with their default values without specification. The synthetic data are generated from the hierarchical model \eqref{eq:hierarchical_model} in Section~\ref{sec:simulation}, with fixed parameters $p=1$, $\pi_1=0.6$, $\pi_2=0.3$, and varying $\mu$.
\begin{figure}[h] 
	\centering 
	\begin{minipage}{0.6\textwidth}
		\includegraphics[width=\textwidth]{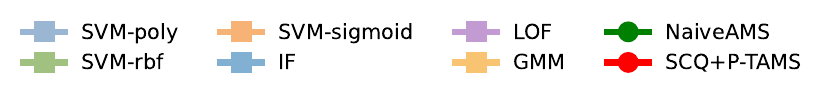} 
	\end{minipage}
	\begin{subfigure}{\linewidth}
		\centering 
		\includegraphics[width=\linewidth]{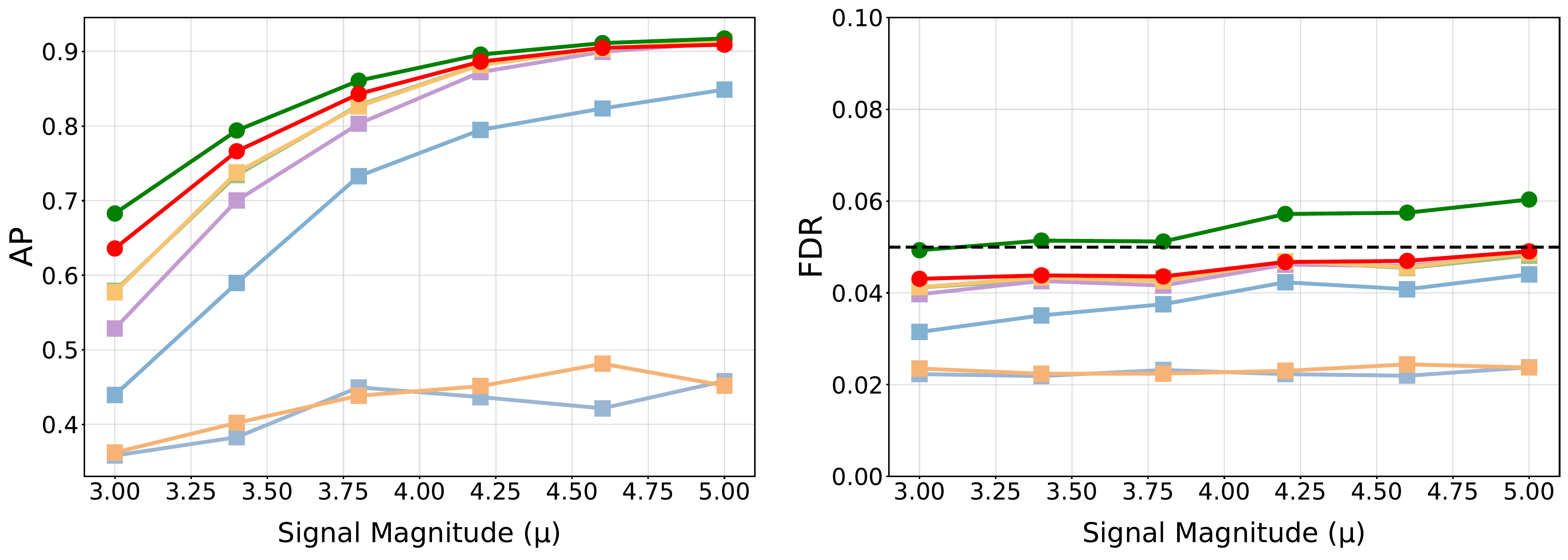} 
	\end{subfigure}
	\caption{Comparison of AP and FDR between P-TAMS and naive AMS methods across different OCC-based SCQ variants. The target FDR level is set to 0.05.}
	\label{fig:ams_occ}
\end{figure}

\begin{figure}[h] 
	\centering 
	\begin{minipage}{0.5\textwidth}
		\includegraphics[width=\textwidth]{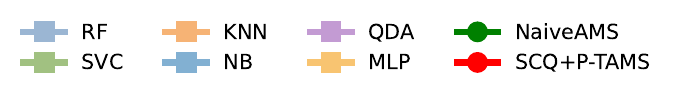} 
	\end{minipage}
	\begin{subfigure}{\linewidth}
		\centering 
		\includegraphics[width=\linewidth]{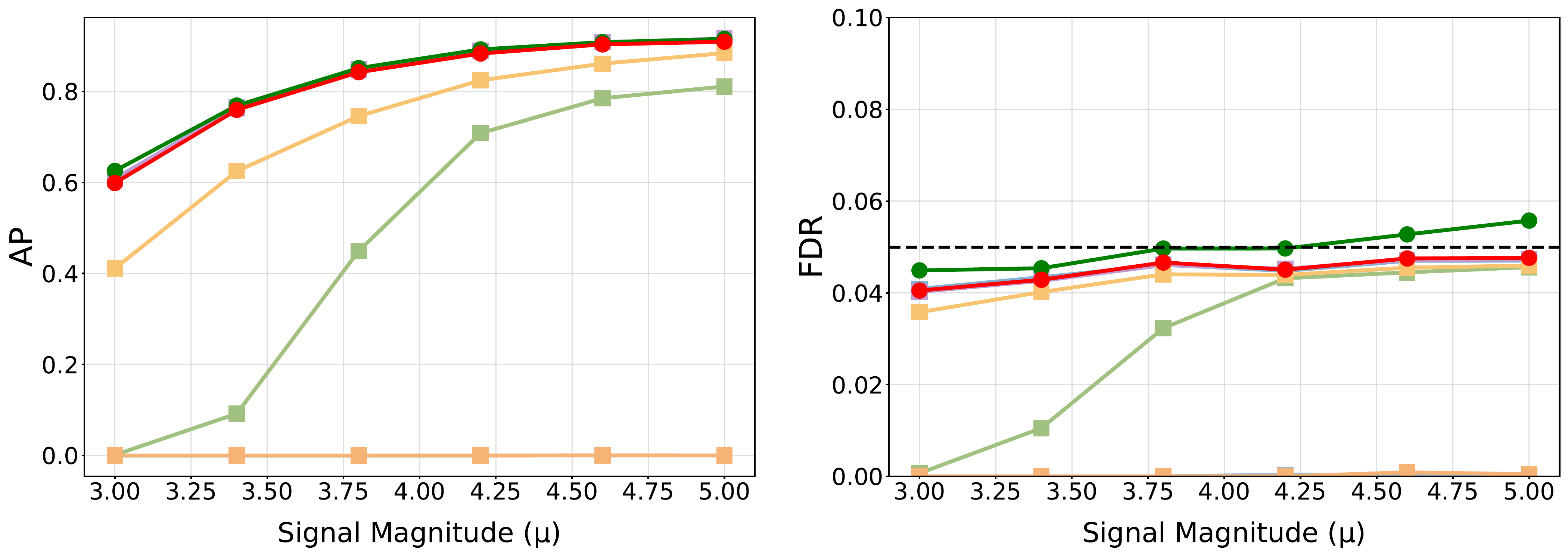} 
	\end{subfigure}
	\caption{Comparison of AP and FDR between P-TAMS and naive AMS methods across different BIC-based SCQ variants. The target FDR level is set to 0.05.}
	\label{fig:ams_bic}
\end{figure}

The naive AMS strategy violates score exchangeability and therefore lacks theoretical FDR control guarantees, which have been pointed out and illustrated by \citet{liang2024integrative} and \citet{magnani2023collective}. This has also been noticed in our results. As shown in Figure \ref{fig:ams_occ} and \ref{fig:ams_bic}, although the naive AMS approach achieves the highest detection power, it suffers FDR inflation: as the signal strength increases, its empirical FDR gradually exceeds the target level $\alpha$. While the inflation is not severe in this numerical example, it indicates that the naive AMS method remains unreliable for practitioners requiring strict error control.

By contrast, the P-TAMS procedure -- whose validity has been theoretically guaranteed by Theorem~\ref{thm:P-TAMS-validity} -- exhibits both the rigorous FDR control and competitive power. Although its FDR also increases slightly with the signal magnitude, it remains tightly controlled and eventually approaches the nominal level $\alpha$. Moreover, P-TAMS shows strong detection power: it may even outperform any single model on average, since it may almost identify the best-performing classifier in each repetition, and the optimal model can vary across repetitions. When the signal is strong, P-TAMS achieves the power comparable to that of the naive AMS method.

Overall, the results presented in Figure~\ref{fig:ams_occ} and \ref{fig:ams_bic} demonstrate that the P-TAMS procedure constitutes an effective, interpretable, and reliable automated model selection strategy, offering a favorable balance between detection power and rigorous FDR control.

\subsection{Comparison of P-TAMS with ICP-AMS method}\label{sec:P-TAMS_vs_ICPAMS}
In Section \ref{sec:impact_dist_shift}, we explore the impacts of shifted labeled outliers on the performance of the ICP \citep{liang2024integrative} and SCQ procedures with fixed classifiers. In this section, we further compare their AMS strategies -- the proposed P-TAMS procedure (Algorithm \ref{alg:P-TAMS}) and the AMS method employed in \cite{liang2024integrative}, which we denote as ICP-AMS -- under the settings where we have labeled outliers with and without distribution shifts from test outliers, respectively. Since the integrative conformal p-values generally violate the PRDS condition, the conditional calibration method is employed instead of Storey-BH to ensure valid inference at the cost of statistical power. However, we focus on the power comparison so that we make a slight adjustment for the implementation of ICP-AMS: we directly apply the Storey-BH method to the integrative conformal p-values computed after an automated model selection stage. This follows the empirical evidence pointed out by \citet{liang2024integrative}, which suggests that the BH procedure remains reasonably robust to the dependence structure among these integrative p-values since it can control the FDR in their numerical experiments.

\begin{figure}[htbp] 
	\centering 
	\begin{subfigure}{0.75\linewidth}
		\centering 
		\includegraphics[width=\linewidth]{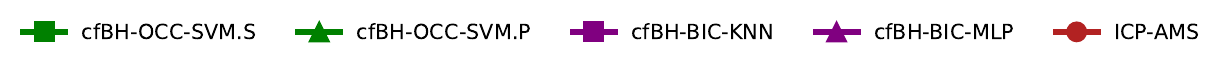} 
	\end{subfigure}
	
	\begin{subfigure}{\linewidth}
		\centering 
		\includegraphics[width=\linewidth]{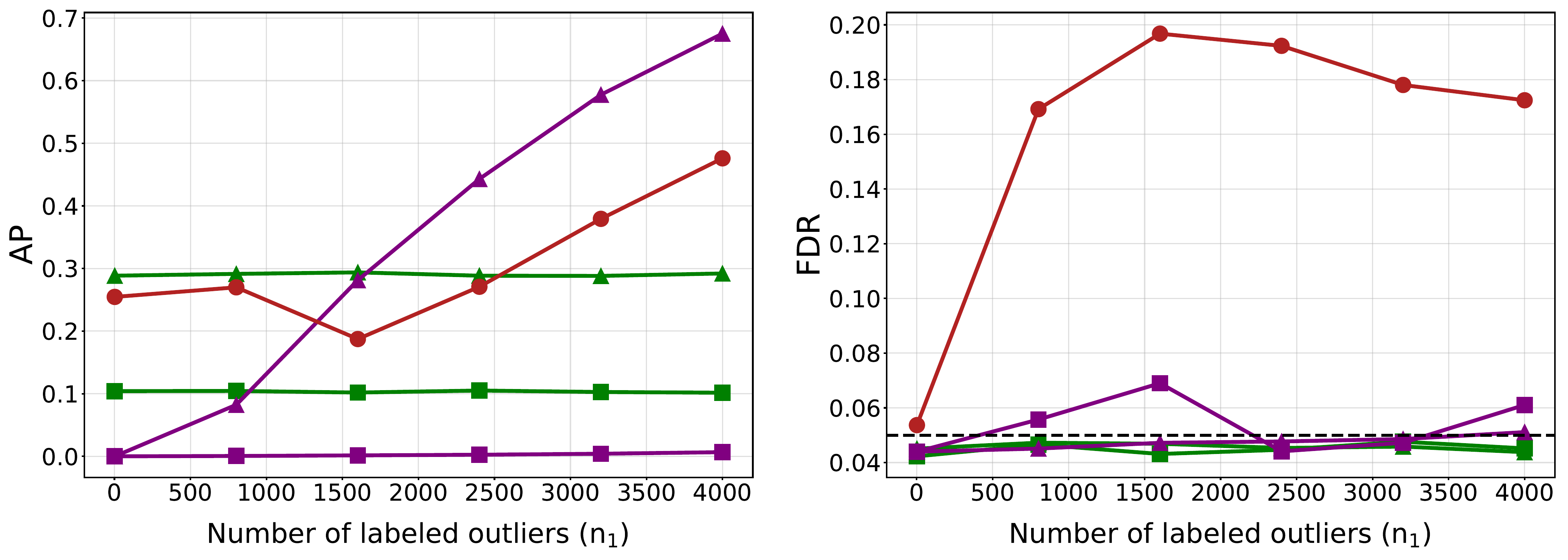} 
	\end{subfigure}
	\caption{Performance in terms of AP and FDR for the cfBH methods implemented with two OCC-based and two BIC-based classifiers, together with the ICP-AMS method that has access to all four classifiers. The experimental setup is identical to that in Figure \ref{fig:ams_ob}.}
	\label{fig:ams_ob_icp}
\end{figure}

\begin{figure}[h] 
	\centering 
	
	\begin{subfigure}{0.7\linewidth}
		\centering 
		\includegraphics[width=\linewidth]{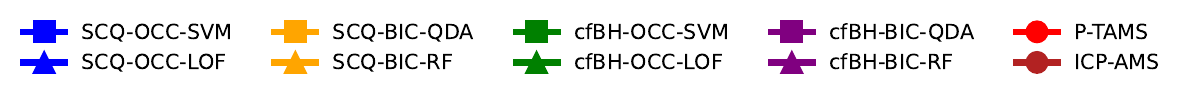} 
	\end{subfigure}
	
	\begin{subfigure}{\linewidth}
		\centering 
		\includegraphics[width=\linewidth]{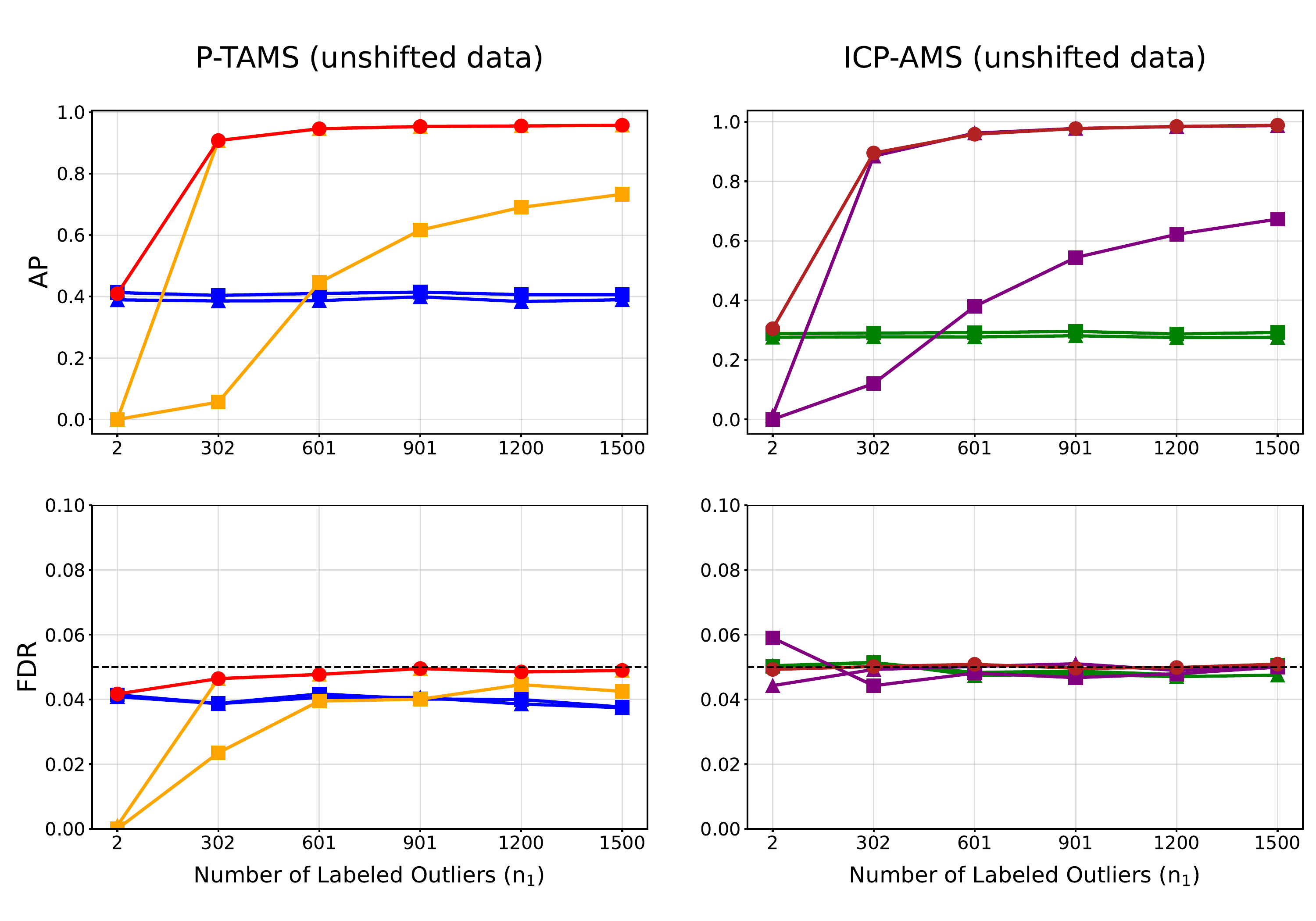} 
	\end{subfigure}
	
	\caption{Comparison of the effectiveness of P-TAMS and ICP-AMS in the presence of labeled outliers with no distribution shifts. Two one-class classifiers (OneClassSVM and LOF) and two binary classifiers (QDA and RF) are employed.}
	\label{fig:ams_noshift}
\end{figure}

We first utilize the same experimental setup and candidate classifiers in Figure~\ref{fig:ams_ob}, which exhibits the effectiveness of P-TAMS selecting among four SCQ variants (see Appendix~\ref{sec:detailed_methods} for detailed experimental setup), to explore the performance of ICP-AMS selecting among four cfBH variants, denoted as CP-OCC1, CP-OCC2, CP-BIC1, and CP-BIC2, under this setting. The result shown in Figure~\ref{fig:ams_ob_icp} reveals that, even when there exists no distribution shift between labeled and test outliers, the statistical power of ICP-AMS remains notably lower than that of the best-performing cfBH variant. This finding is consistent with the observations in Figure~1 of \citet{liang2024integrative} that ICP-AMS may underperform the best oracle. In contrast, as shown in Figure~\ref{fig:ams_ob}, the P-TAMS algorithm effectively overcomes the model selection dilemma by consistently identifying the optimal model across all parameter settings.

\begin{figure}[h] 
	\centering 
	
	\begin{subfigure}{0.7\linewidth}
		\includegraphics[width=\linewidth]{plot/shared_legend_vs.pdf} 
	\end{subfigure}
	\centering
	\begin{subfigure}{\linewidth}
		\includegraphics[width=\linewidth]{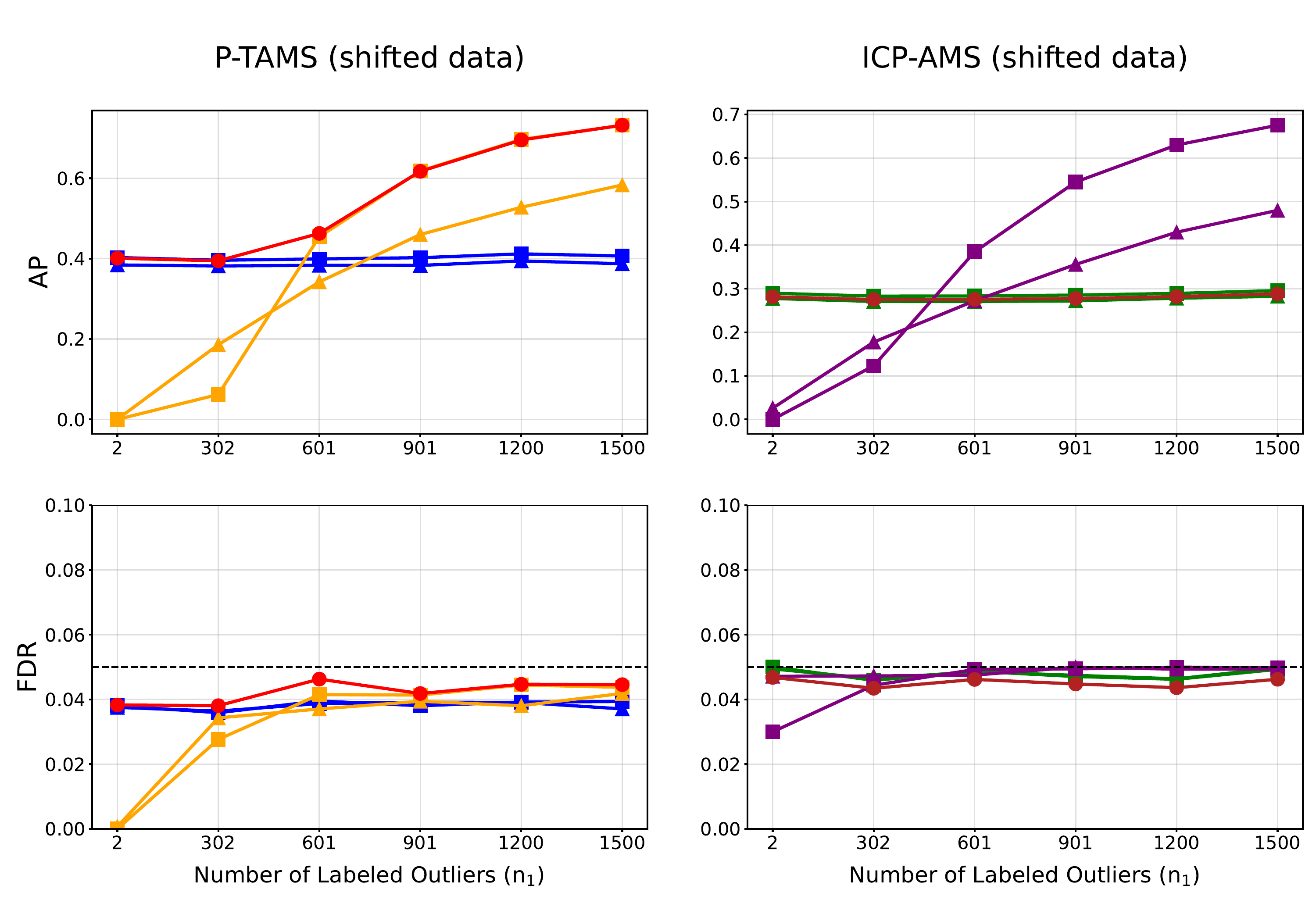} 
	\end{subfigure}
	\caption{Comparison of the effectiveness of P-TAMS and ICP-AMS in the presence of distribution-shifted labeled outliers. Two one-class classifiers (OneClassSVM and LOF) and two binary classifiers (QDA and RF) are employed.}
	\label{fig:ams_shift}
\end{figure}

Next, we revisit the experimental settings described in Section~\ref{sec:impact_dist_shift} and fix the distribution-shift parameters at $d=\mu=0.5$ and $d=2.5$, respectively. When we set $d=0.5$, which indicates no shifts, both ICP-AMS and P-TAMS successfully identify the best-performing model (see Figure~\ref{fig:ams_noshift}, right and left panels, respectively). However, when we increase the shift parameter to $d=2.5$ while keeping $\mu=0.5$, as shown in Figure \ref{fig:ams_shift}, the behavior diverges: ICP-AMS fails to locate the optimal model, whereas P-TAMS continues to perform reliably. This robustness stems from P-TAMS’s reliance on the intrinsic information of the test samples rather than on the labeled outliers, which can induce negative transfer or overfitting in ICP under severe distribution shifts.

In summary, compared with ICP-AMS, which may underperform the oracle-optimal due to distribution shift, the P-TAMS framework offers a more stable and adaptive approach to model selection across classifiers. It can automatically and efficiently identify the most suitable model among OCC- and BIC-based candidates, achieving performance the same as that of the oracle model even under distributional discrepancies.

\subsection{The adaptivity in achieving the nominal FDR level}\label{sec:FDR_achieve_alpha}
As shown by certain FDR lines in Figure \ref{fig:ams_bic} of the supplementary and Figure \ref{fig:SCQ_vs_density_methods} of the main text, and as first noticed by \cite{barber2015controlling}, the FDR of the BC algorithm can achieve the target FDR level as the signal strength grows. This motivates us to develop asymptotic FDR theories of the BC algorithm to exhibit its anti-conservativeness (see Theorem \ref{thm:asymptotic-FDR}).
In this section, we conduct experiments on one-dimensional simulated data to visualize the result stated in Theorem \ref{thm:asymptotic-FDR}, which shows that the FDR of the BC algorithm asymptotically achieves the target level $\alpha$ as the number of test samples $m$ goes to infinity. Similar to the experimental setup in Section \ref{sec:simu setup}, we generate $m$ test samples from the following hierarchical mixture model:
\begin{equation*}
	\Big(Y_j=1\mid S_j\Big)\overset{ind.}{\sim}\text{Ber}\left(\pi(S_j)\right), \, \Big(X_j\mid Y_j,S_j\Big)\overset{ind.}{\sim} (1- Y_j)\cdot\mathcal{N}(0,1)+ Y_j\cdot \mathcal{N}(\mu_m,1),
\end{equation*}
where we set $S_j=j$. The sparsity level $\pi(S_j)$ is defined by:
\begin{gather*}
	\pi(S_j) = \pi_m, \quad S_j \in \left[\lceil\frac{2m}{30}\rceil+1,\lceil\frac{2m}{30}\rceil+h_m\right]\cup\left[\lceil\frac{6m}{30}\rceil+1,\lceil\frac{6m}{30}\rceil+h_m\right],\\
	\pi(S_j) = \frac{2}{3}\pi_m, \quad S_j \in \left[\lceil\frac{10m}{30}\rceil+1,\lceil\frac{10m}{30}\rceil+h_m\right]\cup\left[\lceil\frac{14m}{30}\rceil+1,\lceil\frac{14m}{30}\rceil+h_m\right],
\end{gather*}
where $\lceil\cdot\rceil$ denotes the ceiling function, $h_m\coloneqq\lceil\frac{m}{30}\rceil$, and $\pi(S_j)=0.01$ for the rest. Similarly to \cite{TonyCaioptimal2017}, we set $\mu_m=\sqrt{2r(\log m)^{1.25}}$ and $\pi_m=m^{-\beta}$, with $r=1.25$ and $\beta=0.1$. Then, the signal magnitude $\mu_m$ grows as the sample size grows so that the separability condition in Assumption \ref{ass:sep} can be met, while the sparsity level $\pi_m$ declines as $m$ increases to avoid the situation becoming trivial. 

\begin{figure}[htbp]
	\centering
	\begin{minipage}{\textwidth}
		\includegraphics[width=\textwidth]{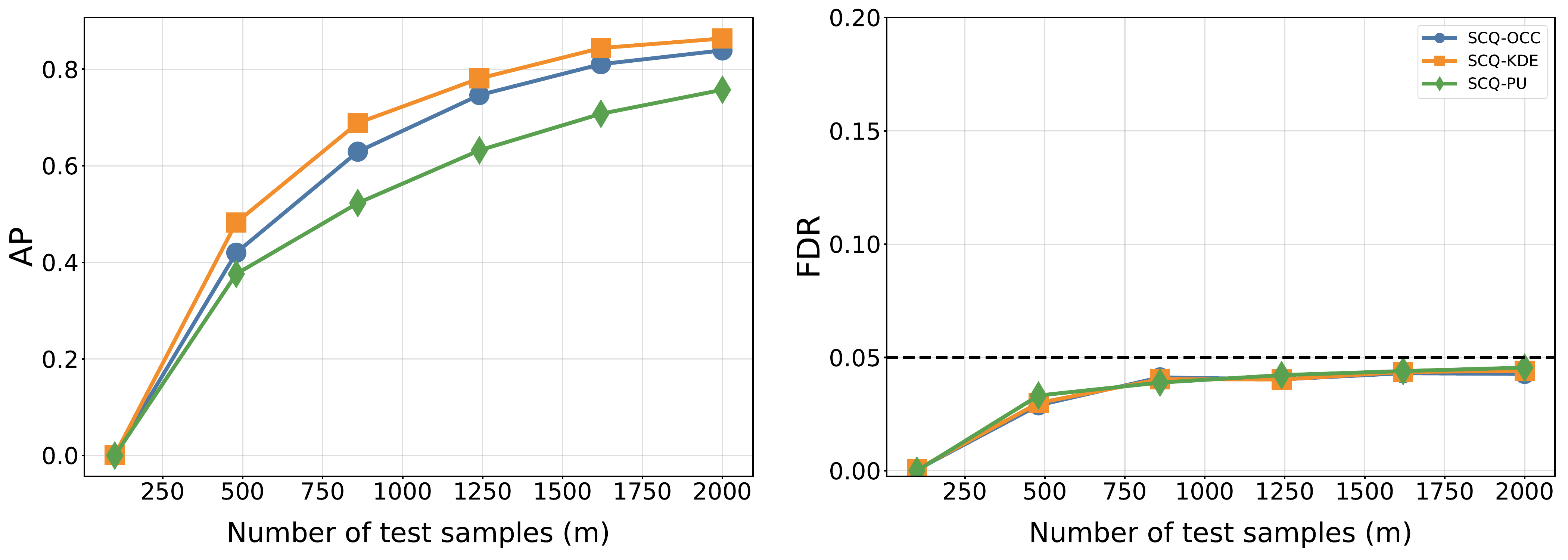}
	\end{minipage}
	\caption{Performance in terms of AP and FDR for three SCQ variants as the test sample size increases. The target FDR level is set to 0.05.}
	\label{fig:achieve_alpha}
\end{figure}

The simulation results show that the FDR levels of the three SCQ variants converge to the nominal level as \(m\) increases, providing empirical support for our asymptotic attainment theory (Theorem \ref{thm:asymptotic-FDR} in Section \ref{subsec:FDR-analysis}).

\end{document}